\documentclass[11pt]{article}

\usepackage[preprint,nonatbib]{neurips_2020}

%
\usepackage{lmodern}



\usepackage[utf8]{inputenc} 
\usepackage[T1]{fontenc}    
\usepackage{hyperref}       
\usepackage{url}            
\usepackage{booktabs}       
\usepackage{amsfonts}       
\usepackage{nicefrac}       
\usepackage{microtype}      
\usepackage{amsmath}
\usepackage{bm}
\usepackage{esvect}
\usepackage{amsthm}
\usepackage[noend,linesnumbered]{algorithm2e}
\usepackage{amssymb}
\usepackage{bbm}
\usepackage{enumitem}
\usepackage{graphicx}
\usepackage{caption}
\usepackage{subcaption}
\usepackage{multirow}
\usepackage{floatflt}
\usepackage{color}
\newcommand{\metas}[1]{\mathcal{G}_{#1}}
\newcommand{\QQ}{\texttt{{\sc div++}}}

\newcommand{\QQQQ}{\texttt{{\sc sub+div}}}
\makeatletter
\newtheorem*{rep@theorem}{\rep@title}
\newcommand{\newreptheorem}[2]{%
\newenvironment{rep#1}[1]{%
 \def\rep@title{#2 \ref{##1}}%
 \begin{rep@theorem}}%
 {\end{rep@theorem}}}
\makeatother

\makeatletter
\newtheorem*{rep@proposition}{\rep@title}
\newcommand{\newrepproposition}[2]{%
\newenvironment{rep#1}[1]{%
 \def\rep@title{#2 \ref{##1}}%
 \begin{rep@proposition}}%
 {\end{rep@proposition}}}
\makeatother

\makeatletter
\newtheorem*{rep@lemma}{\rep@title}
\newcommand{\newreplemma}[2]{%
\newenvironment{rep#1}[1]{%
 \def\rep@title{#2 \ref{##1}}%
 \begin{rep@lemma}}%
 {\end{rep@lemma}}}
\makeatother

\newtheorem{definition}{Definition}

\newtheorem{theorem}{Theorem}

\newtheorem{proposition}{Proposition}
\newtheorem{lemma}{Lemma}

\newtheorem{conjecture}{Conjecture}

\newcommand{\floor}[1]{\lfloor #1 \rfloor}

\newreptheorem{theorem}{Theorem}
\newrepproposition{proposition}{Proposition}
\newreplemma{lemma}{Lemma}

        {\hspace*{\fill}$\Box$\par}

\DeclareMathOperator*{\argmax}{arg\,max}

\newcommand{\R}{\mathbb{R}}
\newcommand{\E}{\mathbb{E}}
\newcommand{\M}{\mathcal{M}}

\newcommand{\smooth}{\sigma}

\let\emptyset\varnothing

\title{ 
A Parameterized Family of Meta-Submodular Functions
}

%

\author{
    Mehrdad Ghadiri \\
  Georgia Tech\\
  \texttt{ghadiri@gatech.edu} \\
   \And
    Richard Santiago \\ 
    ETH Zurich \\
    \texttt{rtorres@ethz.ch}
    \And
    Bruce Shepherd \\ 
    UBC \\
    \texttt{fbrucesh@cs.ubc.ca}
}
\date{}
\allowdisplaybreaks
\begin{document}
\maketitle
 
\begin{abstract}

Submodular function maximization has found a wealth of new applications in machine learning models during the past years. The related supermodular maximization models (submodular minimization) also offer an abundance of applications, but they appeared to be highly intractable even under simple cardinality constraints.
Hence, while there are well-developed tools for maximizing a submodular function subject to a matroid constraint, there is much less work on the corresponding supermodular maximization problems.

We give a broad parameterized family of monotone functions which includes submodular functions and a class of supermodular functions containing diversity functions. Functions in this parameterized family are called \emph{$\gamma$-meta-submodular}. We develop local search algorithms with approximation factors that depend only on the parameter $\gamma$. We show that the $\gamma$-meta-submodular families include well-known classes of functions such as meta-submodular functions ($\gamma=0$), metric diversity functions and proportionally submodular functions (both with $\gamma=1$), diversity functions based on negative-type distances or Jensen-Shannon divergence (both with $\gamma=2$), and $\sigma$-semi metric diversity functions ($\gamma = \sigma$).

\end{abstract}

\section{Introduction}
\label{sec:intro}

In the past decades, the catalogue of algorithms  available to combinatorial optimizers has been substantially extended 
to new settings which allow submodular objective functions.
These  developments in submodular maximization were occurring at the same time that researchers found a wealth of new applications in machine learning and data mining for these models \cite{kempe2003maximizing,krause2007near, boykov2001interactive,kohli2009p3,jegelka2011submodularity, lin2011class,streeter2009online,liu2013submodular,prasad2014submodular,dey2012contextual}. 

The related supermodular maximization models  (submodular minimization) also offer an abundance of applications, but they appeared to be highly intractable even under simple cardinality constraints \cite{svitkina2011submodular}. The applications include, but are not limited to, feature selection~\cite{ZadehGMZ17,ghadiri2019distributed}, neural architecture search~\cite{abs-1910-00370}, document aggregation~\cite{AbbassiMT13}, web search~\cite{AngelK11,YuLA09}, keyword search in databases~\cite{ZhaoZTC11}.

In some cases constrained supermodular maximization admits a constant factor approximation.
One such example arises in the realm of diversity maximization. Let $[n]=\{1,\ldots,n\}$ be our ground set and $A$ be a pairwise dissimilarity measure on the elements of $[n]$, where $A$ is a symmetric, zero-diagonal matrix with non-negative entries. Given an integer $r$, the goal of the diversity maximization problem is to find a set $S\subseteq [n]$ of size $r$ that maximizes $f(S)=\frac{1}{2}\sum_{i,j\in S}A(i,j)$. When $A$ is a metric distance (i.e., $A(i,j)\leq A(i,k)+A(k,j)$ for any $i,j,k\in [n]$), this problem admits a $2$-approximation~\cite{hassin1994notes,AbbassiMT13,BorodinLY12} which is tight~\cite{BhaskaraGMS16,BorodinLY12}. One might think that this is because of the nice pairwise structure of these functions. However,  when $A$ is not metric and its entries are from $\{0,1\}$, then this problem is equivalent to the densest $k$-subgraph problem
whose approximation is $O(n^{0.25+\epsilon})$~\cite{bhaskara2012polynomial}.  
In this case it cannot admit  a constant-factor approximation under the generally accepted complexity assumption of ETH \cite{manurangsi2017almost}.  In fact, the metric property is key to why diversity functions behave nicely in the former case.  This is generalized to the case where $A$ is a $\gamma$-semi-metric (i.e., $A(i,j)\leq \gamma(A(i,k)+A(k,j))$ for any $i,j,k\in [n]$). Namely, it is shown that maximizing a diversity function with a $\gamma$-semi-metric distance, subject to a cardinality constraint $|S|\leq r$, admits a $2\gamma$-approximation~\cite{ZadehG15}, which is tight \cite{ghadiri2019beyond}.

As  discussed, for a fixed   semi-metric parameter there is a constant-factor approximation for  functions with a pairwise structure.  Can this  parameter be generalized  to   general set functions?  We answer this question affirmatively in this paper. In order to define this generalization we introduce the following notation.

\begin{definition}
Let $f:2^{[n]}\rightarrow \mathbb{R}_{\geq  0}$ be a set function defined on the powerset of $[n]$. For a set $S\subseteq [n]$ and elements $i,j\in[n]$, we define the first-order difference (or marginal gain) of $i$ with respect to $S$ as
\[
B_i(S) := f(S+i) - f(S-i),
\]
where $S+i=S\cup\{i\}$ and $S-i=S\setminus\{i\}$. We also define the second-order difference of $i,j$ with respect to $S$ as
\begin{align*}
A_{ij}(S) & := B_j(S+i)-B_j(S-i) \\ & =f(S+i+j)-f(S+i-j)-f(S-i+j)+f(S-i-j).
\end{align*}
\end{definition}

Note that $A_{ij}(S)=A_{ji}(S)$. For the {\em diversity function} $f(S)=\frac{1}{2}\sum_{u,v\in S} A(u,v)$, we have $B_i(S)=\sum_{u\in S-i} A(i,u)$ and $A_{ij}(S)=A(i,j)$. The latter means that $A_{ij}(S)$ is constant for these diversity functions. Note that $B_i$ and $A_{ij}$ are defined for any set function and they do not need a pairwise structure. One can easily verify that $f$ is monotone if and only if $B_i(S)\geq 0$ for all $i$ and $S$. Moreover $f$ is submodular (supermodular) if and only if $A_{ij}(S)\leq 0$ ($A_{ij}(S)\geq 0$) for all $i,j$ and $S$. Now we can define our parameterized family of functions.

\begin{definition}
Let $\gamma\geq 0$. We say a set function $f$ is $\gamma$-meta-submodular ($\gamma$-MS) if, for any nonempty $S\subseteq [n]$ and $i,j\in [n]$, we have
\begin{align}
\label{eqn:meta}
A_{ij}(S)\leq \gamma \cdot \frac{B_i(S)+B_j(S)}{|S|}.
\end{align}
\end{definition}

For $\gamma=0$, our definition implies that $A_{ij}(S)\leq 0$ for any $i,j\in [n]$ and nonempty $S$. This is equivalent to the class of meta-submodular functions defined by Kleinberg et al~\cite{KleinbergPR98}. They defined this class of functions inspired by segmentation problems. Trivially, the class of $0$-MS functions contain all submodular functions.
For $\gamma$-semi-metric diversity functions, if $i,j\notin S$, the inequality in (\ref{eqn:meta}) is equivalent to
$
A(i,j)\leq \gamma (\sum_{k\in S} [A(i,k)+A(j,k)])/|S|.
$
This holds because for any $i,j,k \in [n]$, the $\gamma$-semi-metric property implies $A(i,j)\leq \gamma (A(i,k)+A(j,k))$. Therefore the above is just an average over such inequalities. Moreover the above inequality holds for $\gamma$-semi-metric diversity functions, regardless of whether $i,j$ are in $S$. Hence these functions are $\gamma$-MS --- see Proposition \ref{prop:semimetricmeta} in Appendix~\ref{app:metasubfamily}.

Negative-type distances and Jensen-Shannon divergence are among the most important distance functions. These distances are $2$-semi-metric (see \cite{ghadiri2019beyond}) and therefore, the diversity functions defined on them are $2$-MS. Another important class of functions are proportionally submodular functions which contains the functions that are the sum of a monotone submodular function and a metric diversity function~\cite{borodin2015proportionally}. Proportionally submodular functions are contained in the class of $1$-MS functions --- see Proposition~\ref{prop:borodinweakmeta} in Appendix~\ref{app:metasubfamily}.

As discussed, even for small $\gamma$, the class of $\gamma$-MS functions contain many important classes of functions used in machine learning and data mining applications. Moreover if $f,g$ are $\gamma$-MS and $\alpha>0$ is a real number, then $f+g$ and $\alpha f$ are also $\gamma$-MS. This allows combining $\gamma$-MS functions in different ways. We primarily focus on monotone functions and we denote by $\metas{\gamma}$ the family of non-negative, monotone set functions which are  $\gamma$-MS. Note that this implies that the $B_i$'s
are non-negative. Therefore one can see that $\metas{\gamma} \subseteq \metas{\gamma'}$ if $\gamma < \gamma'$.

In this work we consider the problem of maximizing a monotone $\gamma$-MS function subject to a matroid constraint. Before  discussing our results, we review some background material.


\subsection{Background, Notation, and Preliminary Results}

We need the following notation and definitions to explain our techniques and results. We use $[n]:=\{1,\ldots,n\}$ to refer to the ground set of a set function. For a set $R\subseteq [n]$, we denote by $\mathbbm{1}_R$ its characteristic vector. For $x=(x_1,\ldots,x_n)\in [0,1]^n$, $p_x(R)$ denotes the probability of picking set $R$ with respect to vector $x$. In other words,
$
p_x(R) = \prod_{v\in R}x_v \prod_{v\in [n]\setminus R} (1-x_v)
$.
The {\em  multilinear extension} of a set function $f:2^{[n]}\rightarrow \mathbb{R}$ is $F:[0,1]^n\rightarrow \mathbb{R}$, where
\[
F(x)=\sum\nolimits_{R\subseteq [n]} f(R)p_x(R) = \E_{R \sim x}[f(R)].
\]

\noindent
One can easily check that $f(R)=F(\mathbbm{1}_{R})$, $B_i(R)=\nabla_i F(\mathbbm{1}_{R})$, and $A_{ij}(R)=\nabla_{ij}^2 F(\mathbbm{1}_{R})$ --- see \cite{vondrak2008optimal}.
The following lemma describes the connection between the terms $A_{ij}$ and $B_i$ (see Appendix~\ref{app:preliminaries} for proof details).

\begin{lemma}[Discrete integral]
Let $f:2^{[n]} \to \R$, $i\in [n]$, and $R=\{v_1,\ldots, v_r \}\subseteq [n]$. Moreover, let $R_m=\{v_1,\ldots, v_m \}$ for $1\leq m\leq r$ and $R_0=\emptyset$. Then
$
B_i(R) = f(\{i\}) + \sum_{j=1}^{r} A_{i v_j} (R_{j-1}).
$
\label{lemma:discreteIntegral}
\end{lemma}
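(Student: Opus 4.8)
The plan is to recognize the right-hand sum as a telescoping series in the quantities $B_i(R_m)$. First I would rewrite each summand using the symmetry of the second-order difference: by definition $A_{v_j i}(S)=B_i(S+v_j)-B_i(S-v_j)$, and $A_{iv_j}(S)=A_{v_j i}(S)$, so
$A_{iv_j}(R_{j-1})=B_i(R_{j-1}+v_j)-B_i(R_{j-1}-v_j)$.
Since the elements $v_1,\dots,v_r$ of $R$ are distinct, $v_j\notin R_{j-1}$, hence $R_{j-1}-v_j=R_{j-1}$ and $R_{j-1}+v_j=R_j$, and the $j$-th summand equals $B_i(R_j)-B_i(R_{j-1})$.

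Next I would telescope: $\sum_{j=1}^{r}\bigl(B_i(R_j)-B_i(R_{j-1})\bigr)=B_i(R_r)-B_i(R_0)=B_i(R)-B_i(\emptyset)$, using $R_r=R$ and $R_0=\emptyset$. Finally, evaluating the base term, $B_i(\emptyset)=f(\emptyset+i)-f(\emptyset-i)=f(\{i\})-f(\emptyset)=f(\{i\})$ under the normalization $f(\emptyset)=0$ in force throughout; rearranging yields $B_i(R)=f(\{i\})+\sum_{j=1}^{r}A_{iv_j}(R_{j-1})$. Note the argument never needs $i\notin R$: each of $B_i(\cdot)$ and $A_{i\cdot}(\cdot)$ is defined purely through $(\cdot)+i$ and $(\cdot)-i$, so it is insensitive to whether $i$ already lies in its argument set.

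There is no genuinely hard step here — the proof is essentially a one-line telescoping argument once the summand is rewritten. The only points to watch are (i) the use of $f(\emptyset)=0$, without which the identity picks up an extra $-f(\emptyset)$ term, and (ii) the bookkeeping $R_{j-1}+v_j=R_j$, which relies on the $v_j$ being distinct. As a sanity check, for the diversity function $f(S)=\tfrac12\sum_{u,v\in S}A(u,v)$ one has $A_{iv_j}(R_{j-1})=A(i,v_j)$ and $f(\{i\})=\tfrac12 A(i,i)=0$, so the identity collapses to $B_i(R)=\sum_{j=1}^{r}A(i,v_j)$, matching the formula $B_i(S)=\sum_{u\in S-i}A(i,u)$ recorded earlier.
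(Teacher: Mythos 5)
Your proof is correct, and it takes a cleaner route than the paper's. The paper proves the identity by expanding every summand $A_{iv_j}(R_{j-1})$ into its four $f$-values and cancelling line by line, which forces a two-case analysis ($i\notin R$ versus $i=v_k\in R$, with some delicate bookkeeping around the index $k$ in the second case). You instead telescope one level up: using $A_{iv_j}(S)=A_{v_j i}(S)=B_i(S+v_j)-B_i(S-v_j)$ and $v_j\notin R_{j-1}$, each summand becomes $B_i(R_j)-B_i(R_{j-1})$, so the sum collapses to $B_i(R)-B_i(\emptyset)$ regardless of whether $i$ lies in $R$ — your remark that the case split is unnecessary is exactly right, since $B_i$ and $A_{i\,\cdot}$ only ever see $(\cdot)+i$ and $(\cdot)-i$. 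The one shared caveat is the normalization $f(\emptyset)=0$: the lemma statement does not list it, but the paper's own proof invokes it in the same place you do (to get $B_i(\emptyset)=f(\{i\})$), so your explicit flagging of it is appropriate rather than a gap. Net effect: same identity, but your telescoping in the marginals $B_i$ replaces the paper's page of explicit $f$-cancellations with a three-line argument, at the small cost of invoking the symmetry $A_{ij}=A_{ji}$ (which the paper records immediately after the definition).
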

\noindent
We use $x^{\top}$ to denote the transpose of vector $x$. For vectors $x, y$, we denote the entrywise maximum of them by $x\vee y$, i.e., $z=x\vee y$ is a vector such that $z_i=\max\{x_i,y_i\}$.

A pair $\mathcal{M}=([n],\mathcal{I})$, where $\mathcal{I}$ is a family of subsets of $[n]$, is a matroid if: 1) for any $S\subseteq T\subseteq [n]$, if $T\in\mathcal{I}$ then $S\in\mathcal{I}$ (hereditary property); and 2) for any $S,T\in \mathcal{I}$, if $|S|<|T|$, then there exists $i\in T\setminus S$ such that $S+i\in \mathcal{I}$ (exchange property)~\cite{schrijver2003combinatorial}. 
We call $\mathcal{I}$ the set of independent sets of the matroid $\mathcal{M}$. Therefore given a $\gamma$-MS function $f$ and a matroid $\mathcal{M}=([n],\mathcal{I})$, our problem of interest is to find a set $S\in\mathcal{I}$ that maximizes $f(S)$.

A maximal independent set of a matroid is called a \emph{base}. All the bases of a matroid have the same size. The \emph{rank} of a matroid $\mathcal{M}$, denoted by $r$, is the size of a base of $\mathcal{M}$. Any subset of $[n]$ not in $\mathcal{I}$ is called a dependent set of $\mathcal{M}$.
A minimal dependent set of a matroid is called a \emph{circuit}. Note that the size of circuits are not necessarily equal. We usually denote the size of the smallest circuit of $\mathcal{M}$ by $c$ ($=c(\mathcal{M})$).

Two important families of matroids are uniform matroids and graphic matroids. Given an integer $r$, the set of independent sets of a uniform matroid is $\mathcal{I}=\{S\subseteq [n]: |S|\leq r\}$. Therefore cardinality constraints are a special class of matroid constraints. For a uniform matroid, it is not hard to see that the rank is $r$ and the size of the smallest circuit is $c=r+1$. Given a graph $G=(V,E)$, the graphic matroid on $G$ is $\mathcal{M}=(E,\mathcal{I})$ where $\mathcal{I}$ is the set of all forests of $G$. If $G$ is connected then $r=|V|-1$ and $c$ is the size of the smallest cycle of $G$. Matroids contain many more interesting family of constraints --- see \cite{schrijver2003combinatorial}. We frequently use the following result in our proofs.

\begin{lemma}[\cite{schrijver2003combinatorial}]
\label{lem:exchangematroid}
    Let $\mathcal{M}=([n],\mathcal{I})$ be a matroid and $S,T$ be two bases of $\mathcal{M}$. Then there exists a bijective mapping $g:S\setminus T\rightarrow T\setminus S$ such that $S-i+g(i)\in\mathcal{I}$ for any $i\in S\setminus T$. 
\end{lemma}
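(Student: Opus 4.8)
The plan is to reduce the statement to Hall's marriage theorem applied to an auxiliary bipartite graph. Form the bipartite graph $H$ whose two sides are $S\setminus T$ and $T\setminus S$, with an edge joining $i\in S\setminus T$ to $j\in T\setminus S$ exactly when $S-i+j\in\mathcal{I}$. Because $S$ and $T$ are bases they have the same size, so $|S\setminus T|=|T\setminus S|$; hence a perfect matching of $H$ is precisely the desired bijection $g$. By Hall's theorem it is enough to show $|N_H(X)|\ge|X|$ for every $X\subseteq S\setminus T$, where $N_H(X)$ is the neighbourhood of $X$ in $H$.

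To establish Hall's condition, fix $X\subseteq S\setminus T$ and look at the independent set $S\setminus X$, of size $r-|X|$. Repeatedly invoking the exchange property against the base $T$ lets us enlarge $S\setminus X$ to a base $(S\setminus X)\cup J$; since every element added lies in $T$ and outside the current set, and since $X\cap T=\emptyset$ forces $T\setminus(S\setminus X)=T\setminus S$, we get $J\subseteq T\setminus S$ with $|J|=|X|$. The crux is then the inclusion $J\subseteq N_H(X)$, which yields $|N_H(X)|\ge|J|=|X|$ at once.

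For that inclusion, take $j\in J$. Then $(S\setminus X)+j$ is independent, being a subset of the base $(S\setminus X)\cup J$, and has size $r-|X|+1$; moreover the only elements of $S$ not in it are those of $X$. Applying the exchange property $|X|-1$ times against the base $S$ therefore extends $(S\setminus X)+j$ to a base using elements of $X$ alone, and such a base is necessarily of the form $S-i+j$ for a single $i\in X$. This exhibits the edge $\{i,j\}$ of $H$, so $j\in N_H(X)$, completing the verification of Hall's condition and hence producing the perfect matching $g$.

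I expect the only delicate point to be the bookkeeping that the elements supplied by the exchange property land in the intended sets --- elements of $T\setminus S$ in the first enlargement and elements of $X$ in the second --- which is exactly where the hypothesis $X\subseteq S\setminus T$ (equivalently $X\cap T=\emptyset$) is used. Beyond that, the proof is a direct combination of the matroid exchange axiom with Hall's theorem and needs no heavier machinery, though one could alternatively read it off from the perfect-matching property of the exchange graph $D_{\mathcal{M}}(S)$ relative to the base $T$.
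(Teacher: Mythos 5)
Your proof is correct. Note that the paper does not prove this lemma at all --- it is invoked as a known result with a citation to Schrijver --- so there is no internal proof to compare against; your argument is essentially the standard one behind that citation, namely that the bipartite exchange graph between $S\setminus T$ and $T\setminus S$ has a perfect matching, verified through Hall's condition. Both of your exchange-extension steps check out: since $X\cap T=\emptyset$, every element supplied when growing $S\setminus X$ against $T$ indeed lies in $T\setminus S$, giving $J\subseteq T\setminus S$ with $|J|=|X|$; and since $j\notin S$, every element supplied when growing $(S\setminus X)+j$ against $S$ lies in $X$, so the resulting base is $S-i+j$ for a single $i\in X$, which is exactly the edge needed to place $j$ in $N_H(X)$. (The textbook proof usually phrases this second step via the rank function or spans, e.g.\ $T\setminus N_H(X)\subseteq \mathrm{span}(S\setminus X)$ forces $|N_H(X)|\geq |X|$, but your repeated-exchange formulation is an equivalent and equally rigorous way to say the same thing, and it handles the degenerate cases $|X|\in\{0,1\}$ correctly.)
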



\subsection{Our Results}

Recall that $\mathcal{G}_\gamma$ denotes the family of non-negative, monotone set functions which are $\gamma$-meta submodular.
Our most general result states that for these functions, there is an approximation factor which depends only on $\gamma$. 
We remark that for
constant values of $\gamma$ we obtain a new tractable (parameterized) class of functions.

\begin{theorem}
\label{thm:LS-general}
Let $f \in \metas{\gamma}$.
Then a local search algorithm 
gives an
$O(\gamma^2 2^{4\gamma})$-approximation for maximizing $f$ subject to a matroid constraint.
\end{theorem}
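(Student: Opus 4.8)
The plan is to analyze the natural single-swap local search over the bases of $\mathcal{M}=([n],\mathcal{I})$. Since $f$ is monotone, an optimal solution can be greedily completed to a base without losing value, so we may assume the optimum $O$ is a base and that the algorithm maintains a base $S$ throughout: starting from any base, repeatedly replace $S$ by $S-i+j$ whenever $i\in S$, $j\notin S$, $S-i+j\in\mathcal{I}$, and $f(S-i+j)>(1+\epsilon/n)\,f(S)$; the standard potential argument bounds the number of iterations by a polynomial in $n$ and in the bit-size of the function values. Let $S$ be the output; below I treat $S$ as an exact local optimum, the $(1+\epsilon/n)$ slack costing only a $(1+o(1))$ factor in the end.

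Fix an optimal base $O$ and apply Lemma~\ref{lem:exchangematroid} to obtain a bijection $g\colon S\setminus O\to O\setminus S$ with $S-i+g(i)\in\mathcal{I}$; extend $g$ to $S$ by the identity on $S\cap O$. Rewriting a swap, $f(S-i+g(i))-f(S)=B_{g(i)}(S-i)-B_i(S)$, so local optimality gives $B_{g(i)}(S-i)\le B_i(S)$ for every $i\in S\setminus O$, and summing (using $B_i\ge 0$ by monotonicity)
\[
\sum_{o\in O\setminus S} B_o\bigl(S-g^{-1}(o)\bigr)\ \le\ \sum_{i\in S} B_i(S).
\]
It then suffices to prove two inequalities with constants depending only on $\gamma$: an \emph{upper} bound $\sum_{i\in S}B_i(S)\le c_1(\gamma)\,f(S)$, and a \emph{lower} bound $\sum_{o\in O\setminus S}B_o(S-g^{-1}(o))\ge c_2(\gamma)^{-1}\bigl(f(S\cup O)-f(S)\bigr)\ge c_2(\gamma)^{-1}\bigl(f(O)-f(S)\bigr)$, the last step by monotonicity. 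Together these give $f(O)\le\bigl(1+c_1(\gamma)c_2(\gamma)\bigr)f(S)$, and the theorem reduces to showing $c_1(\gamma),c_2(\gamma)=O\!\bigl(\gamma\,2^{2\gamma}\bigr)$.

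Both inequalities are instances of one principle: $\gamma$-meta-submodularity lets one compare a marginal $B_o(\cdot)$ across two nested sets at the price of a multiplicative factor that is a power of the ratio of the set sizes, plus an additive term built from marginals of the intervening elements. Concretely, for $\emptyset\neq S\subseteq T$ and $o\notin T$, adding the elements of $T\setminus S$ one at a time and using $B_o(R+a)-B_o(R)=A_{oa}(R)$ together with (\ref{eqn:meta}) yields the recursion $B_o(R+a)\le(1+\tfrac{\gamma}{|R|})\,B_o(R)+\tfrac{\gamma}{|R|}\,B_a(R)$; telescoping it (a discrete Gr\"onwall estimate) bounds $B_o(T)$ by $(|T|/|S|)^{\gamma}\,B_o(S)$ plus a sum of error terms $\tfrac{\gamma}{|R|}B_a(R)$ over the chain (and there is an analogous bound with the roles of $S,T$ reversed). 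For the upper bound one combines this with the discrete integral (Lemma~\ref{lemma:discreteIntegral}), expanding each $B_i(S)$, $i\in S$, into second-order differences and regrouping them via (\ref{eqn:meta}) against the marginals that compose $f(S)$ along a fixed ordering of $S$. For the lower bound one writes $f(S\cup O)-f(S)=\sum_{o\in O\setminus S}\mu_o$ as a sum of marginals $\mu_o=B_o(R)$ at sets with $S\subseteq R\subseteq S\cup O$, so $|R|\le 2|S|$, and transports each $\mu_o$ back to $B_o(S-g^{-1}(o))$, the size ratios being at most $2\cdot\tfrac{|S|}{|S|-1}$ so the multiplicative loss per transport is $O(2^\gamma)$. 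The two powers of $2^\gamma$ (one per side), together with the error-term bookkeeping, produce the $2^{4\gamma}$, while the two nested uses of the discrete integral account for the $\gamma^2$.

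The main obstacle is controlling the additive error terms. Local optimality speaks only about marginals at subsets of the local-optimum base $S$, whereas the discrete integral for $f(S\cup O)$ (or $f(O)$) evaluates marginals at sets that interleave the elements of $O\setminus S$, so the two families of sets are incomparable; to use the transport estimate one must route every comparison through a chain of \emph{nested} sets whose sizes never differ by more than a constant factor, and — crucially — the cross-terms $B_a(R)$ that (\ref{eqn:meta}) injects at each step are themselves marginals of the ``incoming'' elements $a\in O\setminus S$, so a careless summation over all chains accumulates an extra factor of the rank $r$. Obtaining a bound that depends on $\gamma$ alone requires choosing the orderings and chains so that these cross-terms are charged against the $\sum_{i\in S}B_i(S)$ side (equivalently, against $f(S)$ itself) without double counting; making this accounting tight is the technical heart of the argument, and it is unclear whether the resulting $2^{\Theta(\gamma)}$ dependence can be avoided.
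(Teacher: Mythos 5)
Your skeleton (exchange bijection, swap inequalities, an upper bound $\sum_{i\in S}B_i(S)\le O(\gamma)f(S)$-type estimate, and a ``transport'' of marginals from sets near $S\cup O$ down to $S$ at a $2^{O(\gamma)}$ cost) matches the structure of the paper's argument, but the two inequalities that carry all the content are not proven, and the difficulty you flag at the end is not a side issue --- it is exactly the step where a purely discrete, element-by-element analysis breaks down. If you transport a single marginal via the recursion $B_o(R+a)\le(1+\tfrac{\gamma}{|R|})B_o(R)+\tfrac{\gamma}{|R|}B_a(R)$ along a chain from $S$ to a set of size up to $2r$, the injected cross-terms $B_a(R)$ are marginals of other elements of $O\setminus S$ at intermediate sets, i.e.\ quantities of the very same type you are trying to bound. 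Writing $\Sigma=\sum_k B_{a_{k+1}}(R_k)=f(S\cup O)-f(S)$ and pushing every cross-term back into $\Sigma$, the natural bookkeeping gives an inequality of the shape $\Sigma\le 2^{O(\gamma)}\sum_{o}B_o(S)+\Omega(\gamma)\,\Sigma$ (or, if one instead compounds the cross-terms multiplicatively, a factor $\prod_k\bigl(1+\tfrac{\gamma\,|O\setminus R_k|}{|R_k|}\bigr)=2^{\Theta(\gamma r)}$), and neither closes to a bound depending on $\gamma$ alone. So as written, the proposal reduces the theorem to two lemmas (``$c_1(\gamma),c_2(\gamma)=O(\gamma 2^{2\gamma})$'') that you do not establish, and you yourself say it is unclear how to establish the second one; that is a genuine gap, not a routine omission.

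The paper's way around this is precisely the ingredient your plan avoids: it passes to the multilinear extension and treats the whole direction $u=\mathbbm{1}_{T\setminus S}$ at once. Theorem~\ref{thm:metasubsmooth} upgrades the pointwise inequality (\ref{eqn:meta}) to one-sided $2\gamma$-smoothness on the fractional subdomain $\{x\ge\mathbbm{1}_S,\ \|x\|_1\le 2|S|\}$; then Lemma~\ref{lemma:epsilonchangegradient} (a Gr\"onwall-type differential inequality for $\phi(\epsilon)=u^{\top}\nabla F(\mathbbm{1}_S+\epsilon u)$) yields Lemma~\ref{lemma:grad-ratio-ews}, $u^{\top}\nabla F(\mathbbm{1}_S+\epsilon u)\le 2^{4\gamma}\,u^{\top}\nabla F(\mathbbm{1}_S)$. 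The point is that in this aggregated form your problematic cross-terms are part of $\phi$ itself, so the differential inequality closes with exponent $O(\gamma)$ independent of $r$; combined with Taylor's theorem this is exactly your ``lower bound'' with $c_2=2^{4\gamma}$. Similarly, your ``upper bound'' is the paper's Lemma~\ref{lemma:sumGradEws}, $\sum_{i\in R}B_i(R-i)\le(5\gamma+2)f(R)$, proved by halving $R$ and integrating $t\,g'(t)\le\alpha\gamma\,g(t)$ by parts along the segment from $\mathbbm{1}_S$ to $\mathbbm{1}_R$ --- again a continuous argument; a purely discrete expansion via Lemma~\ref{lemma:discreteIntegral} runs into the same self-referential intermediate marginals. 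To turn your proposal into a proof you would either have to supply a genuinely new combinatorial charging scheme that closes the cross-term recursion, or do what the paper does and route the argument through the fractional smoothness of $F$.
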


One can improve the above approximation by requiring
additional assumptions on the function $f$. The following result shows that
if the corresponding $B_i$'s are submodular, then the exponential factor from Theorem~\ref{thm:LS-general} improves to a quadratic factor in terms of $\gamma$. We remark that submodularity of the $B_i$’s is just the notion of second-order submodularity introduced in~\cite{korula2018online}, and is also equivalent to the non-positivity of the third-order partial derivatives of the multilinear extension. Note that it is also equivalent to having $A_{ij}(S+k)-A_{ij}(S-k)\leq 0$ for all $i,j,k,S$.

\begin{theorem}
\label{thm:LS-2nd-order}
Let $f \in \metas{\gamma}$ such that $f$ is also second-order submodular
(that is, $B_i$'s are submodular). 
Let $\M$ be a matroid of rank $r$ that has the smallest circuit size of $c$. Then the modified local search algorithm (Algorithm~\ref{alg:localsearch}) gives an $O(\gamma+\frac{\gamma^2}{r})$-approximation for maximizing $f$ subject to $\M$. If in addition $f$ is supermodular, then this can be further improved to an $O(\min\{\gamma + \frac{\gamma^2}{r}, \frac{\gamma r}{c-1} \}) \leq O(\gamma^{3/2})$-approximation.
\end{theorem}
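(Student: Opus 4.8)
The plan is to analyze a local optimum $S$ of the algorithm; by monotonicity we may take $S$ to be a base and compare it with an optimal base $O$, fixing (via Lemma~\ref{lem:exchangematroid}) a bijection $g:S\setminus O\to O\setminus S$ with $S-s+g(s)\in\I$ for all $s\in S\setminus O$. The whole argument funnels through the quantity $\Phi:=\sum_{o\in O\setminus S}B_o(S)$: on one side I bound $f(O)$ from above by $f(S)$ plus a small multiple of $\Phi$, and on the other I bound $\Phi$ itself by a small multiple of $f(S)$, in two ways that produce the two terms of the claimed minimum.

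\emph{From $f(O)$ to $\Phi$.} Writing $O\setminus S=\{o_1,\dots,o_k\}$ with $k\le r$ and telescoping, $f(O)\le f(O\cup S)=f(S)+\sum_{m}B_{o_m}\!\big(S\cup\{o_1,\dots,o_{m-1}\}\big)$, and each summand equals $B_{o_m}(S)+\sum_{j<m}A_{o_mo_j}(S\cup\{o_1,\dots,o_{j-1}\})$. Second-order submodularity (non-positive third differences) makes $A_{ij}$ antitone in its set argument, so $A_{o_mo_j}(S\cup\cdots)\le A_{o_mo_j}(S)\le\frac{\gamma}{r}\big(B_{o_m}(S)+B_{o_j}(S)\big)$ by the $\gamma$-MS inequality. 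Using $\sum_m\sum_{j<m}\big(B_{o_m}(S)+B_{o_j}(S)\big)=(k-1)\Phi$ this gives $f(O)\le f(S)+\big(1+\tfrac{\gamma(k-1)}{r}\big)\Phi\le f(S)+(1+\gamma)\Phi$. I also need the elementary estimate $\sum_{i\in S}B_i(S)\le 2f(S)$, valid for any non-negative second-order submodular $f$: ordering $S=\{v_1,\dots,v_r\}$ with $S_m=\{v_1,\dots,v_m\}$, the discrete-integral identity (Lemma~\ref{lemma:discreteIntegral}) yields $\sum_iB_i(S)=\big(f(S)-f(\emptyset)\big)+\sum_{a<b}A_{v_av_b}(S_{b-1})$ and $f(S)-f(\emptyset)=\sum_af(\{v_a\})+\sum_{a<b}A_{v_av_b}(S_{a-1})$, and antitonicity ($A_{v_av_b}(S_{b-1})\le A_{v_av_b}(S_{a-1})$) gives $\sum_iB_i(S)\le 2f(S)-2f(\emptyset)-\sum_af(\{v_a\})\le 2f(S)$.

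\emph{Bounding $\Phi$.} (i) For $s\in S\setminus O$ and $o=g(s)$, local optimality gives $B_o(S-s)\le B_s(S)$; since $B_o(S)-B_o(S-s)=A_{os}(S)\le\frac{\gamma}{r}(B_o(S)+B_s(S))$, for $r>\gamma$ we get $B_o(S)\le\frac{r+\gamma}{r-\gamma}B_s(S)$, so (discarding the remaining non-negative $B_i(S)$ by monotonicity, and using the estimate above) $\Phi\le\frac{r+\gamma}{r-\gamma}\sum_{s\in S\setminus O}B_s(S)\le\frac{2(r+\gamma)}{r-\gamma}f(S)$. Hence $f(O)\le f(S)\big(1+2(1+\gamma)\tfrac{r+\gamma}{r-\gamma}\big)=O(\gamma)=O(\gamma+\gamma^2/r)$ once $r\ge2\gamma$; the complementary range (where $\gamma^2/r$ dominates) needs a separate, coarser bound. (ii) If in addition $f$ is supermodular (all $A_{ij}\ge0$), take $o\in O\setminus S$ and the unique circuit $C\subseteq S+o$; then $|C|\ge c$ and each $s\in C\setminus\{o\}$ gives the feasible swap $S-s+o$. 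Summing $B_o(S-s)\le B_s(S)$ over these $\ge c-1$ elements and using $\sum_{s\in C\setminus\{o\}}A_{os}(S)\le\sum_{v\in S}A_{ov}(S)\le B_o(S)$ — the last inequality from the discrete integral plus antitonicity, with supermodularity used to drop the omitted non-negative terms — yields $B_o(S)\le\frac{1}{c-2}\sum_{s\in C\setminus\{o\}}B_s(S)$. As each $s\in S$ lies in at most $k\le r$ of these circuits, $\Phi\le\frac{r}{c-2}\sum_{s\in S}B_s(S)\le\frac{2r}{c-2}f(S)$, whence $f(O)\le f(S)\big(1+\tfrac{2(1+\gamma)r}{c-2}\big)=O\big(\tfrac{\gamma r}{c-1}\big)f(S)$.

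Putting the two bounds together gives $f(O)/f(S)=O\big(\min\{\gamma+\gamma^2/r,\ \gamma r/(c-1)\}\big)$; since $\min\{a,b\}\le\sqrt{ab}$ and here $ab=\gamma^2(r+\gamma)/(c-1)\le 2\gamma^3$ when $r\le\gamma$ (using $c\ge2$), while the first term is already $O(\gamma)$ when $r\ge\gamma$, this is $O(\gamma^{3/2})$ in all cases. The steps I expect to be delicate are: obtaining the \emph{sharp} $\gamma^2/r$ dependence in the window $\gamma\lesssim r\lesssim 2\gamma$, where the crude $\frac{r+\gamma}{r-\gamma}$ factor degrades and one must account more carefully (this is presumably exactly what the modifications in Algorithm~\ref{alg:localsearch} handle); and verifying that stopping at an approximate local optimum — required for a polynomial-time algorithm — costs only a further $(1+\epsilon)$ factor throughout the estimates above.
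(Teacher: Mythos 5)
Your overall architecture is the same as the paper's: reduce to the quantity $\Phi=\sum_{o\in T\setminus S}B_o(S)$, show $f(T)\le f(S)+(1+\gamma)\Phi$ using second-order submodularity plus $\gamma$-meta-submodularity (the paper does this via the multilinear extension and Taylor's theorem with non-positive third derivatives; your discrete telescoping is an equivalent route), and control $\sum_{i\in S}B_i(S)\le 2f(S)$ exactly as in Lemma~\ref{lemma:2ndorder-structure}. The genuine gap is in how you bound $\Phi$ for the first claim. You apply the $\gamma$-MS inequality at the full set $S$, so $B_o(S)$ appears on both sides and you must divide by $1-\gamma/r$: the resulting bound $B_o(S)\le\frac{r+\gamma}{r-\gamma}B_s(S)$ is vacuous for $r\le\gamma$ and degrades by up to a factor $\gamma$ when $\gamma<r<2\gamma$, so the claimed $O(\gamma+\frac{\gamma^2}{r})$ approximation is simply not established in the regime $r=O(\gamma)$ (and in that regime, when $c$ is small, your second bound $\frac{\gamma r}{c-1}$ can be $\Theta(\gamma^2)$, so even the $O(\gamma^{3/2})$ conclusion is left uncovered there). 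Your guess that the matching step of Algorithm~\ref{alg:localsearch} is what repairs this window is incorrect: the paper's proof of the $O(\gamma+\frac{\gamma^2}{r})$ bound never uses step 7. The actual fix is a local change: apply $\gamma$-MS at the $(r-1)$-element set $S-s$ rather than at $S$. Since $A_{s\,g(s)}(S)=A_{s\,g(s)}(S-s)$, one gets $B_{g(s)}(S)=B_{g(s)}(S-s)+A_{s\,g(s)}(S-s)\le B_{g(s)}(S-s)+\frac{\gamma}{r-1}\bigl(B_{g(s)}(S-s)+B_s(S-s)\bigr)$, and approximate local optimality bounds $B_{g(s)}(S-s)\le B_s(S-s)+\frac{\epsilon}{n^2}f(S)$, so $B_{g(s)}(S)\le\bigl(1+\frac{2\gamma}{r-1}\bigr)B_s(S)+o(1)f(S)$ with no restriction on $r$ versus $\gamma$; summing and using Lemma~\ref{lemma:2ndorder-structure} gives $\Phi\le\bigl(\frac{4\gamma}{r-1}+2+o(1)\bigr)f(S)$ and hence the stated bound in all regimes.

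On the supermodular part, your route is genuinely different from the paper's and, as far as I can check, sound (assuming $c>2$, as the paper's own proof also requires, and after carrying the $(1+\frac{\epsilon}{n^2})$ slack through). The paper bounds $\sum_{s\in S\setminus T}A_{s\,g(s)}(S)$ by the weight of the size-$\lfloor\frac{c-1}{2}\rfloor$ matching from step 7 and then lower-bounds $f(S')$ by that weight, so its guarantee is against $\max\{f(S),f(S')\}$ and needs the modified algorithm. Your argument via fundamental circuits — for each $o\in T\setminus S$ sum the swap inequalities over the $\ge c-1$ elements of the circuit in $S+o$, use supermodularity to pad and Lemma~\ref{lemma:discreteIntegral} with antitonicity to get $\sum_{v\in S}A_{ov}(S)\le B_o(S)$ — bounds $\Phi\le\frac{2r}{c-2}f(S)$ against $f(S)$ alone, avoiding step 7 entirely; that is a nice simplification. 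But because of the first-part gap above, the proposal as written does not yet prove the theorem.
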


As we discussed $\gamma$-semi-metric diversity functions are $\gamma$-MS. One can easily check that such diversity function are also supermodular and second-order submodular. The reason is that for any $i,j,k,S$, we have $A_{ij}(S+k)-A_{ij}(S-k)=A(i,j)-A(i,j)=0$ and $A_{ij}(S)=A(i,j)\geq 0$. Therefore,  Theorem~\ref{thm:LS-2nd-order} guarantees an $O(\gamma^{3/2})$-approximation for maximizing a $\gamma$-semi-metric diversity function subject to a matroid constraint. This matches the current best known approximation for this problem given in \cite{ghadiri2019beyond}. The latter uses a continuous relaxation approach, which involves solving a continuous optimization problem and rounding the fractional solution to an integral one.
We remark that while the $O(\gamma^{3/2})$-approximation given in \cite{ghadiri2019beyond} only applies to $\gamma$-semi-metrics, our result holds for a larger class of functions. 
That is, for the class of supermodular, second-order submodular, $\gamma$-MS functions, which does not necessarily have the nice pairwise structure of $\gamma$-semi-metrics.
Morevover our algorithm is a simple combinatorial algorithm.

We note that for some matroid classes, the approximation factors in Theorem~\ref{thm:LS-2nd-order} are better than $O(\gamma^{3/2})$. For instance, uniform matroids (and more generally paving matroids) satisfy $c \geq r$. Hence the term $\frac{\gamma r}{c-1}$ gives a linear approximation of $O(\gamma)$.

\subsection{Techniques}

The class of $\gamma$-meta-submodular functions are closely related to the newly introduced concept of one-sided smoothness \cite{ghadiri2019beyond}. 
A continuously twice differentiable function $F:[0,1]^n\rightarrow\mathbb{R}$ is called \emph{one-sided $\smooth$-smooth} at $x \neq \vec{0}$ if for any $u\in[0,1]^n$,
\begin{equation}
\label{eqn:OSS}
\frac{1}{2} u^{\top}\nabla^2 F(x) u \leq \smooth\cdot (\frac{||u||_1}{||x||_1}) u^{\top} \nabla F(x).
\end{equation}

A function $F$ is {\em one-sided $\smooth$-smooth} if it is $\smooth$-smooth at any non-zero point of its domain.
It is shown in \cite{ghadiri2019beyond} that
the smoothness parameter governs the approximability of the associated continuous maximization problem $\max_{x\in P} F(x)$ where $P$ is a downwards closed polytope and $F$ is a monotone one-sided smooth function. 
Our first observation is that the one-sided smoothness of the multilinear extension of a set function $f$ implies the meta-submodularity of $f$ --- see Appendix \ref{app:smoothness} for proof details.

\begin{proposition}
\label{prop:smoothisinmeta}
Let $f$ be a set function and $F$ be its multilinear extension. If $F$ is one-sided $(\gamma/2)$-smooth, then $f$ is $\gamma$-MS.
\end{proposition}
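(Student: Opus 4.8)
The plan is to read off the defining inequality \eqref{eqn:meta} directly from the one-sided smoothness inequality \eqref{eqn:OSS}, evaluated at the boolean point $x=\mathbbm{1}_S$, by substituting a cleverly chosen direction $u$. The ingredients I will use are the dictionary recalled just before the proposition, namely $B_i(R)=\nabla_i F(\mathbbm{1}_R)$ and $A_{ij}(R)=\nabla^2_{ij}F(\mathbbm{1}_R)$, together with one elementary structural fact about multilinear extensions: since $F$ is affine in each coordinate, every pure second derivative vanishes identically, $\nabla^2_{ii}F\equiv 0$.

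First I would handle the easy case $i=j$, where $A_{ii}(S)=0$ by definition, so \eqref{eqn:meta} reduces to the claim $B_i(S)\ge 0$, i.e.\ that $f$ is monotone. This is already forced by one-sided smoothness: taking $u=e_i$ in \eqref{eqn:OSS} and using $\nabla^2_{ii}F\equiv 0$ gives $0\le \sigma\,\frac{\|e_i\|_1}{\|x\|_1}\,\nabla_i F(x)$, hence $\nabla_i F(x)\ge 0$ for every $x\neq\vec{0}$ (and, by continuity of $\nabla_i F$, for $x=\vec{0}$ as well); evaluating at $x=\mathbbm{1}_S$ yields $B_i(S)=\nabla_i F(\mathbbm{1}_S)\ge 0$.

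For the main case $i\neq j$ and a fixed nonempty $S$, I would apply \eqref{eqn:OSS} at $x=\mathbbm{1}_S$, which is nonzero with $\|x\|_1=|S|$, and at the direction $u=e_i+e_j\in[0,1]^n$. Expanding the quadratic form, $u^{\top}\nabla^2 F(x)\,u=\nabla^2_{ii}F+2\nabla^2_{ij}F+\nabla^2_{jj}F=2A_{ij}(S)$, where the diagonal terms drop out by multilinearity, and $u^{\top}\nabla F(x)=\nabla_i F(x)+\nabla_j F(x)=B_i(S)+B_j(S)$. Substituting $\|u\|_1=2$ and $\sigma=\gamma/2$ into \eqref{eqn:OSS} then gives $A_{ij}(S)\le \gamma\cdot\frac{B_i(S)+B_j(S)}{|S|}$, which is exactly \eqref{eqn:meta}.

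There is no genuinely hard step; the only points requiring care are (i) that the test direction must lie in $[0,1]^n$ for \eqref{eqn:OSS} to apply — $e_i+e_j$ does, and in any case both sides of \eqref{eqn:OSS} are homogeneous of degree two in $u$, so any positive scaling would also work — and (ii) the bookkeeping that multilinearity annihilates the diagonal Hessian entries, so that $u^{\top}\nabla^2 F(x)u$ isolates precisely $2A_{ij}(S)$. I would also remark in passing that the argument only ever invokes \eqref{eqn:OSS} at vertices of the hypercube, consistent with the later characterization of $\gamma$-meta-submodularity as the restriction of one-sided smoothness to the boolean cube.
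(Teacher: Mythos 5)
Your proof is correct and follows essentially the same route as the paper: apply the one-sided $(\gamma/2)$-smoothness inequality at $x=\mathbbm{1}_S$ with the direction $u=\mathbbm{1}_{\{i,j\}}$, use multilinearity to identify $u^{\top}\nabla^2 F(x)u=2A_{ij}(S)$ and $u^{\top}\nabla F(x)=B_i(S)+B_j(S)$, and read off inequality (\ref{eqn:meta}). Your explicit treatment of the $i=j$ case (via $A_{ii}\equiv 0$ and testing with $u=e_i$ to get $B_i(S)\geq 0$ when $\gamma>0$) is a minor extra care the paper glosses over, but it does not change the argument.
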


In fact the $\gamma$-MS definition can be derived from one-sided $(\gamma/2)$-smoothness if we only consider (\ref{eqn:OSS}) for some specific $x$ and $u$. Suppose (\ref{eqn:OSS}) holds for $x=\mathbbm{1}_R$ and $u=\mathbbm{1}_{\{i,j\}}$. Then

 \[
A_{ij}(R) = \frac{1}{2} (2 u_i u_j \nabla^2 F_{ij}(x)) \leq \frac{\gamma }{2}\cdot \frac{u_i+u_j}{||x||_1} (u_i \nabla_i F(x) + u_j \nabla_j F(x)) = \gamma\cdot \frac{B_i(R)+B_j(R)}{|R|}.
 \]

\noindent
Conversely, if $f$ satisfies a  probabilistic version of (\ref{eqn:meta}), then $F$ is one-sided smooth (see Appendix~\ref{app:smoothness} for proof details). 

\begin{lemma}
\label{lemma:probabilisticVersion}
Let $f$ be a non-negative, monotone set function and $F$ be its multilinear extension. Let $x\in [0,1]^n$ and $\gamma\geq 0$. If for any $i,j\in [n]$ we have the following: 
\begin{equation}
\label{eq:expect-ineq}
\E_{R \sim x}[|R|]\cdot\E_{R \sim x}[A_{ij}(R)]\leq \gamma \cdot(\E_{R \sim x}[B_i(R)]+\E_{R \sim x}[B_j(R)]),
\end{equation}
where  $R\sim x$  denotes a random set that contains element $i$ independently with probability $x_i$,
then $F$ is one-sided $\gamma$-smooth at $x$.
\end{lemma}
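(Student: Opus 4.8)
The plan is to unwind both sides of the one-sided smoothness inequality \eqref{eqn:OSS} at the point $x$ into expectations over $R\sim x$, and then reduce the whole statement to a term-by-term application of the hypothesis \eqref{eq:expect-ineq}. First I would record the standard identities for the multilinear extension, extending the integral-point identities noted after Lemma~\ref{lemma:discreteIntegral} to fractional $x$: since $F$ is multilinear (hence affine in each $x_i$ separately) one has $\nabla_i F(x)=\E_{R\sim x}[B_i(R)]$, $\nabla^2_{ij}F(x)=\E_{R\sim x}[A_{ij}(R)]$ for $i\neq j$, and $\nabla^2_{ii}F(x)=0$ (no $x_i^2$ term), together with $\|x\|_1=\sum_i x_i=\E_{R\sim x}[|R|]$. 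Plugging these in, the left-hand side of \eqref{eqn:OSS} becomes $\frac12 u^\top\nabla^2F(x)u=\sum_{i<j}u_iu_j\,\E_{R\sim x}[A_{ij}(R)]$ (the diagonal drops out and the matrix is symmetric), and the right-hand side becomes $\gamma\,\tfrac{\|u\|_1}{\E_{R\sim x}[|R|]}\sum_i u_i\,\E_{R\sim x}[B_i(R)]$.

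Next I would multiply the hypothesis \eqref{eq:expect-ineq} by $u_iu_j\ge 0$ and sum over all pairs $i<j$, which gives
\[
\E_{R\sim x}[|R|]\cdot\sum_{i<j}u_iu_j\,\E_{R\sim x}[A_{ij}(R)]\ \le\ \gamma\sum_{i<j}u_iu_j\bigl(\E_{R\sim x}[B_i(R)]+\E_{R\sim x}[B_j(R)]\bigr).
\]
The remaining step is to bound the right-hand side. Regrouping by index, $\sum_{i<j}u_iu_j\bigl(\E[B_i(R)]+\E[B_j(R)]\bigr)=\sum_i u_i\,\E[B_i(R)]\sum_{j\neq i}u_j\le \|u\|_1\sum_i u_i\,\E[B_i(R)]$, where the inequality uses $u\ge 0$ and $\E_{R\sim x}[B_i(R)]\ge 0$ (the latter from monotonicity of $f$, which forces $B_i(R)\ge 0$ pointwise). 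Substituting this back, replacing $\E_{R\sim x}[|R|]$ by $\|x\|_1$, and dividing by $\|x\|_1>0$ — legitimate since one-sided smoothness is only required at $x\neq\vec 0$ — yields exactly \eqref{eqn:OSS} at $x$, i.e. $F$ is one-sided $\gamma$-smooth at $x$.

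The argument is essentially bookkeeping, so I do not expect a serious obstacle; the one place to be careful is the nonnegativity invoked when enlarging $\sum_{j\neq i}u_j$ to $\sum_j u_j=\|u\|_1$, which is precisely where both hypotheses of the lemma are used (nonnegativity of $u\in[0,1]^n$ and monotonicity of $f$), and which explains why the hypothesis is phrased with the symmetric sum $B_i+B_j$. I would also briefly verify the excluded edge case $\E_{R\sim x}[|R|]=0$ (this is exactly $x=\vec 0$, outside the domain of the definition) and the vanishing of the diagonal second-order terms, but neither affects the main line of reasoning.
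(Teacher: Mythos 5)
Your proof is correct and follows essentially the same route as the paper's: both translate the hypothesis, via the standard multilinear-extension identities, into $\|x\|_1\nabla^2_{ij}F(x)\le\gamma\left(\nabla_i F(x)+\nabla_j F(x)\right)$, sum the $u_iu_j$-weighted instances over pairs, and regroup to obtain $\tfrac{1}{2}u^{\top}\nabla^2F(x)u\le\gamma\frac{\|u\|_1}{\|x\|_1}\,u^{\top}\nabla F(x)$. The only cosmetic difference is that the paper sums over all ordered pairs (making the regrouping an exact identity), while your restriction to $i<j$ requires the extra enlargement $\sum_{j\ne i}u_j\le\|u\|_1$, which you correctly justify via $u\ge 0$ and monotonicity.
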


 We call this probabilistic version the {\em expectation inequality}  (\ref{eq:expect-ineq}).
 We have proved this inequality holds (modulo a constant factor) in the supermodular case (see Lemma~\ref{lemma:expIneq} in Appendix~\ref{app:smoothness}).  This yields the following.
 \begin{theorem}
Let $f$ be a supermodular function such that $f \in \mathcal{G}_\gamma$. Then its multilinear extension $F$ is one-sided $(\max \{ 3\gamma,2\gamma+1 \})$-smooth.
\label{thm:supmetasmooth}
\end{theorem}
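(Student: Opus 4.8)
\ The plan is to deduce the theorem from Lemma~\ref{lemma:probabilisticVersion}: it suffices to verify, for every $x\in[0,1]^n$ and every pair $i,j\in[n]$, the expectation inequality \eqref{eq:expect-ineq} with the constant $\gamma$ replaced by $\sigma:=\max\{3\gamma,2\gamma+1\}$. Since the $\gamma$-MS inequality \eqref{eqn:meta} holds pointwise, multiplying by $|S|$ gives $|S|\,A_{ij}(S)\le\gamma\,(B_i(S)+B_j(S))$ for every $S$ (the empty set being trivial because supermodularity forces $A_{ij}\ge0$), and taking expectations over $R\sim x$ yields $\E[|R|\,A_{ij}(R)]\le\gamma\,(\E[B_i(R)]+\E[B_j(R)])$. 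Thus the whole theorem reduces to controlling the gap $\E[|R|]\,\E[A_{ij}(R)]-\E[|R|\,A_{ij}(R)]=-\mathrm{Cov}(|R|,A_{ij}(R))$, i.e. to the estimate $-\mathrm{Cov}(|R|,A_{ij}(R))\le(\sigma-\gamma)\,(\E[B_i(R)]+\E[B_j(R)])$.

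The second step makes the covariance concrete. The key structural observation is that $A_{ij}(R)$ depends only on $R\setminus\{i,j\}$; writing $R':=R\setminus\{i,j\}$ and using independence of the coordinates, $-\mathrm{Cov}(|R|,A_{ij}(R))=-\mathrm{Cov}(|R'|,A_{ij}(R'))$, and expanding $|R'|=\sum_{k\ne i,j}\mathbbm{1}[k\in R']$ gives
\[
-\mathrm{Cov}(|R'|,A_{ij}(R'))=\sum_{k\ne i,j}x_k(1-x_k)\,\E\!\left[-C_{ijk}(R')\right],
\]
where $C_{ijk}(S):=A_{ij}(S+k)-A_{ij}(S-k)$ is the (fully symmetric) third-order difference. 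Here supermodularity enters essentially: from the symmetry $C_{ijk}(S)=A_{ik}(S+j)-A_{ik}(S)$ together with $A_{ik}\ge0$ we obtain the ``monotone triangle inequality'' $-C_{ijk}(S)\le A_{ik}(S)$, and symmetrically $-C_{ijk}(S)\le A_{jk}(S)$; averaging, $-C_{ijk}(S)\le\tfrac12(A_{ik}(S)+A_{jk}(S))$. Since $x_k(1-x_k)\le x_k$ and $\sum_{k\ne i,j}x_k\,\E[A_{ik}(R')]=\E_{R'}\!\big[\sum_{k\in R'}A_{ik}(R')\big]$ (the term $A_{ik}$ being insensitive to whether $k\in R'$), the gap is at most $\tfrac12\,\E_{R'}\!\big[\sum_{k\in R'}(A_{ik}(R')+A_{jk}(R'))\big]$.

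The final step bounds $\E_{R'}\!\big[\sum_{k\in R'}A_{ik}(R')\big]$ by $O(\gamma)\,(\E[B_i(R)]+\E[B_j(R)])$. Applying \eqref{eqn:meta} to the pair $(i,k)$ at the nonempty set $R'$ itself --- the definition of $\gamma$-MS permits $i$ and $k$ to lie in $S$ --- gives $A_{ik}(R')\le\gamma\,\frac{B_i(R')+B_k(R')}{|R'|}$, so $\sum_{k\in R'}A_{ik}(R')\le\gamma\,B_i(R')+\frac{\gamma}{|R'|}\sum_{k\in R'}B_k(R')$; the residual average-marginal term is in turn controlled by combining monotonicity of the marginals $B_k$ with a further application of \eqref{eqn:meta} to pairs lying inside $R'$, while the sets $R'$ with $|R'|$ small (where the denominator in \eqref{eqn:meta} gives nothing, the empty set in particular) are handled directly, using monotonicity and Lemma~\ref{lemma:discreteIntegral} to charge $A_{ij}(\emptyset)$ to the marginals. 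Tallying the constants produces $\sigma=\max\{3\gamma,2\gamma+1\}$: the two expressions arise from a case distinction (essentially $\gamma\ge1$ versus $\gamma<1$, i.e. whether the bulk or the small sets dominate), and the additive $+1$ is exactly the price of the slack in the monotone triangle step.

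I expect the last step to be the main obstacle. The residual term $\frac{1}{|R'|}\sum_{k\in R'}B_k(R')$ cannot be bounded by an absolute constant times $B_i(R')$ --- already for $f(S)=\phi(\sum_{v\in S}w_v)$ with $\phi$ convex the relevant ratio scales like $\gamma$ --- so the factor $\gamma$ that ultimately appears there is unavoidable and must be extracted with care; this, together with pinning down the exact constant through the small-set case analysis, is where the real work lies. It is worth emphasizing that supermodularity is used twice and crucially: it supplies the sign of every second-order difference (so that the pointwise $\gamma$-MS inequality survives truncation and $A_{ij}(\emptyset)\ge0$) and it yields the monotone triangle inequality on third differences; without it the covariance can have the opposite sign and no bound depending only on $\gamma$ is possible.
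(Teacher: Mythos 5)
Your first two steps are sound, and they take a genuinely different route from the paper. Reducing the theorem to the inequality $\|x\|_1\nabla^2_{ij}F(x)\le\sigma(\nabla_i F(x)+\nabla_j F(x))$ via Lemma~\ref{lemma:probabilisticVersion}, splitting off the pointwise term $\E[|R|\,A_{ij}(R)]\le\gamma(\E[B_i(R)]+\E[B_j(R)])$, and rewriting the remainder as $-\mathrm{Cov}(|R|,A_{ij}(R))=\sum_{k\ne i,j}x_k(1-x_k)\,\E[-C_{ijk}(R')]$ together with the supermodularity bound $-C_{ijk}(R')\le\tfrac12(A_{ik}(R')+A_{jk}(R'))$ are all correct. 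The paper never forms a covariance: it splits $\|x\|_1=(\|x\|_1-|R|)+|R|$ inside the sum, throws the sets with $|R|\ge\|x\|_1$ into the $|R|A_{ij}(R)$ bucket using $A_{ij}\ge 0$, and bounds $(\|x\|_1-|R|)p_x(R)\le\sum_{e\notin R}\frac{x_e}{1-x_e}p_x(R)=\sum_{e\notin R}p_x(R+e)$, then applies \eqref{eqn:meta} at $R$ together with $B_i(R)\le B_i(R+e)$ (supermodularity) and a counting argument over supersets, treating $R=\emptyset$ separately at cost $\gamma+1$; this is exactly where $\max\{3\gamma,2\gamma+1\}$ comes from (Lemmas~\ref{lemma:supmetasmooth} and~\ref{lemma:expIneq}).

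Your third step, however, is a genuine gap, and the plan you sketch for it would fail. After applying \eqref{eqn:meta} to the pair $(i,k)$ you are left with $\gamma B_i(R')+\frac{\gamma}{|R'|}\sum_{k\in R'}B_k(R')$, and you propose to control the average-marginal term by monotonicity and further applications of \eqref{eqn:meta} to pairs inside $R'$. Those tools only bound second-order differences by marginals; they cannot bound the average marginal of arbitrary elements of $R'$ by the marginals of the two fixed elements $i,j$. Concretely, for $f(S)=g(S\setminus\{i,j\})$ with $g$ monotone, supermodular and $\gamma$-MS, one has $B_i\equiv B_j\equiv 0$ while $\frac{1}{|R'|}\sum_{k\in R'}B_k(R')$ can be arbitrarily large; the theorem survives there only because the terms $A_{ik}(R')$ themselves vanish, i.e.\ any successful argument must exploit the $A_{ik}$'s directly rather than pass through your (lossy) average-marginal upper bound. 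Thus the estimate you actually need, $\E\big[\sum_{k\in R'}(A_{ik}(R')+A_{jk}(R'))\big]\le O(\gamma)\,(\E[B_i(R)]+\E[B_j(R)])$, is neither proved nor reduced to anything you have established --- the natural identity $\sum_{k\in R'}A_{ik}(R')=|R'|\,B_i(R')-\sum_{k\in R'}B_i(R'\setminus\{k\})$ only yields a factor $|R'|$, not $\gamma$ --- and the constant $\max\{3\gamma,2\gamma+1\}$ is asserted rather than derived. As written, your argument establishes only the easy half of the expectation inequality; the hard half is exactly what the paper's mass-shifting lemma supplies.
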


 We conjecture that for $\gamma>0$, the multilinear extension of any $\gamma$-meta-submodular function is one-sided $O(\gamma)$-smooth. We use one-sided smoothness to prove Theorem~\ref{thm:LS-general}.
While it is  most convenient to have the smoothness property for the multilinear extension $F$ at every point of its domain, in order to prove Theorem~\ref{thm:LS-general} we only need it on a subdomain of $F$. We prove the following ``subdomain smoothness'' property in Section~\ref{sec:local-search}.

\begin{theorem}
	\label{thm:metasubsmooth}
	Let $f \in \metas{\gamma}$  and $F$ be its multilinear extension. Let $\alpha\geq 1$ and $S\subseteq [n]$ be non-empty. Then $F$ is one-sided $\alpha\gamma$-smooth on $\{x|x\geq \mathbbm{1}_{S} \, , \, ||x||_1\leq \alpha|S|\}$.
\end{theorem}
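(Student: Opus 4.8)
The plan is to reduce the claim to the expectation inequality of Lemma~\ref{lemma:probabilisticVersion}, but with the smoothness parameter $\gamma$ there replaced by $\alpha\gamma$. Fix a point $x$ in the subdomain $D := \{x : x \geq \mathbbm{1}_{S},\ \|x\|_1 \leq \alpha|S|\}$ and consider the random set $R \sim x$. The first thing to observe is that the hypothesis $x \geq \mathbbm{1}_{S}$ forces $R \supseteq S$ with probability $1$, so every $R$ in the support satisfies $|R| \geq |S| \geq 1$; in particular $R$ is always nonempty, so the $\gamma$-MS inequality $|R|\,A_{ij}(R) \leq \gamma\,(B_i(R)+B_j(R))$ is available for every such $R$. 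The second observation is that, for $x \in D$, $\E_{R\sim x}[|R|] = \|x\|_1 \leq \alpha|S| \leq \alpha|R|$ for every $R$ in the support.

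Next I would establish a pointwise bound: for every $R$ in the support and every $i,j\in[n]$,
\[
\E_{R\sim x}[|R|]\cdot A_{ij}(R) \leq \alpha\gamma\,(B_i(R)+B_j(R)).
\]
This follows by a two-case argument on the sign of $A_{ij}(R)$. If $A_{ij}(R) \geq 0$, multiply the $\gamma$-MS inequality by $\alpha$ and use $\E_{R\sim x}[|R|] \leq \alpha|R|$ together with $A_{ij}(R)\geq 0$ to get $\E_{R\sim x}[|R|]\,A_{ij}(R) \leq \alpha|R|\,A_{ij}(R) \leq \alpha\gamma\,(B_i(R)+B_j(R))$. If $A_{ij}(R) < 0$, the left-hand side is negative while, by monotonicity of $f$ (i.e.\ $B_i(R),B_j(R)\geq 0$) together with $\alpha,\gamma\geq 0$, the right-hand side is nonnegative.

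Finally, I would take expectations over $R\sim x$ in the pointwise bound; since $\E_{R\sim x}[|R|]$ is a constant it factors out of the left-hand side, yielding
\[
\E_{R\sim x}[|R|]\cdot \E_{R\sim x}[A_{ij}(R)] \leq \alpha\gamma\,\big(\E_{R\sim x}[B_i(R)] + \E_{R\sim x}[B_j(R)]\big),
\]
which is exactly the hypothesis~(\ref{eq:expect-ineq}) of Lemma~\ref{lemma:probabilisticVersion} with $\gamma$ replaced by $\alpha\gamma$. Since $f\in\metas{\gamma}$ is non-negative and monotone, Lemma~\ref{lemma:probabilisticVersion} applies and shows $F$ is one-sided $\alpha\gamma$-smooth at $x$. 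As $x\in D$ was arbitrary, $F$ is one-sided $\alpha\gamma$-smooth on $D$.

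There is essentially no hard step here: the constraint $x\geq \mathbbm{1}_{S}$ is used only to keep $R$ nonempty so that the $\gamma$-MS inequality can be invoked, and the constraint $\|x\|_1 \leq \alpha|S|$ is exactly what bounds the ``averaging'' loss in passing from $|R|\,A_{ij}(R)$ to $\E_{R\sim x}[|R|]\,A_{ij}(R)$ by a factor of at most $\alpha$. The only point requiring a little care is the sign case analysis, since $A_{ij}(R)$ need not have a fixed sign when $f$ is neither submodular nor supermodular; but monotonicity of $f$ dispatches the negative case immediately.
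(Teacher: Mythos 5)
Your proof is correct and takes essentially the same route as the paper's: restrict the support of $R\sim x$ to supersets of $S$, combine $\|x\|_1\le\alpha|S|\le\alpha|R|$ with the pointwise $\gamma$-MS inequality to obtain the expectation inequality with parameter $\alpha\gamma$, and invoke Lemma~\ref{lemma:probabilisticVersion}. Your explicit sign split on $A_{ij}(R)$ is in fact slightly more careful than the paper's write-up, which replaces $\|y\|_1$ by $\alpha|S|$ on the whole (possibly negative) sum $\nabla^2_{ij}F(y)$.
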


\subsection{Additional  Related Work}

For metric diversity functions, there exists a $2$-approximation  subject to a cardinality constraint
\cite{RaviRT94,hassin1994notes}.
Moreover, this has been extended to the case of
matroid constraints 
\cite{AbbassiMT13,BorodinLY12}. A PTAS is recently given for maximizing diversity functions on negative-type distances subject to a matroid constraint \cite{CevallosEZ16,CevallosEZ17}. There exists
a $10.22$-approximation for maximizing proportionally submodular functions subject to a matroid constraint~\cite{BorodinLY14weak,borodin2015proportionally}. 

Other extensions of submodular functions with respect to some sliding parameter (measuring how close a set function is to being submodular) have been considered in the literature. These include the class of weakly submodular functions, introduced in \cite{das2011submodularicml} and further studied in \cite{elenberg2017streaming,khanna2017scalable,hu2014efficient,chen2017weakly,bian2017guarantees,santiago2020weakly}. The class of set functions with supermodular degree $d$ (an integer between $0$ and $n-1$ such that $d=0$ if and only if $f$ is submodular), introduced in \cite{feige2013welfare} and further considered in \cite{feldman2014constrained,feldman2017building}. This has been extended to the Supermodular Width hierarchy \cite{chen2018capturing}. The class of $\epsilon$-approximate submodular functions studied in \cite{horel2016maximization}. The hierarchy over monotone set functions introduced in \cite{feigeFIILS15}, where levels of the hierarchy correspond to the degree of complementarity in a given function. They refer to this class as MPH (Maximum
over Positive Hypergraphs), and MPH-k denotes the $k$-th level in the hierarchy where $1\leq k\leq n$. The highest level MPH-n of the hierarchy captures all monotone functions, while the lowest level MPH-1 captures the class of \emph{XOS} functions (which include submodular).

We remark that our class of $\gamma$-meta-submodular functions differs from all the above extensions, since, for instance, none of them captures the class of metric diversity functions (in the sense of having a parameter that gives a good, say $O(1)$, approximation) while ours does.

\section{A Modified Local Search Algorithm}

In this section we introduce the modified local search algorithm, i.e., Algorithm \ref{alg:localsearch}. The first part of the algorithm (steps 1-6) consists of the standard local search procedure, where an approximate local optimum set $S$ is found. A set $S$ is an $\epsilon$-approximate local optimum if for any $i\in S$ and $j\in [n]\setminus S$ that $S-i+j\in\mathcal{I}$, we have $f(S-i+j)\leq (1+\frac{\epsilon}{n^2})f(S)$. It is a standard practice to find an approximate local optimum instead of an actual local optimum as the latter might take exponential time. The new component of the algorithm consists of step 7,  which requires finding a maximum weighted bipartite matching with $\floor{\frac{c-1}{2}}$ edges in an auxiliary graph, in order to produce a second candidate solution $S'$ --- which is the node set of the matching. Note that $S'$ is an independent set of the matroid because its size is less than $c$, the minimum size of any circuit in the matroid. The algorithm then returns the better of the two solutions $S$ and $S'$. The new step (i.e., step 7) plays a key role in improving the approximation factor when the function is supermodular --- see Theorem \ref{thm:LS-2nd-order}.

The auxiliary graph is a complete weighted bipartite graph $G$ with node sets $S$ and $[n]\setminus S$. The edge weights are $w(i,j):= A_{ij}(S)$ for $i\in S$ and $j\in [n]\setminus S$. We want to find a maximum weighted matching with $\floor{\frac{c-1}{2}}$ edges in $G$. This matching can be found by a simple reduction to the maximum weighted bipartite matching problem as follows: add $|S|-\floor{\frac{c-1}{2}}$ dummy nodes to $[n]\setminus S$ and connect them to all the nodes in $S$ with a weight equal to the maximum of $w(i,j)$'s. Finding a maximum weighted bipartite matching in this graph is equivalent to finding a maximum weighted bipartite matching with $\floor{\frac{c-1}{2}}$ edges in the original graph. This matching can be found in time $O(n^2 (r+\log n))$ using the Hungarian algorithm with the Dijkstra algorithm and Fibonacci heap \cite{FredmanT87}.

We note that the standard local search algorithm (i.e., the one consisting of steps 1-6 of Algorithm \ref{alg:localsearch}) has been
previously used for maximizing a submodular \cite{nemhauser1978analysis,lee2010submodular} and diversity \cite{AbbassiMT13,ZadehGMZ17} objective functions subject to a matroid constraint. 

\RestyleAlgo{algoruled}
\begin{algorithm}[th]
	\footnotesize
	\textbf{Input:} A set function $f$, a matroid $\M=([n],\mathcal{I})$ with circuits of minimum size $c$, and $\epsilon>0$.\\[0.6ex]
	$S_0 \leftarrow \argmax_{\{v,v'\} \in \mathcal{I}} f(\{v,v'\})$\\
	$S \leftarrow$ any base of $\M$ that contains $S_0$\\
	\While{$S$ is not an approximate local optimum} {
		Find $i\in S$ and $j\in [n]\setminus S$ such that $S-i+j\in\mathcal{I}$ and $f(S-i+j)\geq (1+\frac{\epsilon}{n^2})f(S)$\\
		$S\leftarrow S-i+j$\\
	}
	Create a complete weighted bipartite graph $G$ with node sets $S$ and $[n]\setminus S$, and edge weights $w(i,j) := A_{ij}(S)$ for each $i \in S$ and $j \notin S$. Find a maximum weighted matching $M$ in $G$ of (edge) cardinality $\floor{\frac{c-1}{2}}$, and let $S'$ denote the node set of $M$.\\
	\Return $\argmax\{f(S),f(S')\}$\\
	
	\caption{Local search under matroid constraint}
	\label{alg:localsearch}
\end{algorithm}

\section{General $\gamma$-Meta-Submodular Functions}
\label{sec:local-search}

In this section we present the main algorithmic result for general monotone $\gamma$-meta-submodular functions. Our goal is to show that an approximate local optimum solution $S$ is a good approximation for a global optimum solution $T$. To prove this, we need to bound $f(T)$ by a factor of $f(S)$. Since $f$ is monotone, we know $f(T)\leq f(S\cup T)$. Therefore, instead of bounding $f(T)$ directly, we find a bound for $f(S\cup T)$. To do so, we can use the multilinear extension of $f$ and Taylor's expansion of the extension. Let $F$ be the multilinear extension of $f$. Then by Taylor's theorem, for some $\epsilon'\in [0,1]$, we have
\begin{align*}
f(S\cup T) & = F(\mathbbm{1}_S\vee \mathbbm{1}_T) = F(\mathbbm{1}_S+\mathbbm{1}_{T\setminus S}) = F(\mathbbm{1}_S)+\mathbbm{1}_{T\setminus S}^{\top}\nabla F(\mathbbm{1}_S+\epsilon' \mathbbm{1}_{T\setminus S}) \\ & = f(S)+\mathbbm{1}_{T\setminus S}^{\top}\nabla F(\mathbbm{1}_S+\epsilon' \mathbbm{1}_{T\setminus S}).
\end{align*}
So we only need to bound $\mathbbm{1}_{T\setminus S}^{\top}\nabla F(\mathbbm{1}_S+\epsilon' \mathbbm{1}_{T\setminus S})$ in terms of $f(S)$. To do so, we use a subdomain smoothness of meta-submodular functions and then we use this property to bound the mentioned term. Hence in this section, we first prove the subdomain smoothness of meta-submodular functions (Lemma~\ref{lemma:probabilisticVersion} and Theorem~\ref{thm:metasubsmooth}), and then we show some bounds on the directional derivative of the multilinear extension of meta-submodular function using the subdomain smoothness property (Lemma~\ref{lemma:sumGradEws} and Lemma~\ref{lemma:grad-ratio-ews}). We then use these bounds to prove that an approximate local optimum is a good approximation for a global optimum (Theorem~\ref{thm:LS-general}).

\begin{replemma}{lemma:probabilisticVersion}
Let $f$ be a non-negative, monotone set function and $F$ be its multilinear function. Let $x\in [0,1]^n$ and $\gamma\geq 0$. If for any $i,j\in [n]$ we have 
\[
\E_{R \sim x}[|R|]\cdot\E_{R \sim x}[A_{ij}(R)]\leq \gamma\cdot(\E_{R \sim x}[B_i(R)]+\E_{R \sim x}[B_j(R)]),
\]
or equivalently (see \cite{vondrak2008optimal} or Lemma~\ref{lemma:gradientbi} in Appendix~\ref{app:smoothness}),
\[
||x||_1\nabla_{ij}^2 F(x) \leq \gamma (\nabla_i F(x)+ \nabla_j F(x)),
\]
then $F$ is one-sided $\gamma$-smooth at $x$.
\end{replemma}
\begin{proof}
We have
\begin{align*}
u^{\top} \nabla^2 F(x) u & = \sum_{i=1}^n \sum_{j=1}^n u_i u_j \nabla_{ij}^2 F(x) \leq \frac{\gamma}{||x||_1} \sum_{i=1}^n \sum_{j=1}^n u_i u_j (\nabla_i F(x)+\nabla_j F(x))
\\ & = \frac{\gamma}{||x||_1}(\sum_{i=1}^n \sum_{j=1}^n u_i u_j \nabla_i F(x)+\sum_{i=1}^n \sum_{j=1}^n u_i u_j \nabla_j F(x))
\\ & = \frac{\gamma}{||x||_1}(\sum_{i=1}^n u_i \nabla_i F(x) (\sum_{j=1}^n u_j) +\sum_{i=1}^n u_i (\sum_{j=1}^n u_j \nabla_j F(x)))
\\ & = \frac{\gamma}{||x||_1}(||u||_1 \sum_{i=1}^n u_i \nabla_i F(x) + ||u||_1 \sum_{j=1}^n u_j \nabla_j F(x))
\\ & = 2\gamma \left(\frac{||u||_1}{||x||_1} \right) ( u^{\top} \nabla F(x)).
\end{align*}
\end{proof}

Now we can show the following subdomain smoothness property which will be used to bound the Taylor's polynomial of the multilinear extension of $\gamma$-MS functions.

\begin{reptheorem}{thm:metasubsmooth}
	Let $f \in \metas{\gamma}$  and $F$ be its multilinear extension. Let $\alpha\geq 1$ and $S\subseteq [n]$ be non-empty. Then $F$ is one-sided $\alpha\gamma$-smooth on $\{x|x\geq \mathbbm{1}_{S} \, , \, ||x||_1\leq \alpha|S|\}$.
\end{reptheorem}
\begin{proof}
Let $y\in\{x|x\geq \mathbbm{1}_{S} \, , \, ||x||_1\leq \alpha|S|\}$. First, we show that 
\[
||y||_1\nabla^2_{ij} F(y) \leq \gamma \alpha (\nabla_i F(y) + \nabla_j F(y)).
\]
We know $\nabla^2_{ij} F(y) = \sum_{R\subseteq [n]} A_{ij}(R) p_y(R)$. Since $y\geq \mathbbm{1}_S$, $p_y(R)=0$ for any $R$ that is not a superset of $S$. Therefore, $\nabla^2_{ij} F(y) = \sum_{R\subseteq [n]\setminus S} A_{ij}(S\cup R) p_y(S\cup R)$. We have 
\begin{align*}
||y||_1 \nabla^2_{ij} F(y) & = ||y||_1\sum_{R\subseteq [n]\setminus S} A_{ij}(S\cup R) p_y(S\cup R) \leq \alpha|S|\sum_{R\subseteq [n]\setminus S} A_{ij}(S\cup R) p_y(S\cup R) \\ & \leq \sum_{R\subseteq [n]\setminus S} \frac{\gamma \alpha|S|}{|S\cup R|} (B_i(S\cup R) + B_j(S\cup R)) p_y(S\cup R) \\ & \leq \sum_{R\subseteq [n]\setminus S} \gamma \alpha (B_i(S\cup R) + B_j(S\cup R)) p_y(S\cup R) \\ & \leq \gamma \alpha (\nabla_i F(y) + \nabla_j F(y)).
\end{align*}
\noindent
Now, by Lemma~\ref{lemma:probabilisticVersion}, we conclude that $F$ is one-sided ($\alpha\gamma$)-smooth at $y$.
\end{proof}

To analyse the local search algorithm, we use the following technical lemmas which use subdomain one-sided smoothness (Theorem~\ref{thm:metasubsmooth}) to bound the Taylor series expansion of the multilinear extension of $\gamma$-MS functions. 
\begin{lemma}
	\label{lemma:sumGradEws}
	Let $f\in \metas{\gamma}$ and $F$ be its multilinear extension. Let $R\subseteq [n]$ such that $|R|\geq 2$. Then
	\[
\mathbbm{1}_R^{\top} \nabla F(\mathbbm{1}_R)=\sum_{i \in R} B_i (R-i) \leq ((\frac{\lfloor \frac{|R|}{2} \rfloor^2 + \lceil \frac{|R|}{2} \rceil^2}{\lfloor \frac{|R|}{2} \rfloor \lceil \frac{|R|}{2} \rceil }+2)\gamma+2) f(R) \leq (5\gamma+2)f(R)
\]
\end{lemma}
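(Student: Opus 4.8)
The quantity $\mathbbm{1}_R^\top \nabla F(\mathbbm{1}_R) = \sum_{i\in R} B_i(R-i)$ can be evaluated by integrating the gradient along the segment from $\mathbbm{1}_\emptyset$ to $\mathbbm{1}_R$, or discretely via Lemma~\ref{lemma:discreteIntegral}. The natural plan is to order the elements of $R$ as $v_1,\ldots,v_{|R|}$ and telescope, but a cleaner route that matches the asymmetric bound $\lfloor |R|/2\rfloor^2 + \lceil |R|/2\rceil^2$ appearing in the statement is to split $R$ into two halves $R = P \cup Q$ with $|P| = \lfloor |R|/2 \rfloor$ and $|Q| = \lceil |R|/2 \rceil$, and expand $f(R)$ using the second-order Taylor structure. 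Concretely, I would write $\sum_{i\in R} B_i(R-i)$ and repeatedly apply the identity $B_i(R-i) = B_i(\emptyset) + \sum_{j} A_{ij}(\cdot)$ from Lemma~\ref{lemma:discreteIntegral}, so that the whole sum becomes $f(R)$-related terms plus a double sum of $A_{ij}$ terms over pairs inside $R$.

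\textbf{Key steps.} First, use monotonicity ($B_i \ge 0$) and the discrete integral to express $\sum_{i\in R} B_i(R-i)$ in terms of $f$-values and pairwise second differences. Second, for each pair $\{i,j\} \subseteq R$ with $i\neq j$, bound the relevant $A_{ij}(\cdot)$ terms using the $\gamma$-MS inequality~\eqref{eqn:meta}: each such term is at most $\gamma \cdot (B_i(\cdot) + B_j(\cdot))/|\cdot|$, where the set in question is some subset of $R$ of size roughly $|R|/2$ — this is where the half-split comes in, giving the $\lfloor |R|/2\rfloor$, $\lceil |R|/2\rceil$ denominators. Third, re-sum the resulting $B_i$ terms; since each $B_i(\text{subset of }R) \le B_i(\emptyset) + (\text{more } A \text{ terms})$ and ultimately everything telescopes back into $f(R)$ (using $f(R) \ge f(\text{subset})$ and $B_i(S) \le f(S+i) \le f(R)$ by monotonicity), one collects a bounded multiple of $f(R)$. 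Fourth, count the pairs carefully: splitting $R$ into halves $P, Q$, the cross pairs number $|P|\cdot|Q| = \lfloor |R|/2\rfloor \lceil |R|/2\rceil$ and contribute the main $\gamma$-term after dividing by the denominator, while the within-half pairs contribute the "$+2$" slack; summing gives the coefficient $\big(\frac{\lfloor |R|/2\rfloor^2 + \lceil |R|/2\rceil^2}{\lfloor |R|/2\rfloor\lceil |R|/2\rceil} + 2\big)\gamma + 2$. Finally, observe that the function $m\mapsto \frac{\lfloor m/2\rfloor^2+\lceil m/2\rceil^2}{\lfloor m/2\rfloor\lceil m/2\rceil}$ is maximized at small $m$ (it equals $2$ when $m$ is even and approaches $2$ from above; at $m=2$ it is... actually for $|R|=2$ one half has size $1$, giving ratio $(1+1)/1 = 2$, and for $|R|=3$ one gets $(1+4)/2 = 5/2$), and the worst case over $|R|\ge 2$ yields at most $5$, so the bound $\le (5\gamma+2)f(R)$ follows. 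I would double-check the $|R|=3$ case by hand since that is where the ratio $5/2$ is attained and hence where the constant $5$ in the simplified bound comes from (via $5/2 + 2 = 9/2 \le 5$).

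\textbf{Main obstacle.} The delicate part is the bookkeeping in the re-summation: after applying~\eqref{eqn:meta} to the $A_{ij}$ terms we are left with sums of $B_i$-values evaluated at various subsets of $R$, and we must bound these by $f(R)$ without incurring extra factors of $|R|$. The cleanest way is to choose the elimination order so that each $B_i$ is evaluated on a set of size exactly $\lceil |R|/2\rceil$ or $\lfloor |R|/2\rfloor$ — i.e., first build up one half completely, then add the second half element by element — so that the denominators $|S\cup R|$ in~\eqref{eqn:meta} are exactly these half-sizes, never smaller. Getting this order right, and verifying that the "leftover" $B_i(\text{half})$ terms sum to at most $2f(R)$ (two copies, one from each direction of the pair sum, each bounded by $f(R)$ via monotonicity of $f$ composed with $B_i(S) = f(S+i)-f(S-i) \le f(S+i)$), is the crux. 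The concluding inequality $\le (5\gamma+2)f(R)$ is then just the elementary observation about the ratio of floor/ceil squares, which I would state with a one-line justification rather than a full case analysis.
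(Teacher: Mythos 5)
Your half-split into $P,Q$ of sizes $\lfloor|R|/2\rfloor,\lceil|R|/2\rceil$ is indeed where the floor/ceiling ratio comes from, but the mechanism you propose after that — expand everything via Lemma~\ref{lemma:discreteIntegral} into a double sum of $A_{ij}$ terms, bound each by (\ref{eqn:meta}), and ``re-sum so that everything telescopes back into $f(R)$'' — is not carried out, and its crux step does not hold as justified. After applying (\ref{eqn:meta}) you are left with sums of the form $\sum_{i\in Q} B_i(\text{subset of }R)$, and you claim these ``leftover $B_i(\text{half})$ terms sum to at most $2f(R)$,'' justified only by the termwise bound $B_i(S)\le f(S+i)\le f(R)$. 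Termwise bounds give $|Q|\cdot f(R)\approx \tfrac{|R|}{2}f(R)$, i.e.\ exactly the extra factor of $|R|$ you say must be avoided; and bounding $\sum_i B_i(\cdot)$ by a constant multiple of $f(R)$ without such a loss is essentially the content of the lemma itself, so the argument as sketched is circular. A purely discrete rearrangement of the double sum of $A$-terms is what the paper does in Lemma~\ref{lemma:2ndorder-structure}, but there it needs the extra hypothesis of second-order submodularity to flip $A_{ij}(R_{j-1})\le A_{ij}(R_{i-1})$; for general $\gamma$-MS functions that tool is unavailable, which is precisely why a different idea is needed here. Your pair-counting explanation of the coefficient (cross pairs $|P||Q|$ giving the main term, within-half pairs giving the ``$+2$'') also does not match the actual structure $\frac{\lceil|R|/2\rceil}{\lfloor|R|/2\rfloor}+\frac{\lfloor|R|/2\rfloor}{\lceil|R|/2\rceil}+2$, and no derivation of it is given.

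The paper's proof keeps your split but argues continuously: by Theorem~\ref{thm:metasubsmooth}, $F$ is one-sided $\alpha\gamma$-smooth on $\{y:\mathbbm{1}_S\le y\le \mathbbm{1}_R\}$ with $\alpha=\lceil|R|/2\rceil/\lfloor|R|/2\rfloor+1$ (and symmetrically with the halves swapped). Setting $g(t)=\mathbbm{1}_T^{\top}\nabla F(\mathbbm{1}_S+t\mathbbm{1}_T)$, smoothness gives the differential inequality $tg'(t)\le\alpha\gamma\,g(t)$, and integrating by parts over $[0,1]$ yields $\sum_{i\in T}B_i(R-i)=g(1)\le(\alpha\gamma+1)\bigl(F(\mathbbm{1}_R)-F(\mathbbm{1}_S)\bigr)\le(\alpha\gamma+1)f(R)$; adding the two directions produces exactly the stated coefficient (the two ``$+1$''s in $\alpha$ give the ``$+2$'' multiplying $\gamma$, and the two ``$+1$''s from integration by parts give the final ``$+2$''). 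If you want a discrete route you would need a discrete Gr\"onwall-type control of the intermediate $B_i(S\cup T_{j-1})$ terms, not a termwise monotonicity bound; as written, your proposal has a genuine gap at that point. Your closing numerics (worst case $|R|=3$, ratio $5/2$, hence $\tfrac92\gamma+2\le 5\gamma+2$) are correct and agree with the paper.
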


\begin{proof}
Partition $R$ into two sets of size $\lfloor \frac{|R|}{2} \rfloor$ and of size $\lceil \frac{|R|}{2} \rceil$ like $S$ and $T$. Using Theorem~\ref{thm:metasubsmooth}, we know that $F$ is one-sided $((\lfloor \frac{|R|}{2} \rfloor / \lceil \frac{|R|}{2} \rceil +1)\gamma)$-smooth on $\{y|\mathbbm{1}_T\leq y\leq \mathbbm{1}_R\}$ and it is one-sided $((\lceil \frac{|R|}{2} \rceil / \lfloor \frac{|R|}{2} \rfloor +1)\gamma)$-smooth on $\{y|\mathbbm{1}_S\leq y\leq \mathbbm{1}_R\}$. Let $\alpha=(\lceil \frac{|R|}{2} \rceil / \lfloor \frac{|R|}{2} \rfloor +1)$. We show that 
\[
\sum_{i\in T} B_i(R-i) \leq \alpha\gamma f(R).
\]
Let $h(t)=F(\mathbbm{1}_S+t\mathbbm{1}_T)$ and $g(t)=\mathbbm{1}_T^{\top} \nabla F(\mathbbm{1}_S+t\mathbbm{1}_T)$ where $0\leq t\leq 1$. Note that $g(t)=h'(t)$ and $\mathbbm{1}_T^{\top} \nabla^2 F(\mathbbm{1}_S+t\mathbbm{1}_T) \mathbbm{1}_T = g'(t)$. Since $F$ is one-sided $\alpha\gamma$-smooth at any given point $\mathbbm{1}_S\leq y \leq \mathbbm{1}_R$, we have
\[
g'(t)=\mathbbm{1}_T^{\top} \nabla^2 F(\mathbbm{1}_S+t\mathbbm{1}_T) \mathbbm{1}_T \leq \alpha\gamma (\frac{||\mathbbm{1}_T||_1}{||\mathbbm{1}_S+t\mathbbm{1}_T||_1})(\mathbbm{1}_T^{\top} \nabla F(\mathbbm{1}_S+t\mathbbm{1}_T)) \leq \alpha\gamma\frac{1}{t} g(t).
\]
Therefore, $tg'(t)\leq \alpha\gamma g(t)$. Integrating both sides, we get
\[
\int_{0}^1 t g'(t) dt \leq \int_{0}^1 \alpha\gamma g(t) dt.
\]
Applying the integration by parts formula to the left hand side, we get
\[
t g(t)\biggl|_{0}^{1} - \int_{0}^1 g(t) dt \leq \alpha\gamma \int_{0}^1 g(t) dt.
\]
It follows that
\[
1 \cdot g(1) - 0 \cdot g(0) = \mathbbm{1}_T^{\top} \nabla F(\mathbbm{1}_S+\mathbbm{1}_T) = \mathbbm{1}_T^{\top} \nabla F(\mathbbm{1}_R) = \sum_{i\in T} B_i(R-i) \leq (\alpha\gamma +1) \int_{0}^1 g(t) dt.
\]
By using $g(t)=h'(t)$ we have
\begin{align*}
\sum_{i\in T} B_i(R-i) & \leq (\alpha\gamma+1) \int_{0}^1 h'(t) dt = (\alpha\gamma +1)  (h(1) -  h(0)) \\ & = (\alpha\gamma+1)(F(\mathbbm{1}_S+\mathbbm{1}_T) - F(\mathbbm{1}_S)) \\ & \leq (\alpha\gamma+1)F(\mathbbm{1}_R)=(\alpha\gamma +1)f(R).
\end{align*}
This means that 
\[
\sum_{i\in T} B_i(R-i)\leq ((\lceil \frac{|R|}{2} \rceil / \lfloor \frac{|R|}{2} \rfloor +1)\gamma+1) f(R).
\]
With the same argument we can conclude that
\[
\sum_{i\in S} B_i(R-i)\leq ((\lfloor \frac{|R|}{2} \rfloor / \lceil \frac{|R|}{2} \rceil +1)\gamma+1) f(R),
\]
and combining these inequalities yields the lemma.
\end{proof}

For our next result, we use the following lemma from \cite{ghadiri2019beyond} which bounds the directional derivative at points close to $x$ by a factor of the directional derivative at $x$.

\begin{lemma} [\cite{ghadiri2019beyond}]
Let $x \in [0,1]^n\setminus\{\vec{0}\}$, $u \in [0,1]^n$ and $\epsilon >0$ such that $x+\epsilon u \in [0,1]^n$. Let $F: [0,1]^n \rightarrow \mathbb{R}$ be a non-negative, monotone function which is one-sided $\sigma$-smooth on $\{y|x+\epsilon u \geq y\geq x\}$.
Then 
\[
u^{\top} \nabla F(x+\epsilon u) \leq \left(\frac{||x+\epsilon u||_1}{||x||_1}\right)^{2\sigma} (u^{\top} \nabla F(x)).
\]
\label{lemma:epsilonchangegradient}
\end{lemma}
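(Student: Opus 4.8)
The plan is to reduce the claim to a one-dimensional comparison argument along the line segment from $x$ to $x+\epsilon u$. Define $g\colon[0,\epsilon]\to\mathbb{R}$ by $g(t):=u^{\top}\nabla F(x+tu)$. Since $u\in[0,1]^n$ is coordinatewise nonnegative and $t\ge 0$, every point $x+tu$ of the segment satisfies $x\le x+tu\le x+\epsilon u$, so it lies in the box $\{y\mid x+\epsilon u\ge y\ge x\}$ (and hence in $[0,1]^n$ by convexity of that box), and it is nonzero because $x\ne\vec 0$. Therefore $F$ is twice continuously differentiable and one-sided $\sigma$-smooth at $x+tu$ for every $t\in[0,\epsilon]$. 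Two elementary facts will be used: first, $\|x+tu\|_1=\|x\|_1+t\|u\|_1$ because all coordinates are nonnegative; second, $g(t)\ge 0$ because monotonicity of $F$ forces $\nabla F\ge 0$ and $u\ge 0$. Differentiating gives $g'(t)=u^{\top}\nabla^2 F(x+tu)\,u$, and plugging the smoothness inequality (\ref{eqn:OSS}) in at the point $x+tu$ with direction $u$ yields
\[
g'(t)\;\le\;2\sigma\,\frac{\|u\|_1}{\|x\|_1+t\|u\|_1}\,g(t),\qquad t\in[0,\epsilon].
\]

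Next I would integrate this differential inequality. Rather than invoking Gr\"onwall, I would work with the auxiliary function $\phi(t):=g(t)\big/\big(\|x\|_1+t\|u\|_1\big)^{2\sigma}$, which is well defined since $\|x\|_1>0$. A direct computation of the quotient derivative gives
\[
\phi'(t)\;=\;\frac{1}{(\|x\|_1+t\|u\|_1)^{2\sigma}}\left(g'(t)-\frac{2\sigma\|u\|_1}{\|x\|_1+t\|u\|_1}\,g(t)\right)\;\le\;0,
\]
so $\phi$ is non-increasing on $[0,\epsilon]$. Hence $\phi(\epsilon)\le\phi(0)$, and multiplying through by $(\|x\|_1+\epsilon\|u\|_1)^{2\sigma}$ gives
\[
u^{\top}\nabla F(x+\epsilon u)\;\le\;\Big(\tfrac{\|x\|_1+\epsilon\|u\|_1}{\|x\|_1}\Big)^{2\sigma}u^{\top}\nabla F(x)\;=\;\Big(\tfrac{\|x+\epsilon u\|_1}{\|x\|_1}\Big)^{2\sigma}u^{\top}\nabla F(x),
\]
which is exactly the asserted bound.

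Finally I would flag the minor care points. The hypothesis $x\ne\vec 0$ is precisely what keeps the denominators $\|x\|_1+t\|u\|_1\ge\|x\|_1>0$ positive and the argument well posed. The $C^2$ regularity needed to differentiate $g$ is built into the definition of one-sided smoothness and is available on the box containing the segment; in the intended application $F$ is a multilinear extension, i.e.\ a polynomial, so this is automatic. The degenerate case $g(0)=0$ is harmless: then $\phi(0)=0$, so $\phi\le 0$ on $[0,\epsilon]$, and with $g\ge 0$ this forces $g\equiv 0$ and the inequality holds with both sides equal to zero. The only conceptual point worth stressing — and the main obstacle if one were proving this from scratch rather than quoting \cite{ghadiri2019beyond} — is recognizing that the a priori direction-agnostic curvature bound of one-sided $\sigma$-smoothness is being applied here only in the single fixed direction $u$, and that the constant $2$ in (\ref{eqn:OSS}) is exactly what turns the integrated logarithmic factor into the exponent $2\sigma$.
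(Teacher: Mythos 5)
Your proof is correct, and it is the argument one would expect: the paper itself does not reprove this lemma (it is imported verbatim from \cite{ghadiri2019beyond}), so the only in-paper point of comparison is the analogous computation inside the proof of Lemma~\ref{lemma:sumGradEws}, which sets up the same pointwise differential inequality from one-sided smoothness along a segment but then integrates by parts ($\int_0^1 t g'(t)\,dt$) to bound $g(1)$ by $(\alpha\gamma+1)\int_0^1 g$, since there the goal is to compare a directional derivative to a function value. Your goal is a multiplicative comparison of $g(\epsilon)$ to $g(0)$, and the Gr\"onwall-style device you use --- monotonicity of $\phi(t)=g(t)/(\|x\|_1+t\|u\|_1)^{2\sigma}$ --- is exactly the right tool for that and delivers the stated exponent. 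All the care points check out: the segment $x+tu$, $t\in[0,\epsilon]$, stays in the box where smoothness is assumed and away from $\vec 0$; $\|x+tu\|_1=\|x\|_1+t\|u\|_1$ by nonnegativity; the factor $\tfrac{1}{2}$ in the definition (\ref{eqn:OSS}) is what turns the integrated logarithm into the exponent $2\sigma$ (note the paper is internally sloppy about this factor in the proof of Lemma~\ref{lemma:sumGradEws}, but your accounting matches the definition and the cited statement); and the degenerate case $g(0)=0$ is handled. No gaps.
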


The following is an immediate result of Theorem~\ref{thm:metasubsmooth} and Lemma~\ref{lemma:epsilonchangegradient}.

\begin{lemma}
	\label{lemma:grad-ratio-ews}
	Let $f \in \metas{\gamma}$ and $F$ be its multilinear function. Let $R\subset [n]$, and $x\in [0,1]^n$ such that $||x||_1\leq |R|$. Let $u=\mathbbm{1}_R\vee x-\mathbbm{1}_R$. Then for $0\leq \epsilon \leq 1$, we have
	$
	u^{\top} \nabla F(\mathbbm{1}_R+\epsilon u) \leq 2^{4\gamma} u^{\top} \nabla F(\mathbbm{1}_R)
	$
\end{lemma}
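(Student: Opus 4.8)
The plan is to combine the subdomain smoothness property (Theorem~\ref{thm:metasubsmooth}) with the gradient comparison lemma (Lemma~\ref{lemma:epsilonchangegradient}); essentially all the work is in checking that the relevant line segment lies inside a subdomain on which $F$ is known to be one-sided $2\gamma$-smooth, so that Lemma~\ref{lemma:epsilonchangegradient} can be invoked with $\sigma = 2\gamma$.

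First I would record the structure of $u = \mathbbm{1}_R\vee x - \mathbbm{1}_R$. For $i\in R$ we have $(\mathbbm{1}_R\vee x)_i = \max\{1,x_i\} = 1$, so $u_i = 0$; for $i\notin R$ we have $u_i = \max\{0,x_i\} = x_i$. Hence $u\geq \vec{0}$, $u$ is supported on $[n]\setminus R$, and $\|u\|_1 = \sum_{i\notin R} x_i \leq \|x\|_1 \leq |R|$. This also shows $\mathbbm{1}_R + \epsilon u \in [0,1]^n$ for every $\epsilon\in[0,1]$ (coordinates in $R$ equal $1$, coordinates outside $R$ equal $\epsilon x_i \in [0,1]$), and that $\|\mathbbm{1}_R + \epsilon u\|_1 = |R| + \epsilon\|u\|_1 \leq 2|R|$. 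If $R = \emptyset$ then $x = \vec{0}$ and the statement is trivial, so we may assume $R \neq \emptyset$, i.e. $\mathbbm{1}_R \neq \vec{0}$.

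Next I would apply Theorem~\ref{thm:metasubsmooth} with $S = R$ and $\alpha = 2$: $F$ is one-sided $2\gamma$-smooth on $\{y : y\geq \mathbbm{1}_R,\ \|y\|_1 \leq 2|R|\}$. Every point $y$ with $\mathbbm{1}_R \leq y \leq \mathbbm{1}_R + \epsilon u$ satisfies $y\geq \mathbbm{1}_R$ (since $u\geq \vec{0}$) and $\|y\|_1 \leq \|\mathbbm{1}_R + \epsilon u\|_1 \leq 2|R|$, so $F$ is one-sided $2\gamma$-smooth on $\{y : \mathbbm{1}_R + \epsilon u \geq y \geq \mathbbm{1}_R\}$. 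Since $f\in\metas{\gamma}$, its multilinear extension $F$ is non-negative and monotone, so Lemma~\ref{lemma:epsilonchangegradient} applies with base point $\mathbbm{1}_R$, direction $u$, step $\epsilon$, and smoothness parameter $\sigma = 2\gamma$, yielding
\[
u^{\top}\nabla F(\mathbbm{1}_R + \epsilon u) \leq \left(\frac{\|\mathbbm{1}_R + \epsilon u\|_1}{\|\mathbbm{1}_R\|_1}\right)^{4\gamma} u^{\top}\nabla F(\mathbbm{1}_R).
\]
Bounding the base of the exponent by $\frac{2|R|}{|R|} = 2$ then gives $u^{\top}\nabla F(\mathbbm{1}_R + \epsilon u) \leq 2^{4\gamma}\, u^{\top}\nabla F(\mathbbm{1}_R)$, as claimed.

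There is no genuinely hard step here; the only point requiring care is the $\alpha = 2$ bookkeeping — verifying $\|u\|_1 \leq |R|$ so that the entire segment stays inside $\{\|y\|_1 \leq 2|R|\}$, which is precisely where the hypothesis $\|x\|_1 \leq |R|$ is used. It is worth noting that this $2^{4\gamma}$ factor is exactly what later propagates into the $2^{4\gamma}$ term of the approximation guarantee in Theorem~\ref{thm:LS-general}.
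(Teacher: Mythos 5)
Your proposal is correct and follows exactly the same route as the paper's proof: apply Theorem~\ref{thm:metasubsmooth} with $S=R$ and $\alpha=2$ to get one-sided $2\gamma$-smoothness on the box $\{y:\mathbbm{1}_R\leq y\leq \mathbbm{1}_R+\epsilon u\}$, then invoke Lemma~\ref{lemma:epsilonchangegradient} with $\sigma=2\gamma$ and bound the ratio $\|\mathbbm{1}_R+\epsilon u\|_1/\|\mathbbm{1}_R\|_1$ by $2$. Your write-up simply makes explicit the bookkeeping (structure of $u$, $\|u\|_1\leq|R|$, the trivial $R=\emptyset$ case) that the paper leaves implicit.
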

\begin{proof}
By Theorem~\ref{thm:metasubsmooth}, we know that $F$ is one-sided $2\gamma$-smooth on $A=\{y|y\geq  \mathbbm{1}_R,||y||_1\leq 2|R|\}$. Therefore $F$ is one-sided $2\gamma$-smooth on $B=\{y|\mathbbm{1}_R+\epsilon u\geq y\geq  \mathbbm{1}_R\}$ because $B\subseteq A$. Therefore, the desired result yields by Lemma~\ref{lemma:epsilonchangegradient}.
\end{proof}


We now prove Theorem \ref{thm:LS-general}.
We note that this result does not use the last step of Algorithm~\ref{alg:localsearch} where we find a maximum matching.
We discuss the runtime of Algorithm~\ref{alg:localsearch} for meta-submodular functions in Appendix~\ref{app:subseclocaltime}.

\begin{reptheorem}{thm:LS-general}
Let $f\in \metas{\gamma}$ and $\M=([n],\mathcal{I})$ be a matroid of rank $r$. Let $T\in \mathcal{I}$ be an optimum set, i.e.,
$
T \in \argmax_{R\in \mathcal{I}} f(R),
$
and $S\in \mathcal{I}$ be an $(1+\frac{\epsilon}{n^2})$-approximate local optimum, i.e., for any $i$ and $j$ such that $S-i+j\in \mathcal{I}$,
$
(1+\frac{\epsilon}{n^2})f(S)\geq f(S-i+j),
$
where $\epsilon>0$ is a constant. Then if $\gamma=O(r)$, $f(T)\leq O(\gamma 2^{4\gamma}) f(S)$ and if $\gamma=\omega(r)$, $f(T)\leq O(\gamma^2 2^{4\gamma}) f(S)$.
\end{reptheorem}
\begin{proof}
	Since $f$ is monotone, we assume that $|S|=|T|=r$. By Lemma~\ref{lem:exchangematroid}, there is a bijective mapping $g:S\setminus T\rightarrow T\setminus S$ such that $S-i+g(i)\in\mathcal{I}$ where $i\in S\setminus T$. 
	Since $S$ is a $(1+\frac{\epsilon}{n^2})$-approximate local optimum, for all $i \in S \setminus T$ we have
	$
	(1+\frac{\epsilon}{n^2})f(S)\geq f(S-i+g(i)).
	$
	That is,
	$
	\frac{\epsilon}{n^2}f(S)+B_i(S-i)\geq B_{g(i)}(S-i).
	$
	Using this we get
	\begin{align*}
	B_{g(i)}(S) & =B_{g(i)}(S-i)+A_{ig(i)}(S-i)\leq B_{g(i)}(S-i) +\gamma(\frac{B_{g(i)}(S-i)+B_{i}(S-i)}{r-1}) \\ & \leq \frac{2\gamma + r-1}{r -1} B_{i}(S-i) + \frac{\epsilon(\gamma+r-1)}{(r-1)n^2} f(S),
	\end{align*}
	where the equality follows from Lemma~\ref{lemma:discreteIntegral} and the first inequality from $\gamma$-meta-submodularity. Therefore,
	\begin{align*}
	\sum_{i\in S\setminus T} B_{g(i)}(S)\leq \frac{2\gamma + r-1}{r -1} \sum_{i\in S\setminus T} B_{i}(S-i) + o(1)f(S).
	\end{align*}
	Now, by Taylor's Theorem, Lemma~\ref{lemma:grad-ratio-ews}, and the above inequality, we have
	\begin{align*}
	f(S\cup T) & = F(\mathbbm{1}_S\vee \mathbbm{1}_T) = F(\mathbbm{1}_S+\mathbbm{1}_{T\setminus S}) = F(\mathbbm{1}_S)+\mathbbm{1}_{T\setminus S}^{\top}\nabla F(\mathbbm{1}_S+\epsilon' \mathbbm{1}_{T\setminus S}) \\ & \leq F(\mathbbm{1}_S)+2^{4\gamma} \mathbbm{1}_{T\setminus S}^{\top}\nabla F(\mathbbm{1}_S) = F(\mathbbm{1}_S)+2^{4\gamma}\sum_{i\in S\setminus T} B_{g(i)}(S) \\ & \leq (1+2^{4 \gamma} \cdot o(1))f(S)+\frac{2\gamma+r-1}{r-1}2^{4\gamma}\sum_{i\in S\setminus T} B_i(S-i)
	\end{align*}
	Therefore, using the monotonicity of $f$ and Lemma~\ref{lemma:sumGradEws} we get
	\begin{align*}
	f(T)\leq f(S\cup T)\leq 
	\Big[\frac{2\gamma+r-1}{r-1}2^{4\gamma}(5\gamma+2) + 1 +2^{4 \gamma} \cdot o(1) \Big] f(S).
	\end{align*}
\end{proof}

As discussed, 
one can get improved approximation factors by requiring additional conditions on the marginal gains of the set function $f$.  
We discuss this in the next section.

\section{Meta-Submodularity with Additional Second Order Conditions} 
\label{sec:second-order}

In this section we show that the modified local search algorithm can be used to find an $O(\gamma^{2})$-approximation for maximizing a  second-order submodular $\gamma$-MS function subject to a matroid constraint. Moreover if the function is supermodular, we improve the approximation to $O(\gamma^{3/2})$. Our result relies on the following key lemma, which  bounds the Taylor series expansion of the multilinear extension of second-order submodular functions.

\begin{lemma}
	Let $f:2^n\rightarrow \mathbb{R}$ be a non-negative, second-order submodular set function and $F$ be its multilinear extension. Then for any $R\subseteq [n]$, $\sum_{i\in R} B_i(R)\leq 2f(R)$. If $f$ is also monotone then for $x\in [0,1]^n$, $x^{\top} \nabla^2 F(x) x \leq 2F(x)$.
	\label{lemma:2ndorder-structure}
\end{lemma}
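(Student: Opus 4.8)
The plan is to reduce both inequalities to one-variable statements by restricting $F$ to a ray through the origin, and then exploit that second-order submodularity is exactly $\nabla^3 F\le 0$. Concretely, for the discrete bound fix $R$ and set $\phi(t):=F(t\mathbbm{1}_R)$ for $t\in[0,1]$; then $\phi(0)=f(\emptyset)$, $\phi(1)=f(R)$, and $\phi'(1)=\mathbbm{1}_R^{\top}\nabla F(\mathbbm{1}_R)=\sum_{i\in R}B_i(R)$, so the claim becomes $\phi'(1)\le 2\phi(1)$. For the continuous bound, given $x\in[0,1]^n$ set $\phi(t):=F(tx)$; now $\phi''(1)=x^{\top}\nabla^2 F(x)x$, so the claim becomes $\phi''(1)\le 2\phi(1)$. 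In both cases $\phi$ is a polynomial in $t$, $\phi\ge 0$ on $[0,1]$ by non-negativity of $f$, and $\phi'(0)=d^{\top}\nabla F(\vec 0)=\sum_i d_i\bigl(f(\{i\})-f(\emptyset)\bigr)\ge 0$, where $d$ denotes the (non-negative) direction $\mathbbm{1}_R$ or $x$; here I use monotonicity of $f$ (in the application $f\in\metas{\gamma}$, so this is available).

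The key structural input I would invoke is that second-order submodularity means $\nabla^3 F\le 0$ everywhere (as noted in the excerpt). Since the direction $d$ is non-negative, $\phi'''(t)=\sum_{i,j,k}d_i d_j d_k\,\nabla^3_{ijk}F(td)\le 0$ on $[0,1]$, hence $\phi''$ is non-increasing on $[0,1]$. I then plug this into Taylor's theorem with integral remainder, $\phi(1)=\phi(0)+\phi'(0)+\int_0^1(1-t)\phi''(t)\,dt$, and into $\phi'(1)=\phi'(0)+\int_0^1\phi''(t)\,dt$.

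For the continuous claim, monotonicity of $\phi''$ gives $\phi''(t)\ge\phi''(1)$ for $t\in[0,1]$, and since $1-t\ge 0$, $\int_0^1(1-t)\phi''(t)\,dt\ge\phi''(1)\int_0^1(1-t)\,dt=\tfrac12\phi''(1)$; combined with $\phi(0)+\phi'(0)\ge 0$ this yields $\phi(1)\ge\tfrac12\phi''(1)$, i.e.\ $x^{\top}\nabla^2 F(x)x\le 2F(x)$. For the discrete claim I would instead compute $2\phi(1)-\phi'(1)=2\phi(0)+\phi'(0)+\int_0^1(1-2t)\phi''(t)\,dt$, and show the integral is non-negative: writing $\phi''(t)=\phi''(\tfrac12)+\bigl(\phi''(t)-\phi''(\tfrac12)\bigr)$, the first piece integrates against $(1-2t)$ to $0$, while in the second piece $(1-2t)$ and $\phi''(t)-\phi''(\tfrac12)$ always share a sign (both $\ge 0$ for $t<\tfrac12$, both $\le 0$ for $t>\tfrac12$, since $\phi''$ is non-increasing), so the integrand is pointwise $\ge 0$. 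Hence $2\phi(1)-\phi'(1)\ge 2\phi(0)+\phi'(0)\ge 0$, which is $\sum_{i\in R}B_i(R)\le 2f(R)$. (Both bounds also have elementary but slightly less clean derivations via integration by parts of $\int_0^1 tg'(t)\,dt$, in the spirit of Lemma~\ref{lemma:sumGradEws}.)

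I expect the only genuine subtlety to be the hypotheses rather than the calculus: the bare inequality $\sum_{i\in R}B_i(R)\le 2f(R)$ for a merely non-negative (not monotone) $f$ is false unless $f(\emptyset)=0$ — taking $f(\emptyset)$ large relative to the singleton values already breaks it — so I would either add a normalization $f(\emptyset)=0$ or simply state the lemma for $f$ monotone as in $\metas{\gamma}$, which is all that is used later. Everything else is routine one-variable analysis once the two observations "$\phi'''\le 0$" and "$\phi''$ non-increasing $\Rightarrow \int_0^1(1-2t)\phi''(t)\,dt\ge 0$" are identified.
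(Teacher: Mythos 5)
Your proof is correct, and your continuous half is essentially the paper's own argument: both expand $F$ along the segment from $0$ to $x$ via Taylor's theorem, discard $F(0)+x^{\top}\nabla F(0)\ge 0$ using non-negativity and monotonicity, and use non-positivity of the third-order derivatives (the stated equivalent of second-order submodularity) to compare the Hessian term at an interior point of the segment with the one at $x$; you use the integral remainder where the paper uses the Lagrange form, an immaterial difference. The discrete half is where you genuinely diverge. The paper never leaves the set-function level: it applies the discrete integral formula (Lemma~\ref{lemma:discreteIntegral}) to write both $\sum_{i\in R}B_i(R)$ and $2f(R)$ as sums of terms $A_{ij}(R_{j-1})$, and then double-counts using the symmetry $A_{ij}=A_{ji}$, the identity $A_{ii}=0$, and the anti-monotonicity of $A_{ij}(\cdot)$ to show $\sum_{i}\sum_{j\le r}A_{ij}(R_{j-1})\le 2\sum_{i}\sum_{j\le i}A_{ij}(R_{j-1})$. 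You instead run one-variable calculus along $t\mapsto t\mathbbm{1}_R$, using $\phi'''\le 0$ and the sign-pairing bound $\int_0^1(1-2t)\phi''(t)\,dt\ge 0$; this is shorter and treats both halves of the lemma uniformly, at the price of invoking the equivalence between second-order submodularity and $\nabla^3 F\le 0$ and of needing $2\phi(0)+\phi'(0)\ge 0$, whereas the paper's combinatorial route needs only non-negativity of the singleton values. Your caveat about hypotheses is well taken and in fact applies to the paper too: its proof (and the proof of Lemma~\ref{lemma:discreteIntegral} itself) silently uses the normalization $f(\emptyset)=0$, and under that normalization your $\phi'(0)=\sum_{i\in R}f(\{i\})\ge 0$ follows from non-negativity alone, so monotonicity is not actually needed for the first claim.
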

\begin{proof}
	For the first part, without loss of generality let $R=[r]$ (we can always relabel the elements so that this is true) and $R_i=[i]$. By Lemma~\ref{lemma:discreteIntegral}, we have
	\[
	\sum_{i\in R} B_i(R) = \sum_{ i = 1 }^r \big(f(\{ i\})+\sum_{j = 1}^r A_{ij} (R_{j-1})\big).
	\]
	Since $B_i(R_{i}) = B_i(R_{i-1})$, and $f(R_0)=f(\emptyset)=0$ we have
	\[
	2f(R) = 2\sum_{i = 1}^r B_i(R_{i}) = 2\sum_{i = 1}^r \big( f(\{i\}) + \sum_{j = 1}^i A_{ij} (R_{j-1})\big).
	\]
	Moreover, note that 
	\[
	\sum_{i=1}^r \sum_{j=1}^k A_{ij}(R_{j-1}) \leq 2 \sum_{i=1}^r \sum_{j=1}^i A_{ij}(R_{j-1})
	\]
	since
	\begin{align*}
	\sum_{i=1}^r \sum_{j=i+1}^r A_{ij}(R_{j-1}) &
	= \sum_{j=1}^r \sum_{i=1}^{j-1} A_{ij}(R_{j-1}) 
	=  \sum_{j=1}^r \sum_{i=1}^{j-1} A_{ji}(R_{j-1})
	\leq \sum_{j=1}^r \sum_{i=1}^{j-1} A_{ji}(R_{i-1}) 
	 \\ & = \sum_{j=1}^r \sum_{i=1}^{j} A_{ji}(R_{i-1})
	= \sum_{i=1}^r \sum_{j=1}^{i} A_{ij}(R_{j-1}),
	\end{align*}
	where the second equality follows from the fact that $A_{ij}(S) = A_{ji}(S)$ for all $i,j\in [n]$ and $S\subseteq [n]$, and the third equality from the fact that $A_{ii}(S)=0$ for all $i \in [n]$ and $S\subseteq [n]$. The inequality follows since $f$ is second-order submodular and $R_{j-1} \supseteq R_{i-1}$ if $j\geq i$.
	
	By non-negativity we also have that $2f(\{i\})\geq f(\{i\})$. This yields the first part of the lemma.
	
	We now discuss the second part. By the Taylor's Theorem, non-negativity, monotononicity and second-order submodularity, we have
	\[
	F(x) = F(0) + x^{\top} \nabla F(0) + \frac{1}{2} x^{\top} \nabla^2 F(\epsilon x) x \geq \frac{1}{2} x^{\top} \nabla^2 F(\epsilon x) x \geq \frac{1}{2} x^{\top} \nabla^2 F( x) x.
	\]
\end{proof}

Now, we are equipped to improve the approximation factor for meta-submodular functions with additional assumptions.

\begin{reptheorem}{thm:LS-2nd-order}
	Let $f\in \metas{\gamma}$ be second-order submodular
	(that is, $B_i$'s are submodular). 
	Let $\M=([n],\mathcal{I})$ be a matroid of rank $r$ and minimum circuit size of $c>2$. Let $T\in \mathcal{I}$ be an optimum set, i.e.,
	$
	T \in \argmax_{R\in \mathcal{I}} f(R),
	$
	and $S\in \mathcal{I}$ be an $(1+\frac{\epsilon}{n^2})$-approximate local optimum, i.e., for any $i$ and $j$ such that $S-i+j\in \mathcal{I}$,
	$
	(1+\frac{\epsilon}{n^2})f(S)\geq f(S-i+j),
	$
	where $\epsilon>0$ is a constant. Then $f(T)\leq O(\gamma+\frac{\gamma^2}{r}) f(S)$. So Algorithm~\ref{alg:localsearch} gives an $O(\gamma+\frac{\gamma^2}{r})$-approximation. If $f$ is also supermodular then Algorithm~\ref{alg:localsearch} gives an $O(\min\{\gamma + \frac{\gamma^2}{r}, \frac{\gamma r}{c-1} \}) \leq O(\gamma^{3/2})$-approximation.
\end{reptheorem}
\begin{proof}
	Since $f$ is monotone, we assume that $|S|=|T|=r$. By Lemma~\ref{lem:exchangematroid}, there is a bijective mapping $g:S\setminus T\rightarrow T\setminus S$ such that $S-i+g(i)\in\mathcal{I}$ where $i\in S\setminus T$.	Since $S$ is a $(1+\frac{\epsilon}{n^2})$-approximate local optimum, for all $i \in S \setminus T$ we have
	$
	(1+\frac{\epsilon}{n^2})f(S)\geq f(S-i+g(i)).
	$
	That is,
	\begin{equation}
	\label{eq:aux1}
	\frac{\epsilon}{n^2}f(S)+B_i(S-i)\geq B_{g(i)}(S-i).
	\end{equation}
	Using this we get
	\begin{align*}
	& B_{g(i)}(S) =B_{g(i)}(S-i)+A_{ig(i)}(S-i)\leq B_{g(i)}(S-i) +\gamma(\frac{B_{g(i)}(S-i)+B_{i}(S-i)}{r-1}) \\ & \leq \frac{2\gamma + r-1}{r -1} B_{i}(S-i) + \frac{\epsilon(\gamma+r-1)}{(r-1)n^2} f(S)
	= \Big(\frac{2\gamma}{r-1} + 1 \Big) B_{i}(S) + \frac{\epsilon(\gamma+r-1)}{(r-1)n^2} f(S),
	\end{align*}
	where the first equality follows from Lemma~\ref{lemma:discreteIntegral},  the first inequality from $\gamma$-meta-submodularity, and the last equality from $B_i(S) = B_i(S-i)$ for all $i \in [n]$ and $S \subseteq [n]$. Thus,
	\begin{align*}
	\sum_{i\in S\setminus T} B_{g(i)}(S) & 
	\leq \Big(\frac{2\gamma}{r-1} + 1 \Big) \sum_{i\in S\setminus T} B_{i}(S) + |S \setminus T| \cdot \frac{\epsilon(\gamma+r-1)}{(r-1)n^2}  f(S) \\ &
	\leq \Big(\frac{2\gamma}{r-1} + 1 \Big) \sum_{i\in S} B_{i}(S) + \frac{\epsilon(\gamma+r-1)}{(r-1)n} f(S)  \\ &
	\leq \Big(\frac{4\gamma}{r-1} + 2 + o(1) \Big) \cdot f(S).
	\end{align*}
	where the second inequality follows from monotonicity (i.e. $B_i(S) \geq 0$), and the last one follows from Lemma~\ref{lemma:2ndorder-structure}.

	Now, by Taylor's Theorem and the submodularity of the marginal gains of $f$ (i.e. the submodularity of $B_i$'s), $\gamma$-meta submodularity, and the above inequality, we have
	\begin{align*}
	f(T) & \leq f(S\cup T)  =  F(\mathbbm{1}_S+\mathbbm{1}_{T\setminus S}) \leq F(\mathbbm{1}_S)+\mathbbm{1}_{T\setminus S}^{\top}\nabla F(\mathbbm{1}_S) + \frac{1}{2} \mathbbm{1}_{T\setminus S}^{\top}\nabla^2 F(\mathbbm{1}_S) \mathbbm{1}_{T\setminus S} \\ & \leq F(\mathbbm{1}_S)+\Big( 1 + \frac{ \gamma |T \setminus S|}{|S|} \Big) \mathbbm{1}_{T\setminus S}^{\top}\nabla F(\mathbbm{1}_S) \leq F(\mathbbm{1}_S)+( 1 + \gamma ) \mathbbm{1}_{T\setminus S}^{\top}\nabla F(\mathbbm{1}_S) \\
	& = F(\mathbbm{1}_S)+ ( 1 +  \gamma )\sum_{i\in S\setminus T} B_{g(i)}(S) 
	\leq \Big( \frac{4 \gamma^2}{r-1} + \gamma \big( \frac{4}{r-1} + 2 + o(1) \big) + 3 + o(1) \Big) f(S) \\ & 
	= O \Big(\frac{\gamma^2}{r} + \gamma \Big) f(S).
	\end{align*}
	Now, we assume that $f$ is also supermodular. Let $M$ be the maximum weighted matching defined in line $7$ of Algorithm~\ref{alg:localsearch} and $S'$ be the node set of $M$. Let $S\cap S' = \{a_1,\ldots,a_p \}$ and $S' \setminus S = \{b_1,\ldots,b_p \}$ where $\{a_i,b_i\}$'s are the edges of $M$. Also, let $U_i = \{a_1,\ldots,a_i \}$ and $R_i = \{b_1,\ldots,b_i \}$, where $U_0=R_0=\emptyset$. Then since $M$ is a maximum weighted matching, we have
	\begin{align}
	\label{eq:localsecond1}
	\sum_{i\in S\setminus T} A_{i g(i)} (S) \leq  \frac{|S \setminus T|}{\floor{\frac{c-1}{2}}} \sum_{i=1}^p A_{a_i b_i} (S) \leq \frac{3r}{c-1} \sum_{i=1}^p A_{a_i b_i} (S),
	\end{align}
	where the second inequality follows from $\frac{c-1}{3}\leq \floor{\frac{c-1}{2}}$ (when $c>2$) and the assumption that $f$ is supermodular, which implies $A_{a_ib_i}$'s are non-negative.
	We also have that
	\begin{align}
	\label{eq:localsecond2}
	f(S') & = \sum_{i = 1}^p (f(U_i\cup R_i)-f(U_{i-1}\cup R_{i-1})) = \sum_{i=1}^p (B_{a_i}(U_{i-1}\cup R_{i-1})+B_{b_i}(U_{i-1}\cup R_{i-1}+a_i)) \nonumber \\ 
	& = \sum_{i=1}^p \Big(B_{a_i}(U_{i-1}\cup R_{i-1})+f(\{b_i\})+\sum_{j=1}^i A_{b_i a_j} (U_{j-1}) + \sum_{j=1}^{i-1} A_{b_i b_j} (U_{i-1} + a_i \cup R_{j-1} )\Big)
	\nonumber \\ 
	& = \sum_{i=1}^p \Big(B_{a_i}(U_{i-1}\cup R_{i-1})+A_{b_i a_i} (U_{i-1})+f(\{b_i\})+\sum_{j=1}^{i-1} A_{b_i a_j} (U_{j-1})
	\nonumber \\ & + \sum_{j=1}^{i-1} A_{b_i b_j} (U_{i-1}\cup R_{j-1} + a_i)\Big)
	\geq \sum_{i=1}^p A_{a_i b_i} (U_{i-1}) \geq \sum_{i=1}^p A_{a_i b_i} (S).
	\end{align}
	where the third equality 
	follows from Lemma~\ref{lemma:discreteIntegral}, the first inequality from monotonocity and supermodularity (i.e. all the $B_i$ and $A_{ij}$ terms are non-negative), and the last inequality from second-order submodularity and the fact that $U_i\subseteq S$ for any $i=1,\ldots,p$.

	Hence, by combining (\ref{eq:localsecond1}) and (\ref{eq:localsecond2}), we get 
	\begin{align}
	\label{eq:localsecond3}
	\sum_{i\in S\setminus T} A_{i g(i)} (S-i)= \sum_{i\in S\setminus T} A_{i g(i)} (S) \leq \frac{3r}{c-1} \sum_{i=1}^p A_{a_i b_i} (S) \leq \frac{3r}{c-1} f(S').
	\end{align}
	We have
	\begin{align*}
	f(T) & \leq f(S\cup T)  =  F(\mathbbm{1}_S+\mathbbm{1}_{T\setminus S}) \leq F(\mathbbm{1}_S)+\mathbbm{1}_{T\setminus S}^{\top}\nabla F(\mathbbm{1}_S) + \frac{1}{2} \mathbbm{1}_{T\setminus S}^{\top}\nabla^2 F(\mathbbm{1}_S) \mathbbm{1}_{T\setminus S} \\ & \leq F(\mathbbm{1}_S)+\Big( 1 + \frac{ \gamma |T\setminus S|}{|S|} \Big) \mathbbm{1}_{T\setminus S}^{\top}\nabla F(\mathbbm{1}_S) \leq F(\mathbbm{1}_S)+( 1 + \gamma ) \mathbbm{1}_{T\setminus S}^{\top}\nabla F(\mathbbm{1}_S) \\
	& = F(\mathbbm{1}_S)+ ( 1 +  \gamma )\sum_{i\in S\setminus T} B_{g(i)}(S) \\ & = f(S) + (1+\gamma)(\sum_{i\in S\setminus T} B_{g(i)}(S-i)+\sum_{i\in S\setminus T} A_{ig(i)}(S-i)) \\ & \leq
	f(S)+(1+\gamma)\Big(\frac{r\epsilon}{n^2}f(S)+\sum_{i\in S\setminus T}B_i(S-i)+\frac{3r}{c-1} f(S') \Big) \\ & \leq f(S)+(1+\gamma)\Big(\frac{r\epsilon}{n^2}f(S)+2f(S)+\frac{3r}{c-1} f(S')\Big) = O\big(\frac{\gamma r}{c-1} \big) \max \{f(S),f(S')\}.
	\end{align*}
	where the second inequality follows from Taylor's Theorem and second-order submodularity (i.e. the non-positivity of the third order derivatives), the third inequality from $\gamma$-meta submodularity, the fifth inequality from (\ref{eq:aux1}) and (\ref{eq:localsecond3}), and the second to last inequality from Lemma~\ref{lemma:2ndorder-structure}. 
	We then have that if $r\leq \sqrt{\gamma}$ then $\gamma r = O(\gamma^{3/2})$, and if $r \geq \sqrt{\gamma}$ then $\frac{\gamma^2}{r}+\gamma= O(\gamma^{3/2})$. Therefore, $f(T)\leq O(\gamma^{3/2})\max \{ f(S), f(S')\}$.
\end{proof}

\section{Conclusions}
Maximizing a set function  subject to cardinality (or matroid) constraint can capture problems with sweeping applications. The setting is too general, however, to allow   algorithms with good performance on all data sets.  It remains an interesting direction to classify those set functions which lead to tractable formulations.
This is the key question considered in this work. We provide a ``spectrum
of tractability'' by defining a new meta-submodularity parameter $\gamma$ associated with any monotone set function.
These families capture for low values of $\gamma$ several widely known tractable classes, such
as submodular functions ($\gamma=0$) or metric diversity ($\gamma=1$).
We then show that 
there exist efficient (in theory and practice) algorithms which have maximization approximation guarantees which are function of $\gamma$  alone. 

\bibliographystyle{plain}
\bibliography{references_diversity}
\newpage
\appendix
\section{Appendix: Preliminaries} 
\label{app:preliminaries}

The following result describes the connection between the terms $A_{ij}$ and $B_i$. One can see it as a discrete integral formula.

\begin{replemma}{lemma:discreteIntegral}
Let $f:2^{[n]} \to \R$, $i\in [n]$, and $R=\{v_1,\ldots, v_r \}\subseteq [n]$. Moreover, let $R_m=\{v_1,\ldots, v_m \}$ for $1\leq m\leq r$ and $R_0=\emptyset$. Then
\[
B_i(R) = f(\{i\}) + \sum_{j=1}^{r} A_{i v_j} (R_{j-1}).
\]
\end{replemma}
\begin{proof}
First, we consider the case where $i \notin R$. Then $B_i(R) = f(R+i)-f(R)$ and the right hand side is equal to
\begin{align*}
& f(R_{r-1} + i + v_r) - f(R_{r-1} - i + v_r) - f(R_{r-1} + i - v_r) + f(R_{r-1} - i - v_r)
\\ & + f(R_{r-2} + i + v_{r-1}) - f(R_{r-2} - i + v_{r-1}) - f(R_{r-2} + i - v_{r-1}) + f(R_{r-2} - i - v_{r-1})
\\ & + \cdots
\\ & + f(R_{1} + i + v_{2}) - f(R_{1} - i + v_{2}) - f(R_{1} + i - v_{2}) + f(R_{1} - i - v_{2})
\\ & + f(R_{0} + i + v_{1}) - f(R_{0} - i + v_{1}) - f(R_{0} + i - v_{1}) + f(R_{0} - i - v_{1})
\\ & + f(\{i\})
\\ & = f(R + i) - f(R) - f(R_{r-1} + i) + f(R_{r-1})
\\ & + f(R_{r-1} + i) - f(R_{r-1}) - f(R_{r-2} + i) + f(R_{r-2})
\\ & + \cdots
\\ & + f(R_{2} + i) - f(R_{2}) - f(R_{1} + i) + f(R_{1})
\\ & + f(R_{1} + i) - f(R_{1}) - f(R_{0} + i) + f(R_{0})
\\ & + f(\{i\})
\\ & = f(R+i) - f(R)
\end{align*}
The last equality holds because the third and the fourth elements of each line cancel out the first and the second element of the next line (except for the last two lines), respectively. For the last two lines, note that $f(R_0)=f(\emptyset) = 0$ and $f(R_0+i)=f(\{i \})$.

Now, we consider the case that $i \in R$. Let $i=v_j$. Then $B_i(R)=f(R)-f(R-i)$ and the right hand side is equal to
\begin{align*}
& f(R_{r-1} + i + v_r) - f(R_{r-1} - i + v_r) - f(R_{r-1} + i - v_r) + f(R_{r-1} - i - v_r)
\\ & + f(R_{r-2} + i + v_{r-1}) - f(R_{r-2} - i + v_{r-1}) - f(R_{r-2} + i - v_{r-1}) + f(R_{r-2} - i - v_{r-1})
\\ & + \cdots
\\ & + f(R_{j} + i + v_{j+1}) - f(R_{j} - i + v_{j+1}) - f(R_{j} + i - v_{j+1}) + f(R_{j} - i - v_{j+1})
\\ & + f(R_{j-1} + i + v_{j}) - f(R_{j-1} - i + v_j) - f(R_{j-1} + i - v_j) + f(R_{j-1} - i - v_j)
\\ & + f(R_{j-2} + i + v_{j-1}) - f(R_{j-2} - i + v_{j-1}) - f(R_{j-2} + i - v_{j-1}) + f(R_{j-2} - i - v_{j-1})
\\ & + \cdots
\\ & + f(R_{1} + i + v_{2}) - f(R_{1} - i + v_{2}) - f(R_{1} + i - v_{2}) + f(R_{1} - i - v_{2})
\\ & + f(R_{0} + i + v_{1}) - f(R_{0} - i + v_{1}) - f(R_{0} + i - v_{1}) + f(R_{0} - i - v_{1})
\\ & + f(\{i\})
\\ & = f(R) - f(R-i) - f(R_{r-1}) + f(R_{r-1} - i)
\\ & + f(R_{r-1}) - f(R_{r-1} - i) - f(R_{r-2}) + f(R_{r-2} - i)
\\ & + \cdots
\\ & + f(R_{j+1}) - f(R_{j+1} - i) - f(R_{j}) + f(R_{j-1})
\\ & + f(R_{j}) - f(R_{j}) - f(R_{j-1}) + f(R_{j-1})
\\ & + f(R_{j}) - f(R_{j-1}) - f(R_{j-2} + i) + f(R_{j-2})
\\ & + \cdots
\\ & + f(R_{2} + i) - f(R_{2}) - f(R_{1} + i) + f(R_{1})
\\ & + f(R_{1} + i) - f(R_{1}) - f(R_{0} + i) + f(R_{0})
\\ & + f(\{i\})
\\ & = f(R) - f(R-i).
\end{align*}
Like before the last equality holds because the last two terms of each line cancels out the first two terms of the next line except for the last two lines, the first $f(R_j)$ line and the $f(R_{j+1})$ line. The terms of the first $f(R_j)$ line cancel each other out, while the last two terms of the $f(R_{j+1})$ line cancel the first two terms of the second $f(R_{j})$ line.
\end{proof}

The following result connects the first and second order marginal gains $B_i$ and $A_{ij}$, to the first and second order partial derivatives of the multilinear extension.

\begin{lemma}[\cite{vondrak2008optimal}]
\label{lemma:gradientbi}
Let $f$ be a set function and $F$ its multilinear function. Then for any $x=(x_1,\ldots,x_n)\in [0,1]^n$ and $i,j \in [n]$,
\begin{align*}
\nabla_i F(x) & =
\E_{R \sim x}[B_i(R)] =
\sum_{R \subseteq [n]} B_i(R)p_x(R) \\ & =
\sum_{R\subseteq [n] - i} [f(R+i)-f(R)] \prod_{v\in R} x_v \prod_{v\in [n]\setminus(R+i)} (1-x_v),
\end{align*}
and, 
\begin{align*}
& \nabla_{ij}^2 F(x) = 
\E_{R \sim x}[A_{ij}(R)] =
\sum_{R \subseteq [n]} A_{ij}(R)p_x(R) \\ & =
\sum_{R\subseteq [n]-i-j} [f(R+i+j)-f(R+i)-f(R+j)+f(R)] \prod_{v\in R} x_v \prod_{v\in [n]\setminus(R+i+j)} (1-x_v).
\end{align*}
\end{lemma}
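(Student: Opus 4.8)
The plan is to differentiate the explicit polynomial $F(x)=\sum_{R\subseteq[n]}f(R)p_x(R)$ directly, using that $p_x$ is multilinear in $x$, and then to match the resulting closed form with the expectation form by exploiting the fact that $B_i(S)$ depends on $S$ only through $S\setminus\{i\}$ (and $A_{ij}(S)$ only through $S\setminus\{i,j\}$). Since $\E_{R\sim x}[B_i(R)]=\sum_R B_i(R)p_x(R)$ and $\E_{R\sim x}[A_{ij}(R)]=\sum_R A_{ij}(R)p_x(R)$ hold by definition of expectation, the only real content is to show that these equal the stated closed-form sums.

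First I would treat $\nabla_i F$. Fixing $i\in[n]$, I partition the subsets of $[n]$ into the pairs $\{R,R+i\}$ ranging over $R\subseteq[n]-i$. For such $R$ we have $p_x(R)=(1-x_i)\,q_x^{(i)}(R)$ and $p_x(R+i)=x_i\,q_x^{(i)}(R)$, where $q_x^{(i)}(R):=\prod_{v\in R}x_v\prod_{v\in[n]\setminus(R+i)}(1-x_v)$ does not involve $x_i$. Hence
\[
F(x)=\sum_{R\subseteq[n]-i}\big[(1-x_i)f(R)+x_i f(R+i)\big]\,q_x^{(i)}(R),
\]
and since each bracket is affine in $x_i$ with derivative $f(R+i)-f(R)$, term-by-term differentiation gives $\nabla_i F(x)=\sum_{R\subseteq[n]-i}[f(R+i)-f(R)]\,q_x^{(i)}(R)$, which is the third displayed expression. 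To recover the expectation form, note $B_i(R)=B_i(R+i)=f(R+i)-f(R)$ for every $R\subseteq[n]-i$, and $p_x(R)+p_x(R+i)=q_x^{(i)}(R)$, so $\sum_R B_i(R)p_x(R)=\sum_{R\subseteq[n]-i}(f(R+i)-f(R))\,q_x^{(i)}(R)=\nabla_i F(x)$.

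For the Hessian I would iterate the same step. If $i=j$ then $F$ is affine in $x_i$, so $\nabla^2_{ii}F\equiv 0$, consistent with $A_{ii}(S)=0$. For $i\ne j$, I partition the subsets of $[n]$ into the four-element orbits $\{R,R+i,R+j,R+i+j\}$ over $R\subseteq[n]-i-j$, and express $p_x$ of these four sets as $(1-x_i)(1-x_j)$, $x_i(1-x_j)$, $(1-x_i)x_j$, $x_ix_j$ times the common factor $q_x^{(ij)}(R):=\prod_{v\in R}x_v\prod_{v\in[n]\setminus(R+i+j)}(1-x_v)$, which involves neither $x_i$ nor $x_j$. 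Restricted to these terms, $F$ is bilinear in $(x_i,x_j)$ with mixed partial $f(R+i+j)-f(R+i)-f(R+j)+f(R)$, giving
\[
\nabla^2_{ij}F(x)=\sum_{R\subseteq[n]-i-j}\big[f(R+i+j)-f(R+i)-f(R+j)+f(R)\big]\,q_x^{(ij)}(R).
\]
Again $A_{ij}(S)$ is insensitive to membership of $i,j$, so its value on all four sets of an orbit is $f(R+i+j)-f(R+i)-f(R+j)+f(R)$, and $p_x(R)+p_x(R+i)+p_x(R+j)+p_x(R+i+j)=q_x^{(ij)}(R)$; summing yields $\sum_R A_{ij}(R)p_x(R)=\nabla^2_{ij}F(x)$.

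This argument is entirely routine; there is no genuine obstacle, only bookkeeping. The one point that needs care is the passage between the expectation form and the explicit sum: it rests on $B_i(S)$ (resp.\ $A_{ij}(S)$) being unchanged whether or not $i$ (resp.\ $i,j$) lies in $S$, which is precisely what makes the per-orbit sums of $p_x$ collapse to the product-over-complement factors $q_x^{(i)}(R)$ and $q_x^{(ij)}(R)$. Keeping the two index sets in each product straight — $v\in R$ versus $v\in[n]\setminus(R+i)$, and likewise with $R+i+j$ — is the only place one can slip.
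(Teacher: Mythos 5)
Your proposal is correct and follows essentially the same route as the paper's proof: group subsets by their intersection with $\{i\}$ (resp.\ $\{i,j\}$), exploit that $F$ is affine (resp.\ bilinear) in the corresponding coordinates to differentiate term by term, and then recombine into the expectation form using $B_i(R)=B_i(R+i)$ and $A_{ij}$ being constant on each orbit together with $x_i+(1-x_i)=1$. The only cosmetic difference is that you package the regrouping as orbits with a common factor $q_x^{(i)}$, $q_x^{(ij)}$ and note the trivial diagonal case $i=j$, whereas the paper expands the four cases explicitly; the content is the same.
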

\begin{proof}
First of all, note that if $i\notin R$ then $B_i(R+i)=B_i(R)$. Now, we write the multilinear function
\begin{align*}
F(x) & = \sum_{R\subseteq [n]} f(R) \prod_{v\in R} x_v \prod_{v\in [n]\setminus R} (1-x_v) \\ & = \sum_{R\subseteq [n] - i} (f(R+i)x_i+f(R)(1-x_i)) \prod_{v\in R} x_v \prod_{v\in [n]\setminus (R+i)} (1-x_v).
\end{align*}
Therefore 
\begin{align*}
\nabla_i F(x) & = \sum_{R\subseteq [n] - i} (f(R+i)-f(R)) \prod_{v\in R} x_v \prod_{v\in [n]\setminus(R+i)} (1-x_v) \\ & = x_i \sum_{R\subseteq [n] - i} (f(R+i)-f(R)) \prod_{v\in R} x_v \prod_{v\in [n]\setminus(R+i)} (1-x_v) \\ & + (1-x_i) \sum_{R\subseteq [n] - i} (f(R+i)-f(R)) \prod_{v\in R} x_v \prod_{v\in [n]\setminus(R+i)} (1-x_v) \\ & = \sum_{R\subseteq [n] - i} (f(R+i)-f(R)) \prod_{v\in R+i} x_v \prod_{v\in [n]\setminus(R+i)} (1-x_v) \\ & + \sum_{R\subseteq [n] - i} (f(R+i)-f(R)) \prod_{v\in R} x_v \prod_{v\in [n]\setminus R} (1-x_v) \\ & = \sum_{R\subseteq [n] - i} B_i(R+i) p_x(R+i) + \sum_{R\subseteq [n] - i} B_i(R) p_x(R) \\ & = \sum_{R \subseteq [n]} B_i(R)p_x(R).
\end{align*}
Now, to prove the other part of the lemma, we write the multilinear function again.
\begin{align*}
F(x) &  = \sum_{R\subseteq [n]} f(R) \prod_{v\in R} x_v \prod_{v\in [n]\setminus R} (1-x_v) \\ & = x_i x_j \sum_{R\subseteq [n]-i-j} f(R+i+j) \prod_{v\in R} x_v \prod_{v\in [n]\setminus(R+i+j)} (1-x_v) \\ & + x_i (1-x_j) \sum_{R\subseteq [n]-i-j} f(R+i) \prod_{v\in R} x_v \prod_{v\in [n]\setminus(R+i+j)} (1-x_v) \\ & + (1-x_i) x_j \sum_{R\subseteq [n]-i-j} f(R+j) \prod_{v\in R} x_v \prod_{v\in [n]\setminus(R+i+j)} (1-x_v) \\ & + (1-x_i) (1-x_j) \sum_{R\subseteq [n]-i-j} f(R) \prod_{v\in R} x_v \prod_{v\in [n]\setminus(R+i+j)} (1-x_v).
\end{align*}
Therefore, by using the fact that $x_i x_j + (1-x_i) x_j + x_i (1-x_j) + (1-x_i)(1-x_j) = 1$, and $A_{ij}(R+i+j)=A_{ij}(R+i)=A_{ij}(R+j)=A_{ij}(R) = f(R+i+j) - f(R+i) - f(R+j) + f(R)$ for $R\subseteq [n]-i-j$, we have
\begin{align*}
\nabla_{ij}^2 F(x) & = \sum_{R\subseteq [n]-i-j} (f(R+i+j)-f(R+i)-f(R+j)+f(R)) \prod_{v\in R} x_v \prod_{v\in [n]\setminus(R+i+j)} (1-x_v)
\\ & = x_i x_j \sum_{R\subseteq [n]-i-j} A_{ij}(R+i+j) \prod_{v\in R} x_v \prod_{v\in [n]\setminus(R+i+j)} (1-x_v)
\\ & + (1-x_i) x_j \sum_{R\subseteq [n]-i-j} A_{ij}(R+j) \prod_{v\in R} x_v \prod_{v\in [n]\setminus(R+i+j)} (1-x_v)
\\ & + x_i (1-x_j) \sum_{R\subseteq [n]-i-j} A_{ij}(R+i) \prod_{v\in R} x_v \prod_{v\in [n]\setminus(R+i+j)} (1-x_v)
\\ & + (1-x_i) (1-x_j) \sum_{R\subseteq [n]-i-j} A_{ij}(R) \prod_{v\in R} x_v \prod_{v\in [n]\setminus(R+i+j)} (1-x_v)
\\ & = \sum_{R\subseteq [n]-i-j} A_{ij}(R+i+j) \prod_{v\in R+i+j} x_v \prod_{v\in [n]\setminus R} (1-x_v)
\\ & + \sum_{R\subseteq [n]-i-j} A_{ij}(R+j) \prod_{v\in R+j} x_v \prod_{v\in [n]\setminus(R+i)} (1-x_v)
\\ & + \sum_{R\subseteq [n]-i-j} A_{ij}(R+i) \prod_{v\in R+i} x_v \prod_{v\in V\setminus(R+i)} (1-x_v)
\\ & + \sum_{R\subseteq [n]-i-j} A_{ij}(R) \prod_{v\in R} x_v \prod_{v\in [n]\setminus R} (1-x_v)
\\ & = \sum_{R\subseteq [n]-i-j} A_{ij}(R+i+j) p_x(R+i+j)
\\ & + \sum_{R\subseteq [n]-i-j} A_{ij}(R+j) p_x(R+j)
\\ & + \sum_{R\subseteq [n]-i-j} A_{ij}(R+i) p_x(R+i)
\\ & + \sum_{R\subseteq [n]-i-j} A_{ij}(R) p_x(R)
\\ & = \sum_{R\subseteq [n]} A_{ij}(R) p_x(R).
\end{align*}
\end{proof}

\section{Appendix: Meta-Submodular Family}
\label{app:metasubfamily}

In this section, we discuss the meta-submodularity parameter of the class of meta-submodular functions (defined by Kleinberg et al.~\cite{KleinbergPR98}) and the class of proportionally submodular functions (defined by Borodin et al.~\cite{borodin2015proportionally}).

\begin{proposition}
\label{prop:0metakleinberg}
$f$ is $0$-meta-submodular if and only if it is meta-submodular (by Kleinberg et al. definition~\cite{KleinbergPR98}).
\end{proposition}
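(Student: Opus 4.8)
The plan is to prove Proposition~\ref{prop:0metakleinberg} by unwinding both definitions and showing they coincide. First I would recall the definition of meta-submodularity from Kleinberg et al.~\cite{KleinbergPR98}. That definition is stated (in their paper) in terms of a catalog of conditions on marginal gains, but the operative condition for monotone functions is precisely that for every nonempty $S\subseteq[n]$ and every pair $i,j\in[n]$, adding $j$ never helps the marginal of $i$ more when $S$ is larger --- equivalently $A_{ij}(S)\le 0$ for all nonempty $S$. So the core of the proof is to observe that the $\gamma=0$ specialization of inequality~(\ref{eqn:meta}),
\[
A_{ij}(S)\le 0\cdot\frac{B_i(S)+B_j(S)}{|S|}=0,
\]
is literally the same condition.

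The main step is therefore a careful matching of the two notions of $A_{ij}$. I would point out that in our notation $A_{ij}(S)=f(S+i+j)-f(S+i-j)-f(S-i+j)+f(S-i-j)$, which agrees with the second-difference used by Kleinberg et al.\ regardless of whether $i$ or $j$ lies in $S$ (this robustness was already noted after the Definition of $A_{ij}$). So for $\gamma=0$ our requirement ``$A_{ij}(S)\le 0$ for all nonempty $S$ and all $i,j$'' is exactly their requirement. One small wrinkle: our Definition of $\gamma$-MS quantifies only over \emph{nonempty} $S$, so I would check that Kleinberg et al.'s definition also imposes no constraint from $S=\emptyset$ (or that the $S=\emptyset$ case is vacuous/automatically implied), so the two ranges of quantification agree; if their definition does include $S=\emptyset$, I would note that $A_{ij}(\emptyset)=f(\{i,j\})-f(\{i\})-f(\{j\})+f(\emptyset)$ and argue this is subsumed. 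I would handle both directions of the ``if and only if'' symmetrically: a $0$-MS function satisfies $A_{ij}(S)\le0$ hence is meta-submodular, and a meta-submodular function satisfies $A_{ij}(S)\le0$ hence trivially satisfies~(\ref{eqn:meta}) with $\gamma=0$ since the right-hand side is $0$.

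The main obstacle is purely expository rather than mathematical: making sure the reader accepts that ``meta-submodular'' in the sense of~\cite{KleinbergPR98} reduces, for monotone set functions, to the clean statement ``all second-order differences are nonpositive.'' I would therefore spend most of the write-up recalling their definition precisely and reconciling any notational mismatch (their $A_{ij}$ vs.\ ours, their quantifier range vs.\ ours). Once that is done, the proof is a one-line consequence of substituting $\gamma=0$ into~(\ref{eqn:meta}). I would also remark, as the paper does in the surrounding text, that this class properly contains all submodular functions, since submodularity is exactly $A_{ij}(S)\le 0$ for \emph{all} $S$ (including empty), which is a slightly stronger demand --- but this remark is not needed for the proof of the proposition itself and I would keep it as an aside.
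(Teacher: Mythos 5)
Your proposal matches the paper's proof: the paper likewise quotes Kleinberg et al.'s characterization (marginal gains non-increasing over nonempty supersets), rewrites it as the single-element inequality $f(S+i)-f(S)\geq f(S+i+j)-f(S+j)$ for nonempty $S$ and $i\neq j\notin S$, and then runs exactly the chain of equivalences you describe to identify this with $A_{ij}(S)\leq 0$ for all nonempty $S$, i.e.\ the $\gamma=0$ case of (\ref{eqn:meta}). The reconciliation you flag as the ``wrinkle'' (what happens when $i$ or $j$ lies in $S$, and the range of quantification over $S$) is precisely the bookkeeping the paper's equivalence chain carries out, so your plan is essentially the same argument.
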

\begin{proof}
Kleinberg et al \cite{KleinbergPR98} show that a set function $f$ is meta-submodular if and only if
\begin{equation*}
f(S+i) - f(S) \geq f(T+i) - f(T), \;\;  \forall \emptyset \neq S \subseteq T, \; \forall i \notin T.
\end{equation*}
The above is clearly equivalent to
\begin{equation}
\label{eqn: Kleinberg-ms}
f(S+i) - f(S) \geq f(S+j+i) - f(S+j), \;\;  \forall S \neq \emptyset, \; \forall i \neq j \notin S.
\end{equation}
Then
\begin{align*}
    & f \mbox{ is 0-meta submodular}\\
    \iff & A_{ij}(S) \leq 0, \;\; \forall S \neq \emptyset, \; \forall i,j \in V\\
    \iff & f(S+i+j)-f(S+i)-f(S+j)+f(S) \leq 0, \;\;  \forall S \neq \emptyset, \; \forall i,j \in V \\
    \iff & f(S+i) - f(S) \geq f(S+j+i) - f(S+j), \;\;  \forall S \neq \emptyset, \; \forall i,j \in V \\
    \iff & f(S+i) - f(S) \geq f(S+j+i) - f(S+j), \;\;  \forall S \neq \emptyset, \; \forall i \neq j \notin S \\
    \iff & (\ref{eqn: Kleinberg-ms}) \mbox{ holds}.
\end{align*}

\end{proof}

\begin{proposition}
\label{prop:borodinweakmeta}
Any monotone propotionally submodular function is $1$-meta-submodular.
\end{proposition}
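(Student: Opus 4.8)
The plan is to reduce the $\gamma=1$ instance of inequality~(\ref{eqn:meta}) to a single application of the defining inequality of proportional submodularity, with a short case analysis on the membership of $i,j$ in $S$. Recall (cf.~\cite{borodin2015proportionally}) that a set function $f$ is proportionally submodular iff
\[
|B|\,f(A) + |A|\,f(B) \;\ge\; |A\cap B|\,f(A\cup B) + |A\cup B|\,f(A\cap B) \qquad (\star)
\]
for all $A,B\subseteq[n]$. Fix a nonempty $S$ and $i,j\in[n]$. When $i=j$ we have $A_{ii}(S)=0$ and $B_i(S)+B_j(S)=2B_i(S)\ge 0$ by monotonicity, so~(\ref{eqn:meta}) holds trivially; hence assume $i\neq j$.

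First I would record the elementary identity, valid for any set function,
\[
f(S+i+j)-f(S-i-j)=\varepsilon\,A_{ij}(S)+B_i(S)+B_j(S),
\]
where $\varepsilon:=\mathbbm{1}[i\notin S]+\mathbbm{1}[j\notin S]-1\in\{-1,0,1\}$; this is checked by expanding both sides in terms of the at most four values $f(S+i+j),f(S+i-j),f(S-i+j),f(S-i-j)$ in each of the cases $i,j\notin S$; exactly one of $i,j$ in $S$; and $i,j\in S$. The second ingredient is that the two ``mixed corners'' $A:=(S\cup\{i\})\setminus\{j\}=S+i-j$ and $B:=(S\cup\{j\})\setminus\{i\}=S-i+j$ satisfy $A\cup B=S+i+j$, $A\cap B=S-i-j$, and $|A|=|B|=|S|+\varepsilon=:\ell$, $|A\cup B|=\ell+1$, $|A\cap B|=\ell-1$ (the same $\varepsilon$) — again a one-line count in each case.

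Feeding this $A,B$ into $(\star)$ gives $\ell\big(f(S+i-j)+f(S-i+j)\big)\ge(\ell-1)f(S+i+j)+(\ell+1)f(S-i-j)$. Substituting $f(S+i-j)+f(S-i+j)=f(S+i+j)+f(S-i-j)-A_{ij}(S)$ (the definition of $A_{ij}$) and cancelling yields $\ell\,A_{ij}(S)\le f(S+i+j)-f(S-i-j)$; then plugging in the identity above gives $(\ell-\varepsilon)A_{ij}(S)\le B_i(S)+B_j(S)$, i.e.\ $|S|\,A_{ij}(S)\le B_i(S)+B_j(S)$ since $\ell-\varepsilon=|S|$. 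Dividing by $|S|>0$ proves~(\ref{eqn:meta}) with $\gamma=1$.

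I do not anticipate a genuine obstacle here: the only idea needed is to apply $(\star)$ at the pair of sets obtained from $S$ by turning $i$ on / $j$ off and $i$ off / $j$ on, after which everything is bookkeeping. The one point requiring care is the boundary of the case analysis (e.g., when $i,j\in S$ and $|S|=2$ the set $A\cap B$ is empty, which is harmless since $f(\emptyset)=0$ and $|A\cap B|=0$); alternatively one can skip the unified treatment and verify the three cases directly, applying $(\star)$ with $(A,B)$ equal to $(S{+}i,\,S{+}j)$, $(S,\,S{-}i{+}j)$, and $(S{-}i,\,S{-}j)$ respectively and rearranging each time to~(\ref{eqn:meta}).
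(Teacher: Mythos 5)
Your proof is correct and is essentially the paper's own argument: the paper proves this proposition by a three-way case analysis on the membership of $i,j$ in $S$, applying the proportional submodularity inequality to the pairs $(S+i,\,S+j)$, $(S,\,S-i+j)$, and $(S-i,\,S-j)$ and rearranging — exactly the specializations of your unified choice $A=S+i-j$, $B=S-i+j$. Your $\varepsilon$-bookkeeping merely packages those three cases (which you also list at the end) into a single computation, so there is no substantive difference.
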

\begin{proof}
The proof is by case analysis.
\begin{itemize}
\item If $i,j\notin R$ then using the proportional submodularity property we have
\[
(|R|+2)f(R)+(|R|)f(R+i+j)\leq (|R|+1)f(R+i)+(|R|+1)f(R+j),
\]
which means
\[
|R| \cdot (f(R)+f(R+i+j)-f(R+i)-f(R+j))\leq f(R+i)+f(R+j)-2f(R).
\]
Hence
\begin{align*}
& f(R+i+j)-f(R+i-j)-f(R+j-i)+f(R-i-j) \\ & = f(R+i+j)-f(R+i)-f(R+j)+f(R)\\ & \leq \frac{f(R+i)-f(R)+f(R+j)-f(R)}{|R|}\\ & = \frac{f(R+i)-f(R-i)+f(R+j)-f(R-j)}{|R|}.
\end{align*}
\item If $i,j\in R$ then by proportional submodularity we have
\[(|R|-2)f(R)+(|R|)f(R-i-j)\leq (|R|-1)f(R-i)+(|R|-1)f(R-j),
\]
which means
\[
|R| \cdot (f(R)+f(R-i-j)-f(R-i)-f(R-j))\leq 2f(R)-f(R-i)-f(R-j).
\]
Hence
\begin{align*}
& f(R+i+j)-f(R+i-j)-f(R+j-i)+f(R-i-j) \\ & = f(R)-f(R-j)-f(R-i)+f(R-i-j)\\ & \leq \frac{f(R)-f(R-i)+f(R)-f(R-j)}{|R|} \\ & = \frac{f(R+i)-f(R-i)+f(R+j)-f(R-j)}{|R|}.
\end{align*}
\item If $i\in R$ and $j\notin R$ then using the proportional submodularity property we have
\[(|R|-1)f(R+j)+(|R|+1)f(R-i)\leq (|R|)f(R)+(|R|)f(R+j-i),
\]
which means
\begin{align*}
& |R| \cdot (f(R+j)+f(R-i)-f(R)-f(R+j-i))\leq f(R+j)-f(R-i) \\ & =  f(R+j) - f(R-j) + f(R+i) - f(R-i),
\end{align*}
where the equality is correct because $f(R)=f(R-j)=f(R+i)$.
Hence
\begin{align*}
& f(R+i+j)-f(R+i-j)-f(R+j-i)+f(R-i-j) \\ & = f(R+j)-f(R)-f(R+j-i)+f(R-i)\\ & \leq \frac{f(R+j)-f(R-i)}{|R|}\\ & = \frac{f(R+i)-f(R-i)+f(R+j)-f(R-j)}{|R|}.
\end{align*}
\end{itemize}
\end{proof}

\begin{proposition}
\label{prop:semimetricmeta}
Let $g(R):=\sum_{q\in R} g(q)$ be a non-negative modular function and $d(R)=\sum_{\{q,q'\}\subseteq R} A(q,q')$ be a diversity function such that $A$ is a $\gamma$-semi-metric distance and $\gamma\geq 1$. Then $f(R) := d(R) + g(R)$ is a $\gamma$-MS function.
\end{proposition}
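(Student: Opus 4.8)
The plan is to use the linearity of the difference operators $B_i(\cdot)$ and $A_{ij}(\cdot)$ in $f$, peel off the modular part $g$, and reduce the claim to a one-line averaging of the $\gamma$-semi-metric inequality.

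\textbf{Step 1 (peel off $g$).} Since $B_i$ and $A_{ij}$ are linear in $f$, we have $A_{ij}^f(S)=A_{ij}^d(S)+A_{ij}^g(S)$ and $B_i^f(S)=B_i^d(S)+B_i^g(S)$. Modularity of $g$ gives $A_{ij}^g(S)=0$, and non-negativity of $g$ gives $B_i^g(S)=g(i)\ge 0$. Hence it suffices to prove
\[
A_{ij}^d(S)\ \le\ \gamma\cdot\frac{B_i^d(S)+B_j^d(S)}{|S|}
\qquad(\forall\, \emptyset\neq S\subseteq[n],\ \forall\, i,j\in[n]);
\]
then, using $\gamma\ge 0$ and $g(i)+g(j)\ge 0$, adding $0$ to the left-hand side and the non-negative quantity $\gamma(g(i)+g(j))/|S|$ to the right-hand side yields the $\gamma$-MS inequality for $f=d+g$.

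\textbf{Step 2 (closed forms for $d$).} I would invoke the facts (already recorded in the excerpt) that $A_{ij}^d(S)=A(i,j)$ for all $S$, and $B_i^d(S)=\sum_{u\in S-i}A(i,u)$. Because $A$ has zero diagonal, $\sum_{u\in S-i}A(i,u)=\sum_{u\in S}A(i,u)$ whether or not $i\in S$, so $B_i^d(S)=\sum_{u\in S}A(i,u)$ in every case; moreover $A_{ii}^d(S)=A(i,i)=0$, which is at most the (non-negative) right-hand side, so the diagonal case $i=j$ is trivial. Thus the inequality of Step 1 becomes
\[
|S|\cdot A(i,j)\ \le\ \gamma\sum_{u\in S}\bigl(A(i,u)+A(j,u)\bigr).
\]

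\textbf{Step 3 (average the semi-metric inequality).} Then I would average over $S$: for each $u\in S$ the $\gamma$-semi-metric property gives $A(i,j)\le\gamma\,(A(i,u)+A(j,u))$; the only borderline instances are $u\in\{i,j\}$, where this reads $A(i,j)\le\gamma\,A(i,j)$ and holds exactly because $\gamma\ge 1$ (this is the sole place that hypothesis is used). Summing these $|S|$ inequalities over $u\in S$ gives the displayed bound, finishing the proof. There is no real obstacle here — the whole argument is bookkeeping. The one point worth stating carefully is that $B_i^d(S)$ admits the same closed form whether or not $i\in S$ (thanks to the zero diagonal of $A$), so no case split on set membership is needed; and the handling of the degenerate terms $u\in\{i,j\}$ is precisely what forces, and explains, the hypothesis $\gamma\ge1$.
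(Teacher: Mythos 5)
Your proof is correct and takes essentially the same route as the paper's: both reduce to the diversity part (the modular $g$ contributes $0$ to $A_{ij}$ and only nonnegative terms $g(i),g(j)$ to the right-hand side) and then average the $\gamma$-semi-metric inequality over $u\in S$, with the degenerate terms $u\in\{i,j\}$ absorbed using $\gamma\geq 1$. The only difference is organizational: the paper carries out an explicit case analysis on whether $i,j$ belong to $R$, whereas you avoid the case split by observing that the closed forms $A_{ij}^d(S)=A(i,j)$ and $B_i^d(S)=\sum_{u\in S}A(i,u)$ hold uniformly thanks to the zero diagonal of $A$.
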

\begin{proof}
We have $f(R)=\sum_{q\in R} g(q) + \sum_{\{q,q'\}\subseteq R} A(q,q')$. The proof goes by case analysis as follows.
\begin{itemize}
\item If $i,j\notin R$, we have
\begin{align*}
& |R|A_{ij}(R) = |R|(f(R+i+j)-f(R+i-j)-f(R-i+j)+f(R-i-j)) \\ & = |R|(\sum_{q\in R+i+j} g(q) + \sum_{\{q,q'\}\subseteq R+i+j} A(q,q') - \sum_{q\in R+i} g(q) - \sum_{\{q,q'\}\subseteq R+i} A(q,q') \\ & - \sum_{q\in R+j} g(q) - \sum_{\{q,q'\}\subseteq R+j} A(q,q') + \sum_{q\in R} g(q) + \sum_{\{q,q'\}\subseteq R} A(q,q')) \\ & = |R| A(i,j).
\end{align*}
We also have
\begin{align*}
\gamma(B_i(R)+B_j(R)) & = \gamma(f(R+i)-f(R-i)+f(R+j)-f(R-i)) \\ & = \gamma(\sum_{q\in R+i} g(q) + \sum_{\{q,q'\}\subseteq R+i} A(q,q') - \sum_{q\in R} g(q) - \sum_{\{q,q'\}\subseteq R} A(q,q') \\ & + \sum_{q\in R+j} g(q) + \sum_{\{q,q'\}\subseteq R+j} A(q,q') - \sum_{q\in R} g(q) - \sum_{\{q,q'\}\subseteq R} A(q,q')) \\ & = \gamma g(i)+\gamma g(j)+\gamma\sum_{q\in R} A(i,q) + \gamma\sum_{q\in R} A(j,q).
\end{align*}
Therefore $|R|A_{ij}(R)\leq \gamma(B_i(R)+B_j(R))$ because $g$ is non-negative and $A$ is a $\gamma$-semi-metric distance.
\item If $i,j\in R$, we have
\begin{align*}
|R|A_{ij}(R) & = |R|(f(R+i+j)-f(R+i-j)-f(R-i+j)+f(R-i-j)) \\ & = |R|(\sum_{q\in R} g(q) + \sum_{\{q,q'\}\subseteq R} A(q,q') - \sum_{q\in R-j} g(q) - \sum_{\{q,q'\}\subseteq R-j} A(q,q') \\ & - \sum_{q\in R-i} g(q) - \sum_{\{q,q'\}\subseteq R-i} A(q,q') + \sum_{q\in R-i-j} g(q) + \sum_{\{q,q'\}\subseteq R-i-j} A(q,q')) \\ & = |R| A(i,j).
\end{align*}
We also have
\begin{align*}
\gamma(B_i(R)+B_j(R)) & = \gamma(f(R+i)-f(R-i)+f(R+j)-f(R-i)) \\ & = \gamma(\sum_{q\in R} g(q) + \sum_{\{q,q'\}\subseteq R} A(q,q') - \sum_{q\in R-i} g(q) - \sum_{\{q,q'\}\subseteq R-i} A(q,q') \\ & + \sum_{q\in R} g(q) + \sum_{\{q,q'\}\subseteq R} A(q,q') - \sum_{q\in R-j} g(q) - \sum_{\{q,q'\}\subseteq R-j} A(q,q')) \\ & = \gamma g(i)+\gamma g(j)+2\gamma A(i,j) +\gamma\sum_{q\in R-i-j} A(i,q) + \gamma\sum_{q\in R-i-j} A(j,q).
\end{align*}
Therefore $|R|A_{ij}(R)\leq \gamma(B_i(R)+B_j(R))$ because $g$ is non-negative, $A$ is a $\gamma$-semi-metric distance, and $\gamma\geq 1$.
\item If $i\in R$ and $j\notin R$, we have
\begin{align*}
|R|A_{ij}(R) & = |R|(f(R+i+j)-f(R+i-j)-f(R-i+j)+f(R-i-j)) \\ & = |R|(\sum_{q\in R+j} g(q) + \sum_{\{q,q'\}\subseteq R+j} A(q,q') - \sum_{q\in R} g(q) - \sum_{\{q,q'\}\subseteq R} A(q,q') \\ & - \sum_{q\in R-i+j} g(q) - \sum_{\{q,q'\}\subseteq R-i+j} A(q,q') + \sum_{q\in R-i} g(q) + \sum_{\{q,q'\}\subseteq R-i} A(q,q')) \\ & = |R| A(i,j).
\end{align*}
We also have
\begin{align*}
\gamma(B_i(R)+B_j(R)) & = \gamma(f(R+i)-f(R-i)+f(R+j)-f(R-i)) \\ & = \gamma(\sum_{q\in R} g(q) + \sum_{\{q,q'\}\subseteq R} A(q,q') - \sum_{q\in R-i} g(q) - \sum_{\{q,q'\}\subseteq R-i} A(q,q') \\ & + \sum_{q\in R+j} g(q) + \sum_{\{q,q'\}\subseteq R+j} A(q,q') - \sum_{q\in R} g(q) - \sum_{\{q,q'\}\subseteq R} A(q,q')) \\ & = \gamma g(i)+\gamma g(j)+\gamma A(i,j) +\gamma\sum_{q\in R-i} A(i,q) + \gamma\sum_{q\in R-i} A(j,q).
\end{align*}
Therefore $|R|A_{ij}(R)\leq \gamma(B_i(R)+B_j(R))$ because $g$ is non-negative, $A$ is a $\gamma$-semi-metric distance, and $\gamma\geq 1$.
\end{itemize}
\end{proof}

\section{Appendix: One-Sided Smoothness and Meta-Submodularity}
\label{app:smoothness}

In this section we discuss the connection between meta-submodularity of a function and the smoothness of its multilinear extension. We show that the smoothness of the multilinear extension results in the meta-submodularity of the underlying set function.

\begin{repproposition}{prop:smoothisinmeta}
Let $f$ be a set function and $F$ be its multilinear extension. If $F$ is one-sided $(\gamma/2)$-smooth, then $f$ is $\gamma$-meta-submodular.
\end{repproposition}
\begin{proof}
Let non-empty $R\subseteq [n]$ and $i,j\in [n]$. The inequality from one-sided $(\gamma/2)$-smoothness for
$u=\mathbbm{1}_{\{i,j\}}$ and $x=\mathbbm{1}_R$ yields:
\[
\frac{1}{2} (2 u_i u_j \nabla^2 F_{ij}(x)) \leq \frac{\gamma}{2} \frac{u_i+u_j}{||x||_1} (u_i \nabla_i F(x) + u_j \nabla_j F(x))
\]

\noindent
Since $u_i=u_j=1$, $||x||_1=|R|$, $\nabla^2 F_{ij}(x)=A_{ij}(R)$, and $\nabla_i F(x) + \nabla_j F(x)=B_i(R)+B_j(R)$ we obtain
the $\gamma$-meta-submodular inequality.
\end{proof}

\subsection{Smoothness of Supermodular $\gamma$-Meta-Submodular Functions}

In this section we show that the multilinear extension of a supermodular $\gamma$-meta-submodular function is one-sided $O(\gamma)$-smooth. We do this by proving the expectation inequality for these functions and using Lemma~\ref{lemma:probabilisticVersion}.

\begin{lemma}
\label{lemma:supmetasmooth}
Let $f:2^{[n]} \to \R_+$ be a non-negative, monotone, supermodular, $\gamma$-meta-submodular set function. Let $x\in [0,1]^n\setminus\{\vec{0}\}$ and $R\subseteq [n]$ such that $1 \leq |R| < ||x||_1$. Then for all $i,j \in [n]$ we have
\[
(||x||_1 - |R|)A_{ij}(R)p_x(R)\leq 2 \gamma \sum_{e\in [n]\setminus R} (\frac{B_i(R+e)+B_j(R+e)}{|R|+1})p_x(R+e).
\]
Also, for the empty set,
\[
(||x||_1)A_{ij}(\emptyset)p_x(\emptyset)\leq 2(\gamma+1)\sum_{e\in [n]} (B_i(\{e\})+B_j(\{e\}))p_x(\{e\}).
\]
\end{lemma}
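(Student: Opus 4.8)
The plan is to prove both displayed inequalities pointwise in the set $R$; these pointwise bounds are the ingredients for the expectation inequality of Lemma~\ref{lemma:probabilisticVersion} (and hence Theorem~\ref{thm:supmetasmooth}), but that aggregation is a separate step. Throughout I may assume $i\neq j$, since $A_{ii}(\cdot)=0$ makes both left-hand sides zero while monotonicity keeps the right-hand sides nonnegative. The common device has three parts: (a) replace the factor $||x||_1-|R|$ (resp.\ $||x||_1$) by the larger quantity $\sum_{e\in[n]\setminus R}x_e$ (resp.\ $\sum_e x_e$), which is legitimate because $\sum_{v\in R}x_v\leq|R|$; (b) use the elementary weight identity $x_e\,p_x(R)=(1-x_e)\,p_x(R+e)\leq p_x(R+e)$ for $e\notin R$; and (c) bound $A_{ij}(R)$ in terms of the first-order marginals at the one-larger sets $R+e$. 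All of (a)--(c) preserve inequalities because supermodularity forces $A_{ij}(R)\geq0$ and monotonicity forces every $B$-term to be nonnegative.

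For the nonempty case, fix $R$ with $1\leq|R|<||x||_1$. I would bound $A_{ij}(R)$ for each $e\notin R$ by chaining three facts: $\gamma$-MS at the nonempty set $R$ gives $A_{ij}(R)\leq\gamma\,\frac{B_i(R)+B_j(R)}{|R|}$; supermodularity gives $A_{ie}(R)=B_i(R+e)-B_i(R)\geq0$ (and similarly for $j$), hence $B_i(R)+B_j(R)\leq B_i(R+e)+B_j(R+e)$; and $|R|\geq1$ gives $\frac{1}{|R|}\leq\frac{2}{|R|+1}$. Together these give $A_{ij}(R)\leq 2\gamma\,\frac{B_i(R+e)+B_j(R+e)}{|R|+1}$ for every $e\notin R$. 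Then, using (a), (b), and $A_{ij}(R)p_x\geq0$,
\[
(||x||_1-|R|)A_{ij}(R)p_x(R)\leq\sum_{e\notin R}x_eA_{ij}(R)p_x(R)=\sum_{e\notin R}(1-x_e)A_{ij}(R)p_x(R+e)\leq\sum_{e\notin R}A_{ij}(R)p_x(R+e),
\]
and substituting the bound on $A_{ij}(R)$ gives exactly the claimed inequality.

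The empty-set case is the main obstacle, because $\gamma$-MS is vacuous at $S=\emptyset$, so the argument above has no direct analogue and one must lift through the singletons. Here (b) with $R=\emptyset$ gives $||x||_1\,p_x(\emptyset)=\sum_{e}(1-x_e)p_x(\{e\})\leq\sum_e p_x(\{e\})$, so using $A_{ij}(\emptyset)\geq0$ it suffices to show $A_{ij}(\emptyset)\leq(\gamma+1)\bigl(B_i(\{e\})+B_j(\{e\})\bigr)$ for every $e\in[n]$ (this is even slightly stronger than the stated constant $2(\gamma+1)$), and then sum against $p_x(\{e\})\geq0$. For $e\notin\{i,j\}$ I would use the chain
\[
A_{ij}(\emptyset)=B_j(\{i\})-B_j(\emptyset)\leq B_j(\{i\})\leq B_j(\{i,e\})=B_j(\{e\})+A_{ij}(\{e\})\leq B_j(\{e\})+\gamma\bigl(B_i(\{e\})+B_j(\{e\})\bigr),
\]
whose steps use, in order, the definition of $A_{ij}(\emptyset)$ together with $B_j(\emptyset)\geq0$ (monotonicity); supermodularity ($A_{je}(\{i\})\geq0$, valid since $e\neq i$); the definition of $A_{ij}(\{e\})$ (valid since $i\neq e$); and $\gamma$-MS at the nonempty set $\{e\}$. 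The two degenerate values $e=i$ and $e=j$ are handled directly: for $e=i$, $A_{ij}(\emptyset)=B_j(\{i\})-B_j(\emptyset)\leq B_j(\{i\})\leq B_i(\{i\})+B_j(\{i\})$, and symmetrically for $e=j$. This establishes the termwise bound, hence the empty-set inequality, completing the proof.
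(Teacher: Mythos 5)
Your proposal is correct and takes essentially the same route as the paper's proof: the same reduction $\sum_{e\notin R}x_e \geq ||x||_1-|R|$ and weight identity $x_e\,p_x(R)=(1-x_e)\,p_x(R+e)$, with $\gamma$-meta-submodularity applied at $R$ (resp.\ at the singletons $\{e\}$) and supermodularity plus monotonicity used to lift the marginals to $R+e$ and to drop sign-controlled terms. Your empty-set chain $A_{ij}(\emptyset)\leq B_j(\{i\})\leq B_j(\{i,e\})=B_j(\{e\})+A_{ij}(\{e\})$ is just a more direct rearrangement of the paper's two expansions of $f(\{i,j,e\})$ via the discrete integral lemma (with the same separate treatment of $e\in\{i,j\}$), and like the paper you in fact obtain the stronger constant $\gamma+1$ in the second inequality.
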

\begin{proof}
Let $|R| = r$. 
Note that $r<n$ because $|R|=r<||x||_1$. Also, note that if $x_e=1$ for some $e\in [n]\setminus R$ then $p_x(R)=0$, which means that the left hand side is zero. In that case, the inequality holds because $f$ is monotone and the right hand side is non-negative. Hence, we assume that $x_e<1$ for all $e\in [n]\setminus R$. We know that 
\[
\sum_{e\in [n]} x_e = ||x||_1.
\]
Therefore, because each $x_e\leq 1$,
\[
\sum_{e\in [n]\setminus R} x_e = ||x||_1 - \sum_{e\in R} x_e \geq ||x||_1 - \sum_{e\in R} 1 = ||x||_1 - |R|.
\]
Hence, since $0< 1-x_e\leq 1$ for all $e \in [n]\setminus R$, we get
\begin{align*}
(||x||_1 - |R|)A_{ij}(R)p_x(R) & \leq \sum_{e\in [n]\setminus R} x_e A_{ij}(R)p_x(R)
\\ & \leq \sum_{e\in [n]\setminus R} \frac{x_e}{1-x_e} A_{ij}(R)p_x(R)
\\ & = \sum_{e\in [n]\setminus R} A_{ij}(R)p_x(R+e).
\end{align*}
\noindent
Moreover, $2|R|\geq |R|+1$ because $|R|\geq 1$, and we have
\[
\sum_{e\in [n]\setminus R} A_{ij}(R)p_x(R+e) \leq 2 \sum_{e\in [n]\setminus R} \frac{|R|A_{ij}(R)}{|R|+1}p_x(R+e).
\]
\noindent
Using the $\gamma$-meta-submodularity and supermodularity we have
\begin{align*}
2 \sum_{e\in [n]\setminus R} \frac{|R|A_{ij}(R)}{|R|+1}p_x(R+e) & \leq 
2 \gamma \sum_{e\in [n]\setminus R} \frac{B_i(R)+B_j(R)}{|R|+1}p_x(R+e)
\\ & \leq 2 \gamma \sum_{e\in [n]\setminus R} \frac{B_i(R+e)+B_j(R+e)}{|R|+1}p_x(R+e)
\end{align*}
\noindent
Combining all of these inequalities yields the first part of the lemma. 
\noindent
For the second part of the lemma, we consider the set $\{i,j,e\}$. By Lemma~\ref{lemma:discreteIntegral} and the $\gamma$-meta-submodularity, we have
\begin{align*}
f(\{i,j,e\}) & = B_i(\{j,e\})+B_j(\{e\})+f(\{e\})
\\ & = A_{ij}(\{e\})+ B_i(\{e\})+B_j(\{e\}) + f(\{e\})
\\ & \leq (\gamma+1) (B_i(\{e\})+B_j(\{e\})) + f(\{e\}).
\end{align*}
Also, by Lemma~\ref{lemma:discreteIntegral}, we have
\begin{align*}
f(\{i,j,e\}) & = B_i(\{j,e\})+B_j(\{e\})+f(\{e\})
\\ & = A_{ie}(\{j\})+ A_{ij}(\emptyset) + f(\{i\})+B_j(\{e\}) + f(\{e\}).
\end{align*}
Therefore
\[
A_{ie}(\{j\})+ A_{ij}(\emptyset) + f(\{i\})+B_j(\{e\}) + f(\{e\}) \leq (\gamma+1)(B_i(\{e\})+B_j(\{e\})) + f(\{e\}).
\]
Hence, because $f$ is non-negative, monotone and supermodular, it follows that
\begin{equation}
\label{ineq1}
A_{ij}(\emptyset)\leq A_{ie}(\{j\})+ A_{ij}(\emptyset) + f(\{i\})+B_j(\{e\}) \leq (\gamma+1)(B_i(\{e\})+B_j(\{e\})).
\end{equation}
\noindent
Moreover, because $f$ is non-negative and monotone, we have
\begin{align*}
A_{ij}(\emptyset) & =f(\{ i,j\})-f(\{i\})-f(\{j\})+f(\emptyset)= B_j(\{i\})-f(\{j\}) \\ & \leq B_j(\{i\}) + B_i(\{i\})\leq (\gamma+1)(B_j(\{i\}) + B_i(\{i\})),
\end{align*}
and
\begin{align*}
A_{ij}(\emptyset) & =f(\{ i,j\})-f(\{i\})-f(\{j\})+f(\emptyset)= B_i(\{j\})-f(\{i\}) \\ & \leq B_i(\{j\}) + B_j(\{j\})\leq (\gamma+1)(B_i(\{j\}) + B_j(\{j\})).
\end{align*}
If $x_e=1$ for an $e\in [n]$ then $p_x(\emptyset)=0$ and the inequality holds because the left hand side is zero and the right hand side is non-negative (since $f$ is monotone). Therefore, we assume that $x_e<1$ for all $e\in [n]$.
Combining the above inequalities, we have
\begin{align*}
(||x||_1)A_{ij}(\emptyset)p_x(\emptyset) & = \sum_{e\in [n]} x_e A_{ij}(\emptyset)p_x(\emptyset)
\\ & \leq \sum_{e\in [n]} \frac{x_e}{1-x_e} A_{ij}(\emptyset)p_x(\emptyset)
\\ & = \sum_{e\in [n]} A_{ij}(\emptyset)p_x(\{e\})
\\ & \leq (\gamma+1) \sum_{e\in [n]} (B_i(\{e\})+B_j(\{e\}))p_x(\{e\}),
\end{align*}
\noindent
where the last inequality follows from ($\ref{ineq1}$). This completes the proof.

\end{proof}

\begin{lemma}
\label{lemma:expIneq}
Let $f$ be a non-negative, monotone, supermodular, $\gamma$-meta-submodular set function and $F$ be its multilinear function. Then for any $x\in [0,1]^n\setminus\{\vec{0}\}$ and $i,j\in [n]$,
\[
||x||_1\nabla_{ij}^2 F(x)\leq (\max \{3\gamma,2\gamma+1 \})(\nabla_i F(x)+\nabla_j F(x)).
\]
\end{lemma}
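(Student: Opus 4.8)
The plan is to combine the two pointwise estimates of Lemma~\ref{lemma:supmetasmooth} with the defining inequality of $\gamma$-meta-submodularity. By Lemma~\ref{lemma:gradientbi} we have $||x||_1\nabla_{ij}^2 F(x)=\sum_{R\subseteq[n]}||x||_1\,A_{ij}(R)\,p_x(R)$, and in every summand I would write $||x||_1=|R|+(||x||_1-|R|)$ and treat the two pieces separately.

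The ``$|R|$'' piece is immediate: the term is $0$ for $R=\emptyset$, and for nonempty $R$ inequality~(\ref{eqn:meta}) gives $|R|\,A_{ij}(R)\le\gamma(B_i(R)+B_j(R))$, so after summing against $p_x$ (and using Lemma~\ref{lemma:gradientbi} again) this piece is at most $\gamma(\nabla_i F(x)+\nabla_j F(x))$. For the ``$||x||_1-|R|$'' piece I would split into three cases. When $|R|\ge||x||_1$ the factor $||x||_1-|R|$ is $\le 0$ while $A_{ij}(R)\ge 0$ by supermodularity, so those terms only help and can be dropped. When $1\le|R|<||x||_1$, apply the first inequality of Lemma~\ref{lemma:supmetasmooth}; when $R=\emptyset$, apply its empty-set inequality. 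Then reindex all of these by $S=R+e$: a fixed $S$ with $|S|=s\ge 2$ arises from exactly the $s$ pairs $(S\setminus\{e\},e)$, and each carries weight $\tfrac{2\gamma}{|R|+1}=\tfrac{2\gamma}{s}$ against $(B_i(S)+B_j(S))p_x(S)$, so together these $s$ contributions sum to precisely $2\gamma(B_i(S)+B_j(S))p_x(S)$ (restricting $s$ to the range permitted by the hypotheses only shrinks the nonnegative right-hand side). Hence the ``$||x||_1-|R|$'' piece is bounded by $(\gamma+1)\sum_{|S|=1}(B_i(S)+B_j(S))p_x(S)+2\gamma\sum_{|S|\ge2}(B_i(S)+B_j(S))p_x(S)$, where I use the sharper singleton constant $\gamma+1$ that the proof of Lemma~\ref{lemma:supmetasmooth} (through~(\ref{ineq1})) actually establishes.

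To finish, set $a_s:=\sum_{|S|=s}(B_i(S)+B_j(S))p_x(S)\ge 0$ (nonnegativity is just monotonicity of $f$), so that $\nabla_i F(x)+\nabla_j F(x)=\sum_{s\ge0}a_s$; adding the two pieces, the coefficient of $a_0$ is $\gamma$, of $a_1$ is $\gamma+(\gamma+1)=2\gamma+1$, and of each $a_s$ with $s\ge2$ is $\gamma+2\gamma=3\gamma$, giving
\[
||x||_1\nabla_{ij}^2 F(x)\ \le\ \max\{3\gamma,\,2\gamma+1\}\,\big(\nabla_i F(x)+\nabla_j F(x)\big).
\]
I expect the one delicate point to be the reindexing step: the $\tfrac{1}{|R|+1}$ weights of Lemma~\ref{lemma:supmetasmooth} must be matched exactly against the number of preimages of each $S$ so that the factor $2$ collapses to a clean $2\gamma$ per set size rather than accumulating, and one must remember both to invoke supermodularity when discarding the large-$|R|$ terms and to use the $\gamma+1$ (rather than $2(\gamma+1)$) constant for $R=\emptyset$ in order to land on $2\gamma+1$ instead of $3\gamma+2$.
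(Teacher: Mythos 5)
Your proposal is correct and follows essentially the same route as the paper's proof: the paper likewise writes $||x||_1=(||x||_1-|R|)+|R|$, applies Lemma~\ref{lemma:supmetasmooth} to the first piece (dropping the $|R|\geq ||x||_1$ terms via supermodularity and reindexing by $S=R+e$ so the $\tfrac{1}{|R|+1}$ weights collapse), applies $\gamma$-meta-submodularity to the second piece, and adds the two bounds. Your use of the sharper singleton constant $\gamma+1$ is exactly what the paper's own argument does, since it invokes the bound actually established in the proof of Lemma~\ref{lemma:supmetasmooth} (inequality~(\ref{ineq1})) rather than the $2(\gamma+1)$ appearing in its statement.
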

\begin{proof}
By using Lemma~\ref{lemma:supmetasmooth} for all the sets of size less than $||x||_1$, we can write
\begin{align}
\label{4grad}
 &(||x||_1)A_{ij}(\emptyset)p_x(\emptyset) + \sum_{\substack{R\subseteq [n] \\ 1\leq |R| < ||x||_1}} (||x||_1 - |R|)A_{ij}(R)p_x(R) \nonumber
 \\ & \leq (\gamma+1)\sum_{e\in [n]} (B_i(\{e\})+B_j(\{e\}))p_x(\{e\}) \\ & + 2\gamma \sum_{\substack{R\subseteq [n] \\ 1\leq |R| < ||x||_1}} \sum_{e\in [n]\setminus R} (\frac{B_i(R+e)+B_j(R+e)}{|R|+1})p_x(R+e) \nonumber
 \\ & = (\gamma+1)\sum_{e\in [n]} (B_i(\{e\})+B_j(\{e\}))p_x(\{e\}) + 2\gamma\sum_{\substack{R\subseteq [n] \\ 2\leq |R| < ||x||_1+1}}  (B_i(R)+B_j(R))p_x(R) \nonumber
 \\ & \leq \max \{\gamma+1,2\gamma \}  \sum_{R\subseteq [n]} (B_i(R)+B_j(R))p_x(R) = \max \{\gamma+1,2\gamma \} (\nabla_i F(x)+\nabla_j F(x)),
\end{align}
where the equality follows from a simple counting argument, and in the last inequality we used the monotonicity of $f$ (i.e., the $B_i$'s are non-negative). 

\noindent
By $\gamma$-meta-submodularity, we also have that
\begin{align}
\label{ineq2}
 & \sum_{\substack{R\subseteq [n] \\ 1\leq |R| < ||x||_1}} |R|A_{ij}(R)p_x(R) + \sum_{\substack{R\subseteq [n] \\ |R| \geq ||x||_1}} (||x||_1) A_{ij}(R)p_x(R) \nonumber
 \\ & \leq \sum_{|R| \geq 1} |R|A_{ij}(R)p_x(R)
 \leq \sum_{|R| \geq 1} \gamma (B_i(R)+B_j(R)) p_x(R) \nonumber
 \\ &  \leq \sum_{R \subseteq [n]} \gamma (B_i(R)+B_j(R)) p_x(R) = \gamma(\nabla_i F(x)+\nabla_j F(x)).
\end{align}
By adding (\ref{4grad}) and (\ref{ineq2}), we conclude that
\[
||x||_1\sum_{R\subseteq [n]} A_{ij}(R)p_x(R) = ||x||_1 \nabla_{ij}^2 F(x) \leq \max \{2\gamma+1,3\gamma \} (\nabla_i F(x) + \nabla_j F(x)).
\]
\end{proof}
\section{Runtime of the Local Search Algorithm for Meta-Submodular Functions}
\label{app:subseclocaltime}

In this section, we analyze the runtime of the local search algorithm that finds an approximate local optima.

\begin{lemma}
Let $f$ be a non-negative, monotone, $\gamma$-meta-submodular function and $\M=([n],\mathcal{I})$ be a matroid of rank $r$. Let $A\in \mathcal{I}$ be an optimum set, i.e.,
\[
A \in \argmax_{R\in \mathcal{I}} f(R),
\]
and
\[
S_0 \in \argmax_{\{v,v'\}\in\mathcal {I}} f(\{v,v'\}).
\]
Then $f(A)\leq O(r(\gamma+1)^{r-2}) f(S_0)$.
\label{lemma:induction}
\end{lemma}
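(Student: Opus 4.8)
The plan is to prove the bound by a single strong induction on the size of the independent set, tracking a recursively defined constant $\Gamma_k$. First I would reduce to the case where $A$ is a base of size $r$ (by monotonicity, extending $A$ to a base only increases $f$), and assume $r\ge 2$, since otherwise the statement is vacuous (there are no independent pairs, so $S_0$ is undefined). The inductive claim I would establish is: for every $k$ with $2\le k\le r$ and every independent set $C$ with $|C|=k$, one has $f(C)\le \Gamma_k\, f(S_0)$, where $\Gamma_2:=1$ and $\Gamma_k:=\Gamma_{k-1}+1+2\gamma\sum_{j=2}^{k-1}\frac{\Gamma_j}{j-1}$ for $k\ge 3$. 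The base case $k=2$ is immediate, since $S_0$ maximizes $f$ over independent pairs, so $f(C)\le f(S_0)=\Gamma_2 f(S_0)$.

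For the inductive step ($k\ge 3$), fix an independent $C$ of size $k$, write its elements as $c_1,\dots,c_k$ in an arbitrary order, and set $C_m:=\{c_1,\dots,c_m\}$. Then $f(C)=f(C_{k-1})+B_{c_k}(C_{k-1})$, and $f(C_{k-1})\le\Gamma_{k-1}f(S_0)$ by the induction hypothesis, since $C_{k-1}\subseteq C$ is independent of size $k-1$. To bound the marginal gain I would apply the discrete-integral formula (Lemma~\ref{lemma:discreteIntegral}) with $i=c_k$ and $R=C_{k-1}$, obtaining $B_{c_k}(C_{k-1})=f(\{c_k\})+\sum_{j=1}^{k-1}A_{c_k c_j}(C_{j-1})$. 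The $j=1$ term combines with $f(\{c_k\})$ to $f(\{c_k,c_1\})-f(\{c_1\})+f(\emptyset)$, which is at most $f(\{c_k,c_1\})\le f(S_0)$ by monotonicity and the optimality of $S_0$. For $j\ge 2$ the set $C_{j-1}$ is non-empty, so $\gamma$-meta-submodularity (\ref{eqn:meta}) gives $A_{c_k c_j}(C_{j-1})\le \frac{\gamma}{j-1}\big(B_{c_k}(C_{j-1})+B_{c_j}(C_{j-1})\big)$. The key point is that both $B_{c_j}(C_{j-1})=f(C_j)-f(C_{j-1})\le f(C_j)$ and $B_{c_k}(C_{j-1})=f(C_{j-1}+c_k)-f(C_{j-1})\le f(C_{j-1}+c_k)$ are at most $f$ of an independent subset of $C$ of size $j\le k-1$, hence at most $\Gamma_j f(S_0)$ by the induction hypothesis; this yields $A_{c_k c_j}(C_{j-1})\le \frac{2\gamma\Gamma_j}{j-1}f(S_0)$. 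Summing over $j$ and adding the $f(C_{k-1})$ bound gives exactly $f(C)\le \Gamma_k f(S_0)$.

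It then remains to solve the recursion, and I would show $\Gamma_k\le (k-1)(\gamma+1)^{k-2}$ for all $k\ge 2$ by a second induction. Assuming $\Gamma_j\le(j-1)(\gamma+1)^{j-2}$ for $j\le k-1$, the weight $\frac{1}{j-1}$ cancels the factor $(j-1)$, so $2\gamma\sum_{j=2}^{k-1}\frac{\Gamma_j}{j-1}\le 2\gamma\sum_{i=0}^{k-3}(\gamma+1)^i=2(\gamma+1)^{k-2}-2$ for $\gamma>0$ (the case $\gamma=0$ is handled directly and gives $\Gamma_k=k-1$). Hence $\Gamma_k\le (k-2)(\gamma+1)^{k-3}+2(\gamma+1)^{k-2}-1$, and the claim reduces to the elementary inequality $(k-2)(\gamma+1)^{k-3}-1\le (k-3)(\gamma+1)^{k-2}$, valid for all $\gamma\ge 0$ and $k\ge 3$ (for instance, divide by $(\gamma+1)^{k-3}$ and note $k-2\le(k-3)(\gamma+1)+(\gamma+1)^{-(k-3)}$). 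Taking $k=r$ then yields $f(A)\le \Gamma_r f(S_0)\le (r-1)(\gamma+1)^{r-2}f(S_0)=O\big(r(\gamma+1)^{r-2}\big)f(S_0)$.

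I expect the main subtleties to be twofold: (i) verifying that every first-order term produced by the discrete-integral expansion of $B_{c_k}(C_{k-1})$ is $f$ evaluated at (or bounded above by $f$ at) an independent set of size strictly less than $k$ — so that the strong induction hypothesis genuinely applies — and in particular that $C_{j-1}+c_k\subseteq C$ is independent; and (ii) carrying out the recursion bookkeeping so that the base of the exponential comes out as $\gamma+1$ rather than $2\gamma+1$, which relies precisely on pairing each $\frac{\gamma}{j-1}$ weight against the bound $\Gamma_j\le(j-1)(\gamma+1)^{j-2}$ before summing the geometric series.
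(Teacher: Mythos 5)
Your proof is correct and reaches the same $O(r(\gamma+1)^{r-2})$ bound, but it organizes the induction differently from the paper. The paper fixes an ordering $A=\{a_1,\dots,a_r\}$ of the optimum and inducts directly on the marginal gains $B_{a_j}(A_i)$ with respect to the prefixes $A_i$, using only the one-step identity $B_{a_j}(A_{i+1})=B_{a_j}(A_i)+A_{a_{i+1}a_j}(A_i)$ together with $\gamma$-meta-submodularity, which multiplies the bound by $1+\tfrac{2\gamma}{i}\le 1+\gamma$ per step; summing $f(A)=f(A_2)+\sum_{i= 3}^{r}B_{a_i}(A_{i-1})$ then gives the claim with no recursion to solve. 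You instead run a strong induction on the value $f(C)$ over all independent sets $C$, expand the last marginal through the full discrete integral (Lemma~\ref{lemma:discreteIntegral}), bound each resulting first-order term by $f$ of a smaller independent subset of $C$ (your independence concerns in (i) are settled simply by the hereditary property, since $C_j$ and $C_{j-1}+c_k$ are subsets of $C$), and then solve the recursion $\Gamma_k=\Gamma_{k-1}+1+2\gamma\sum_{j=2}^{k-1}\Gamma_j/(j-1)$ to $(k-1)(\gamma+1)^{k-2}$; your closing inequality does hold (for $k\ge 4$ it is Bernoulli's inequality $(\gamma+1)^{-(k-3)}\ge 1-(k-3)\gamma$, and $k=3$ is equality). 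Both arguments exploit the same mechanism --- the $1/|S|$ factor in the $\gamma$-MS inequality played against the prefix size, plus the pair-optimality of $S_0$ --- with the paper's version shorter because it tracks marginal gains rather than function values, and yours more explicit about the constants; note that, like the paper, you inherit the normalization $f(\emptyset)=0$ implicit in Lemma~\ref{lemma:discreteIntegral}, though your monotonicity bound on the $j=1$ term would absorb a nonzero $f(\emptyset)$ anyway.
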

\begin{proof}
Let $A=\{a_1,\ldots, a_{r}\}$ and $A_i = \{a_1,\ldots,a_i\}$ for $1\leq i \leq r$. By definition of $S_0$ we know that $f(A_2)\leq f(S_0)$. Now by induction we show that for any $2\leq i<j\leq n$, $B_{a_j}(A_i)\leq O((\gamma+1)^{i-1})f(S_0)$. The base case is $i=2$. By definition of $f(S_0)$, monotonicity and meta submodularity of $f$, we have
\begin{align*}
B_{a_j}(A_2) & = B_{a_j}(A_1)+A_{a_2 a_j}(A_1) \leq B_{a_j}(A_1)+\gamma(B_{a_j}(A_1)+B_{a_2}(A_1))\leq (2\gamma+1)f(S_0) \\ & \leq O(\gamma+1)f(S_0).
\end{align*}
Now assume that for $k<j\leq n$, we have $B_{a_j}(A_k)\leq O(\gamma^{k-1})f(S_0)$. We want to show that for $k+1<j\leq n$, we have $B_{a_j}(A_{k+1})\leq O(\gamma^{k})f(S_0)$.
\begin{align*}
B_{a_j}(A_{k+1}) & = B_{a_j}(A_{k})+A_{a_{k+1} a_j}(A_k) \leq B_{a_j}(A_{k})+\frac{\gamma}{k}(B_{a_{k+1}}(A_{k})+B_{a_j}(A_{k})) \\ & \leq (1+\frac{2\gamma}{k}) O((\gamma+1)^{k-1}) f(s_0) \leq O((\gamma+1)^k) f(S_0).
\end{align*}
We know that 
\begin{align*}
f(A)=f(A_2)+\sum_{i=3}^{r}B_{a_i}(A_{i-1}) \leq f(S_0) + \sum_{i=3}^{r} O((\gamma+1)^{i-2})f(S_0) \leq O(r(\gamma+1)^{r-2}) f(S_0)
\end{align*}
\end{proof}

\begin{proposition}\label{prop:localsearchruntime}
Local search algorithm (Algorithm~\ref{alg:localsearch}) runs in $O(n^4(\log(r)+r\log(\gamma+1)/\epsilon)$ time on a $\gamma$-meta submodular functions and a matorid of rank $r$.
\end{proposition}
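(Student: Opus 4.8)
The statement is a standard potential-function running-time analysis for local search, and the only non-routine ingredient is the a priori ratio bound of Lemma~\ref{lemma:induction}. The plan is to bound the number of improving swaps performed in the while-loop (steps 3--6 of Algorithm~\ref{alg:localsearch}) and then multiply by the per-iteration cost. If $f$ is identically zero on the independent sets the algorithm does nothing, so assume $f(S_0)>0$; also the case $r\le 1$ is trivial, so assume $r\ge 2$.

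First I would set up the potential argument. Let $S^{(0)}$ be the base chosen in step 2 (so $S_0\subseteq S^{(0)}$ and, by monotonicity, $f(S^{(0)})\ge f(S_0)$), and let $A\in\argmax_{R\in\mathcal{I}}f(R)$. Each execution of step 5 replaces $S$ by a set $S-i+j$ with $f(S-i+j)\ge(1+\tfrac{\epsilon}{n^2})f(S)$, and $S$ remains a base of $\M$ throughout, so $f(S)\le f(A)$ at all times. Hence if the loop performs $K$ swaps, then $(1+\tfrac{\epsilon}{n^2})^K f(S_0)\le (1+\tfrac{\epsilon}{n^2})^K f(S^{(0)})\le f(A)$, i.e. $(1+\tfrac{\epsilon}{n^2})^K\le f(A)/f(S_0)$. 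Now apply Lemma~\ref{lemma:induction}, which gives $f(A)\le O\!\big(r(\gamma+1)^{r-2}\big)f(S_0)$, so $(1+\tfrac{\epsilon}{n^2})^K\le O\!\big(r(\gamma+1)^{r-2}\big)$. Taking natural logarithms and using $\ln(1+t)\ge t/2$ for $t=\epsilon/n^2\in(0,1]$ (we may assume $\epsilon\le n^2$; larger $\epsilon$ only makes the loop terminate faster), we obtain $K\le \tfrac{2n^2}{\epsilon}\,\ln O\!\big(r(\gamma+1)^{r-2}\big)=O\!\big(\tfrac{n^2}{\epsilon}(\log r+r\log(\gamma+1))\big)$.

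Next I would account for the per-iteration and initialization costs in the oracle model. Each iteration of the while-loop scans the at most $rn\le n^2$ candidate pairs $(i,j)$ with $i\in S$, $j\in[n]\setminus S$; for each pair it performs one independence test $S-i+j\in\mathcal{I}$ and two value queries to compare $f(S-i+j)$ with $(1+\tfrac{\epsilon}{n^2})f(S)$, so an iteration costs $O(n^2)$ oracle operations. Step 1 costs $O(n^2)$ value queries and step 2 costs $O(n)$ independence queries via the matroid greedy algorithm. Multiplying, the total cost is $O(n^2)+O(n^2)\cdot K=O\!\big(\tfrac{n^4}{\epsilon}(\log r+r\log(\gamma+1))\big)$, which is the claimed bound $O\!\big(n^4(\log r + r\log(\gamma+1)/\epsilon)\big)$ after grouping the $1/\epsilon$ factor.

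\textbf{Main obstacle.} Everything above is bookkeeping except the use of Lemma~\ref{lemma:induction}: without an explicit upper bound on the ratio $f(A)/f(S_0)$, a local-search loop with a multiplicative $(1+\tfrac{\epsilon}{n^2})$ improvement criterion has no finite iteration bound in general. The content is precisely that the clever initialization $S_0=\argmax_{\{v,v'\}\in\mathcal{I}}f(\{v,v'\})$ controls this ratio by $O(r(\gamma+1)^{r-2})$ for $\gamma$-meta-submodular $f$; its exponential-in-$r$ form is harmless here because it sits inside a logarithm, contributing only the $r\log(\gamma+1)$ term. (If one instead wanted to also bound the running time of the final matching step~7, that is separate and already handled by the $O(n^2(r+\log n))$ bound for the Hungarian algorithm stated earlier, but it does not affect the asymptotics of the local-search phase.)
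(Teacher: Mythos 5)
Your proposal is correct and follows essentially the same route as the paper: bound the number of improving swaps by combining the $(1+\epsilon/n^2)$ multiplicative progress with the a priori ratio $f(A)/f(S_0)\le O(r(\gamma+1)^{r-2})$ from Lemma~\ref{lemma:induction}, then multiply by the $O(n^2)$ per-iteration cost. The only differences are cosmetic (you use $\ln(1+t)\ge t/2$ where the paper uses $\frac{x-1}{x}\le\ln x$, and you additionally note that the matching step does not affect the asymptotics).
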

\begin{proof}
Cost of finding $S_0$ is $O(n^2)$. Also, each iteration of the while loop costs $O(n^2)$. Let $S_k$ be the solution after $k$ iterations and $A$ be an optimum solution. By Lemma~\ref{lemma:induction}, we know
\[
f(S_k)\leq (1+\frac{\epsilon}{n^2})^k f(S_0)\leq f(A)\leq O(r(\gamma+1)^{r-2})f(S_0).
\]
Taking the logarithm, we have
\[
k \ln (1+\frac{\epsilon}{n^2}) \leq O(\ln(r)+(r-2) \ln(\gamma+1)).
\]
Noting that $\frac{x-1}{x}\leq \ln x$ for any $x>0$, we have
\[
k(\frac{\epsilon}{n^2})/(\frac{n^2+\epsilon}{n^2})\leq O(\ln(r)+(r-2) \ln(\gamma+1)).
\]
This yields the result.
\end{proof}

\end{document}